\DeclareMathOperator{\opt}{\mathsf{opt}}
\newcommand{\kk}{\texorpdfstring{$k$}{k}}
\subjclass{F.2.2 Nonnumerical Algorithms and Problems}
\keywords{Clustering, $k$-center, Constraints, Privacy, Lower Bounds, Fairness}
\begin{document}

\title{Privacy preserving clustering with constraints}
\author[1]{Clemens Rösner}
\author[1]{Melanie Schmidt}
\affil[1]{Department of Theoretical Computer Science, University of Bonn, Germany\\
\texttt{roesner@cs.uni-bonn.de}, \texttt{melanieschmidt@uni-bonn.de}}
\date{\large Draft, November 2017\vspace*{-0.5cm}}

\maketitle 

\begin{abstract}
The $k$-center problem is a classical combinatorial optimization problem which asks to find $k$ centers such that the maximum distance of any input point in a set $P$ to its assigned center is minimized. The problem allows for elegant $2$-approximations. However, the situation becomes significantly more difficult when constraints are added to the problem. 
We raise the question whether general methods can be derived to turn an approximation algorithm for a clustering problem with some constraints into an approximation algorithm that respects one constraint more. 
Our constraint of choice is privacy: Here, we are asked to only open a center when at least $\ell$ clients will be assigned to it. We show how to combine privacy with several other constraints.

\end{abstract}

\section{Introduction}
Clustering is a fundamental unsupervised learning task: Given a set of objects, partition them into clusters, such that objects in the same cluster are well matched, while different clusters have something that clearly differentiates them. The three classical clustering objectives studied in combinatorial optimization are \emph{$k$-center}, \emph{$k$-median} and \emph{facility location}. Given a point set $P$, $k$-center and $k$-median ask for a set of $k$ centers and an assignment of the points in $P$ to the selected centers that minimize an objective. For $k$-center, the objective is the maximum distance of any point to its assigned center. For $k$-median, it is the sum of the distances of all points to their assigned center (this is called connection cost). Facility location does not restrict the number of centers. Instead, every center (here called facility) has an opening cost. The goal is to find a set of centers such that the connection cost plus the opening cost of all chosen facilities is minimized.
In the unconstrained versions each point will be assigned to its closest center. With the addition of constraints a different assignment is often necessary in order to satisfy the constraints.

A lot of research has been devoted to developing approximation algorithms for these three. The earliest success story is that of $k$-center: Gonzalez~\cite{G85} as well as Hochbaum and Shmoys~\cite{HS86} gave a $2$-approximation algorithm for the problem, while Hsu and Nemhauser~\cite{HN79} showed that finding a better approximation is NP-hard.

Since then, much effort has been made to approximate the other two objectives. Typically, facility location will be first, and transferring new techniques to $k$-median poses additional challenges.
Significant techniques developed during the cause of many decades are LP rounding techniques~\cite{CGTS02,STA97}, greedy and primal dual methods~\cite{JMS02,JV01}, local search algorithms~\cite{AGKMMP04,KPR00}, and, more recently, the use of pseudo-approximation~\cite{LS16}. The currently best approximation ratio for facility location is 1.488~\cite{L13}, while the best lower bound is 1.463~\cite{GK99}. For $k$-median, the currently best approximation algorithm achieves a ratio of 2.675+$\epsilon$~\cite{BPRST17}, while the best lower bound is~$1+\frac{2}{e}\approx 1.736$~\cite{JMS02}.

While the basic approximability of the objectives is well studied, a lot less is known once constraints are added to the picture. Constraints come naturally with many applications of clustering, and since machine learning and unsupervised learning methods become more and more popular, there is an increasing interest in this research topic. It is one of the troubles with approximation algorithms that they are often less easy to adapt to a different scenario than some easy heuristic for the problem, which was easier to understand and implement in the first place. Indeed, it turns out that adding constraints to clustering often requires fundamentally different techniques for the design of approximation algorithms and is a very new challenge altogether. 

A good example for this is the \emph{capacity} constraint: Each center $c$ is now equipped with a capacity $u(c)$, and can only serve $u(c)$ points. This natural constraint is notoriously difficult to cope with; indeed, the standard LP formulations for the problems have an unbounded integrality gap. Local search provides a way out for facility location, leading to $3$- and $5$-approximations for uniform~\cite{ALBGGGJ13} and non-uniform capacities~\cite{BGG12}, and preprocessing together with involved rounding proved sufficient for $k$-center to obtain a $9$-approximation~\cite{CHK12,ABCGMS15}. However, the choice of techniques that turned out to work for capacitated clustering problems is still very limited, and indeed \emph{no} constant factor approximation is known to date for $k$-median.

And all the while, new constraints for clustering problems are proposed and studied. In \emph{private} clustering~\cite{APFTKKZ10}, we demand a lower bound on the number of points assigned to a center to ensure a certain anonymity. The more general form where each cluster has an individual lower bound is called clustering \emph{with lower bounds}~\cite{AS16}. \emph{Fair} clustering~\cite{CKLV17} assumes that points have a protected feature (like gender), modeled by a color, and that we want clusters to be fair in the sense that the ratios between points of different colors is the same for every cluster. Clustering \emph{with outliers}~\cite{CKMN01} assumes that our data contains measurement errors and searches for a solution where a prespecified number of points may be excluded from the cost computation. Other constraints include fault tolerance~\cite{KPS00}, matroid or knapsack constraints~\cite{CLLW16}, must-link and cannot-link constraints~\cite{WCRS01}, diversity~\cite{LYZ10} and chromatic clustering constraints~\cite{DX11,DX15}.

The abundance of constraints and the difficulty to adjust methods for all of them individually asks for ways to \emph{add} a constraint to an approximation algorithm in an oblivious way. Instead of adjusting and reproving known algorithms, we would much rather like to take an algorithm as a black box and ensure that the solution satisfies one more constraint in addition. 
This is a challenging request. We start the investigation of such add-on algorithms by studying private clustering in more detail. 
Indeed, we develop a method to add the privacy constraint to approximation algorithms for constraint $k$-center problems. That means that we use an approximation algorithm as a subroutine and ensure that the final solution will additionally respect a given lower bound. The method has to be adjusted depending on the constraint, but it is oblivious to the underlying approximation algorithm used for that constraint. 

This works for the basic $k$-center problem (giving an algorithm for the private $k$-center problem), but we also show how to use the method when the underlying approximation algorithm is for $k$-center with outliers, fair $k$-center, capacitated $k$-center and fair capacitated $k$-center. We also demonstrate that our method suffices to approximate \emph{strongly private} $k$-center, where we assume a protected feature like in fair clustering, but instead of fairness, now demand that a minimum number of points of each color is assigned to each open center to ensure anonymity for each class individually. 

\subparagraph*{Our Technique}

The general structure of the algorithm is based on standard thresholding~\cite{HS86}, i.e., the algorithm tests all possible thresholds and chooses the smallest for which it finds a feasible solution. For each threshold, it starts with the underlying algorithm and computes a non private solution. Then it builds a suitable network to shift points to satisfy the lower bounds. The approximation ratio of the method depends on the underlying algorithm and on the structure of this network.

The shifting does not necessarily work right away. If it does not produce a feasible solution, then using the max flow min cut theorem, we obtain a set of points for which we can show that the clustering uses too many clusters (and can thus not satisfy the lower bounds). The algorithm then recomputes the solution in this part. Depending on the objective function, we have to overcome different hurdles to ensure that the recomputation works in the sense that it a) makes sufficient progress towards finding a feasible solution and b) does not increase the approximation factor. The process is then iterated until we find a feasible solution. 

\subparagraph*{Results}

We obtain the following results for multiple combinations of privacy with other constraints. 
Note that our definition of $k$-center (see Section~\ref{sec:prelim}) distinguishes between the set of points $P$ and the set of possible center locations $L$. This general case is also called the \emph{$k$-supplier problem}, while classical $k$-center often assumes that $P=L$. Our reductions can handle the general case; whether the resulting algorithm is then for $k$-center or $k$-supplier thus depends on the evoked underlying algorithm. 

\begin{itemize}
\item We obtain a $4$-approximation for private $k$-center with outliers ($5$ for the supplier version). This matches the best known bounds \cite{APFTKKZ10} (\cite{AS16} for the supplier version (this also holds for non-uniform lower bounds)).
\item We compute an $11$-approximation for private capacitated $k$-center (i.e., centers have a lower bound \emph{and} an upper bound), and a $8$-approximation for private uniform capacitated $k$-center (where the upper bounds are uniform, as well). The best known bounds for these two problems are $9$ and $6$~\cite{DHHL17}.
For the supplier version we obtain a $13$-approximation which matches the best known bound~\cite{DHHL17} (for uniform upper bounds a $9$-approximation-algorithm is known~\cite{DHHL17}).
\item We achieve constant factor approximations for private fair capacitated/uncapacitated $k$-center/$k$-supplier clustering. The approximation factor depends on the \emph{balance} of the input point set and the type of upper bounds, it ranges between $10$ in the uncapacitated case where for each color $c$ the number of points with color $c$ is an integer multiple of the number of points with the rarest color and $325$ in the general supplier version with non-uniform upper bounds. To the best of our knowledge, all these combinations have not been studied before.
\item Along the way, we propose constant factor algorithms for general cases of fair clustering. While~\cite{CKLV17} introduces a pretty general model of fairness, it only derives approximation algorithms for inputs with two colors and a balance of $1/t$ for an integer $t$. We achieve ratios of $14$ and $15$ for the general fair $k$-center and supplier problem, respectively. 
\item Finally, we propose the \emph{strongly private $k$-center problem}. As in the fair clustering problem, the input here has a protected feature like gender, modeled by colors. Now instead of a fair clustering, we aim for anonymity for each color, meaning that we have a lower bound for each color. Each open center needs to be assigned this minimum number of points for each color. To the best of our knowledge, this problem has not been studied before; we obtain a $4$-approximation as well as a $5$-approximation for the supplier version.
\end{itemize}

Since our method does not require knowledge of the underlying approximation algorithm, the approximation guarantees improve if better approximation algorithms for the underlying problems are found.
There is also hope that our method could be used for new, not yet studied constraints, with not too much adjustment. 

\subparagraph*{Related Work}

\begin{table}\centering
\begin{tabular}{lllllll}
& \multirow{2}[3]{*}{Vanilla} & \multicolumn{2}{c}{Capacities} & \multirow{2}[3]{*}{Outlier} & \multicolumn{2}{c}{Fair Subset Partition}\\
\cmidrule(lr){3-4}\cmidrule(lr){6-7}
&&  uniform & non-uniform & & $\frac{r}{b}\in\mathbb{N}$&general\\
\midrule
$k$-center & 2~\cite{HS86} & 6~\cite{KS00}& 9~\cite{ABCGMS15} & 2~\cite{CGK16} &\multirow{2}[3]{*}{2~\cite{CKLV17}} & \multirow{2}[3]{*}{12~(Thm.~\ref{thm:7_approx_fair_subset})}\\[0.1cm]
$k$-supplier & 3~\cite{HS86} &  & 11~\cite{ABCGMS15} & 3~\cite{CKMN01}\\[0.5cm]
\end{tabular}
\caption{An overview on the approximation results that we combine with privacy.\label{table:resultsweuse}}
\end{table}

The unconstrained $k$-center problem can be $2$-approximated~\cite{G85,HS86}, and it is NP-hard to approximate it better~\cite{HN79}. The $k$-supplier problem can be $3$-approximated~\cite{HS86}, and this is also tight. 

Capacitated $k$-center was first approximated with uniform upper bounds~\cite{BKP93,KS00}.
Two decades after the first algorithms for the uniform case,~\cite{CHK12} provided the first constant factor approximation for non-uniform capacities. The algorithm was improved and also applied to the $k$-supplier problem in~\cite{ABCGMS15}. In contrast to upper bounds (capacities), lower bounds are less studied. The private $k$-center problem is introduced and $2$-approximated in~\cite{APFTKKZ10}, and non-uniform lower bounds are studied in~\cite{AS16}. 
The $k$-center/$k$-supplier problem with outliers is $3$-approximated in~\cite{CKMN01} alongside approximations to other robust variants of the $k$-center problem. The approximation factor for the $k$-center problem with outliers was improved to $2$ in~\cite{CGK16}.

The fair $k$-center problem was introduced in~\cite{CKLV17}. The paper describes how to approximate the problem by using an approximation for a subproblem that we call fair subset partition problem. Algorithms for this subproblem are derived for two special cases where the number of colors is two, and the points are either perfectly balanced or the number of points of one color is an integer multiple of the number of points of the other color. 

These are the constraints for which we make use of known results. We state the best known bounds and their references in Table~\ref{table:resultsweuse}. Approximation algorithms are also e.g. known for fault tolerant $k$-center~\cite{KPS00} and $k$-center with matroid or knapsack constraints~\cite{CLLW16}.

Relatively little is known about the combination of constraints. Cygan and Kociumaka~\cite{CK14} give a 25-approximation for the capacitated $k$-center problem with outliers. Aggarwal et. al~\cite{APFTKKZ10} give a $4$-approximation for the private $k$-center problem with outliers. Ahmadian and Swamy~\cite{AS16} consider the combination of $k$-supplier with outliers with (non-uniform) lower bounds and derive a $5$-approximation. The paper also studies the $k$-supplier problem with outliers (without lower bounds), and the min-sum-of-radii problem with lower bounds and outliers. Their algorithms are based on the Lagrangian multiplier preserving primal dual method due to Jain and Vazirani~\cite{JV01}.

Ding et. al~\cite{DHHL17} study the combination of capacities and lower bounds as well as capacities, lower bounds and outliers by generalizing the LP algorithms from~\cite{ABCGMS15} and ~\cite{CK14} to handle lower bounds. They give results for several variations, including a $6$-approximation for private capacitated $k$-center and a $9$-approximation for private capacitated $k$-supplier. 

Friggstad, Rezapour, Salavatipour~\cite{FRS16} consider the combination of uniform capacities and non-uniform lower bounds for \emph{facility location} and obtain bicriteria approximations.

\subparagraph*{Outline}

In Section~\ref{sec:prelim}, we introduce necessary notation. Section~\ref{sec:outlier_privacy} then presents our method, applied to the private $k$-center problem with outliers. We choose the outlier version since it is non-trivial but still intuitive and does thus give a good impression on the application of our method. In Section~\ref{sec:combinations}, we then adjust the method to approximate private and fair $k$-center, private and capacitated $k$-center, and $k$-center with all three constraints. In Section~\ref{sec:multiple_lower_bounds}, we consider the strongly private $k$-center problem. We conclude the paper with Section~\ref{sec:outlook} by some remarks on private facility location.

\section{Preliminaries}\label{sec:prelim}

Let $(X,d)$ be a finite metric space, i.e., $X$ is a finite set and $d : X \times X \to \mathbb{R}_{\ge 0}$ is a metric. We use $d(x,T) = \min_{y \in T} d(x,y)$ for the smallest distance between $x \in X$ and a set $T \subseteq X$. For two sets $S, T \subseteq X$, we use $d(S,T) = \min_{x \in S, y \in T} d(x,y)$ for the smallest distance between any pair $x \in S, y \in T$.

Let $P \subseteq X$ be a subset of $X$ called \emph{points} and let $L \subseteq X$ be a subset of $X$ called \emph{locations}. 
An instance of a private \emph{assignment constrained} $k$-center problem  consists of $P$, $L$, an integer $k\in \mathbb{N}$, a lower bound $\ell\in\mathbb{N}$ and possibly more parameters. Given the input, the problem is to compute a set of centers $C \subseteq L$ with $|C| \leq k$ and an \emph{assignment} $\phi: P \rightarrow C$ of the points to the selected centers that satisfies $\ell \le |\phi^{-1}(c)|$ for every selected center $c \in C$, and some specific \emph{assignment restriction}. The solution $C, \phi$ shall be chosen such that
\[
\max_{x \in P} d(x,\phi(x))
\]
is minimized. Different assignment restrictions lead to different constrained private $k$-center problems. The \emph{capacity} assignment restriction comes with an upper bound function $u : L \to \mathbb{N}$ for which we require $\ell \le u(x)$ for all $x \in L$, and then demands $|\phi^{-1}(c)|\le u(c)$. When we have $u(x)=u$ for all $x \in L$ and some $u\in \mathbb{N}$, then we say that the capacities are \emph{uniform}, otherwise, we say they are \emph{non-uniform}.
The \emph{fairness} assignment restriction provides a mapping $\chi : P \rightarrow Col$ of points to colors and then requires that each cluster has the same ratio between the numbers of points with different colors (see Section~\ref{sec:fair} for specifics). 
The \emph{strongly private $k$-center problem} can also be cast as a $k$-center problem with an assignment restriction. Again, the input now additionally contains a mapping $\chi$ of points to colors. Now the assignment is restricted to ensure that it satisfies the lower bound for the points of each color. We even consider the slight generalization where each color has its own lower bound, and call this problem the strongly private $k$-center problem. 

An instance of the \emph{private $k$-center problem with outliers} consists of $P$, $L$, an integer $k\in \mathbb{N}$, a lower bound $\ell$, and a parameter $o$ for the maximum number of outliers. 
The problem is to compute a set of centers $C \subseteq L$ with $|C| \leq k$, a set of outliers $O$ with $|O| \le o$, and an assignment $\phi: P\backslash O \rightarrow C$ of the points that are not outliers to the centers in $C$. The choice of $C, O, \phi$ shall minimize
\[
\max_{x \in P\backslash O} d(x,\phi(x)).
\]

\section{Private \kk-center with Outliers}
\label{sec:outlier_privacy}


\begin{theorem}\label{thm:privat_outlier_kcenter_general}
Assume that there exists an approximation algorithm $A$ for the $k$-center problem with outliers with approximation factor $\alpha$. 

Then for instances $P$, $L$, $k$, $\ell$, $o$ of the private $k$-center problem with outliers, we can compute an $(\alpha + 2)$-approximation in polynomial time.
\end{theorem}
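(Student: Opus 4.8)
The plan is to use the standard thresholding framework: guess the optimal radius $\tau$ (by trying all $O(|P|\cdot|L|)$ pairwise distances), and for each guess try to build a feasible private solution whose radius is at most $(\alpha+2)\tau$. The key observation is that if $\tau$ is at least the optimal value, then in particular there is a (non-private) $k$-center-with-outliers solution of radius $\tau$, so running the $\alpha$-approximation algorithm $A$ on the instance $P,L,k,o$ yields centers $C_A$, outliers $O_A$ with $|O_A|\le o$, and an assignment of the non-outlier points within radius $\alpha\tau$. This gives us a good non-private skeleton; the task is to redistribute points among the centers of $C_A$ (and possibly declare a few more outliers) so that every used center receives at least $\ell$ points, while only inflating distances by an additive $2\tau$.

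First I would set up the redistribution as a flow/assignment problem. Build a bipartite-style network: a source $s$ feeds each point $x\in P$ with capacity $1$; each point $x$ connects to every center $c\in C_A$ with $d(x,c)\le \alpha\tau + 2\tau$ — or, more usefully, to every center within distance $2\tau$ of the center $x$ was originally assigned to, since two centers that each lie within $\tau$ of some common optimal center are within $2\tau$ of each other; each center $c$ connects to the sink $t$. To encode the lower bound I would look for a flow that saturates, for each center we decide to open, at least $\ell$ units — this is a lower-bounded flow, which one can handle by the usual transformation or by checking, via max-flow/min-cut, whether the points can be partitioned so that every cluster reaches size $\ell$. The outlier budget $o$ enters by allowing up to $o$ points to go unassigned. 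If such a feasible flow exists, rounding it (the network is integral) gives the desired $(\alpha+2)$-approximate private solution and we are done for this threshold.

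The main obstacle — and the heart of the argument — is what to do when no feasible flow exists. Here I would invoke max-flow/min-cut: a deficient cut exposes a set of points $P'$ together with the set of centers they can reach, and the cut inequality says that these centers, if all opened, would need more than the available supply from $P'$ (after accounting for outliers) to meet their $\ell$-quotas — equivalently, the current solution is trying to open too many centers in the region spanned by $P'$. The plan is then to \emph{recompute locally}: close some of these centers, reassign their points to surviving nearby centers (within an extra $2\tau$, using the triangle-inequality/optimal-center argument again), and argue two things — (a) progress: the number of opened centers in the troubled region strictly decreases, or some monotone potential drops, so the process terminates in polynomially many rounds; and (b) feasibility is preserved elsewhere and the radius bound $(\alpha+2)\tau$ is maintained throughout. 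Showing that the local recomputation both makes progress and keeps the radius bounded is exactly the delicate combinatorial step flagged in the paper's technique overview, and I expect to need a careful charging argument (mapping opened centers to distinct optimal centers in the region) to pin down why closing centers is always possible when the cut is deficient. Once termination is established, the smallest successful threshold yields the claimed $(\alpha+2)$-approximation in polynomial time.
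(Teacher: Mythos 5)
Your overall skeleton (thresholding over all $O(|P||L|)$ distances, one call to $A$ to get a radius-$\alpha\tau$ skeleton, a flow network that moves points by at most an extra $2\tau$, and a max-flow/min-cut analysis when the redistribution fails) matches the paper. The gap is precisely in the step you flag as delicate. You propose, when the cut is deficient, to ``close some of these centers, reassign their points to surviving nearby centers (within an extra $2\tau$)'' and to hope that some monotone potential drops. As stated this fails in two ways. First, the min-cut structure gives you the opposite of what a local reassignment needs: the points belonging to the unreachable clusters can, within distance $2\tau$, only reach clusters inside the same deficient region (that is exactly why the flow could not push more), and the counting $|P(V')|+|P_A(V')| < k''\ell$ shows this region simply does not contain enough points to fill $k''$ private clusters --- so there need not exist any surviving nearby center able to absorb the closed clusters' points. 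Second, patching assignments locally and iterating accumulates additive $2\tau$ terms; nothing in your sketch prevents the radius from degrading over the many rounds. The paper's resolution is to use $A$ itself as the recomputation: the cut argument proves that \emph{any} feasible private clustering of radius $\tau$ covers $P(V')$ with at most $k''-1$ clusters, so one re-runs the black-box $\alpha$-approximation on the troubled points with a strictly smaller cluster budget, obtaining a fresh sub-solution of radius at most $\alpha\tau$; the $+2\tau$ is paid only once, by the final successful flow reassignment, which is how the bound $(\alpha+2)\tau$ survives the iteration.

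You also drop the outlier bookkeeping, which is the genuinely problem-specific difficulty here. One does not know how many outliers the optimum places inside the troubled region, so the paper re-clusters $P(V')\cup\phi^{-1}(out)$ (the current outliers are included in the recomputation) and tries \emph{two} parameter settings --- $k''-1$ clusters with $o$ outliers, or $k''$ clusters with at most $o'-1$ outliers --- proving via a counting argument over ``special'' clusters that at least one of the two admits radius at most $\alpha\tau$ whenever $\tau\ge\opt$. This is what legitimizes answering $\tau<\opt$ when both fail, and it supplies the termination measure: each round either reduces the number of clusters or keeps it and reduces the number of outliers, hence at most $k\cdot o$ rounds. Your suggested charging of opened centers to distinct optimal centers would at best re-derive the cluster-count bound; without the two-alternative recomputation on $P(V')$ together with the current outliers, the feasibility of the recursion and its termination remain unproven, so the proposal as written does not yet yield the theorem.
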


\begin{proof}
Below, we describe an algorithm that uses a threshold graph with threshold $\tau$. We show that for any given $\tau \in \mathbb{R}$, the algorithm has polynomial runtime and, if $\tau$ is equal to $\opt$, the value of the optimal solution, computes an $(\alpha + 2)$-approximation.
Since we know that the value of every solution is equal to the distance between a point and a location, we test all $O(|P||L|)$ possible distances for $\tau$ and return the best feasible clustering returned by any of them.
The main proof is the proof of Lemma~\ref{lemma:privat_outlier_Algo_for_fixed_tau} below, which concludes this proof.
\end{proof}

We now describe the procedure for a fixed value  of $\tau > 0$.

\begin{lemma}
\label{lemma:privat_outlier_Algo_for_fixed_tau}
Assume that there exists an approximation algorithm $A$ for the $k$-center problem with outliers with approximation factor $\alpha$.  
Let $P$, $L$, $k$, $\ell$, $o$ be an instance of the private $k$-center problem with outliers, let $\tau > 0$ and let $\opt$ denote the maximum radius in the optimal feasible clustering for $P$, $L$, $k$, $\ell$, $o$.
We can in polynomial time compute a feasible clustering with a maximum radius of at most $(\alpha + 2) \tau$ or determine $\tau < \opt$.
\end{lemma}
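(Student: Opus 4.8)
The plan is to use $A$ as a black box inside a thresholding scheme, repair the (non-private) solution it returns by a ``point-shifting'' flow, and, whenever the repair provably fails, localize the obstruction with the max-flow--min-cut theorem and recompute on a smaller subinstance, iterating until a private solution of radius at most $(\alpha+2)\tau$ is produced or $\tau<\opt$ is certified --- the general strategy outlined in the introduction.

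First, run $A$ on the $k$-center-with-outliers instance $(P,L,k,o)$ to get centers $C_A$ (with $|C_A|\le k$), outliers $O_A$ (with $|O_A|\le o$) and an assignment $\phi_A\colon P\setminus O_A\to C_A$. Every optimal feasible solution of the private instance is in particular a feasible $k$-center-with-outliers solution, so if $\tau\ge\opt$ then $A$ must return radius at most $\alpha\cdot\opt\le\alpha\tau$; hence if $A$'s radius exceeds $\alpha\tau$ we may report $\tau<\opt$ and stop. Otherwise $d(p,\phi_A(p))\le\alpha\tau$ for every served point, and only the lower bounds remain to be enforced.

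For the repair, build a flow network with a source $s$, a sink $t$, one node per point, one node per center in $C_A$, and an ``outlier'' node $w$: $s$ sends one unit to each point; a point $p$ may forward its unit to a center $c\in C_A$ with $d(\phi_A(p),c)\le 2\tau$ --- so that its total travel is at most $\alpha\tau+2\tau=(\alpha+2)\tau$ --- or to $w$; $w$ is joined to $t$ with capacity $o-|O_A|$; and every center is joined to $t$ by an edge with lower bound $\ell$ (the original outliers $O_A$ stay outliers). If a feasible integral flow exists, open exactly the centers that receive at least $\ell$ units, assign points as the flow prescribes, declare the units through $w$ as further outliers, and return the result: every open center then has at least $\ell$ points, at most $o$ points are outliers, and the radius is at most $(\alpha+2)\tau$, as required. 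If no feasible flow exists, take a minimum cut; a Hall-type argument extracts a set of centers $T\subseteq C_A$ with $|\{p:d(\phi_A(p),T)\le 2\tau\}|<\ell\,|T|$, i.e.\ a region into which $A$ packed more centers than $\ell$ lower bounds can sustain. We then \emph{recompute}: re-invoke $A$ on the subinstance induced by a suitable neighborhood of this region, with a suitably reduced center budget and an outlier budget inherited from the cut, splice the returned subsolution into the current global solution, and restart the whole procedure.

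Two things have to be checked, and I expect the second to be the crux. (a) \emph{Termination.} The recomputation must strictly decrease a potential --- e.g.\ the number of centers still in play, or the number of points not yet in a valid cluster --- so that after polynomially many rounds either the flow succeeds or the center budget is exhausted, in which case infeasibility of the flow certifies $\tau<\opt$. (b) \emph{Legitimacy of recomputation.} One must pin down the neighborhood and the reduced budget so that, whenever $\tau\ge\opt$, the subinstance still has optimum at most $\tau$ with at most the reduced number of centers, so that $A$'s $\alpha$-guarantee re-applies there; this needs a careful argument about how the optimal private centers that touch the flagged region serve points inside versus outside it (the $2\tau$ slack in the shifting edges is exactly what buys room here), and one must then verify that re-assembling the recomputed subsolution with the untouched part keeps every open center at $\ge\ell$ points, the total number of outliers at $\le o$, and every point within $(\alpha+2)\tau$ of its center. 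The outlier accounting --- a point discarded by $A$ may be kept by the optimum and vice versa, so the budget $o$ must be split consistently among the flow, the frozen part and each recomputation --- together with the counting that links the min cut to the statement ``too many centers'', is where most of the work in the proof will go.
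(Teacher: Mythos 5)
Your skeleton is the same as the paper's (threshold $\tau$, run $A$, reject if its radius exceeds $\alpha\tau$, a flow-based repair with $2\tau$ slack, a min-cut certificate, recomputation, iteration), but the two items you yourself defer as ``the crux'' are exactly where the paper's proof lives, and you do not supply them. For the recomputation step you leave open which subinstance, which budgets, why it succeeds when $\tau\ge\opt$, and what potential decreases. The paper's resolution is a specific device you would need to invent: it recomputes on $P(V')\cup\phi^{-1}(out)$ (the flagged points \emph{together with all current outliers}) \emph{twice}, once with parameters $(k''-1,\,o)$ and once with $(k'',\,o'-1)$, where $k''$ is the number of flagged clusters and $o'$ the current number of outliers, and proves via a counting argument over ``special'' clusters that if $\tau\ge\opt$ at least one of the two calls returns radius at most $\alpha\tau$. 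This is what makes termination work: each iteration either reduces the number of clusters or keeps it and reduces the number of outliers, giving at most $k\cdot o$ rounds. Your proposed potential (``the number of centers still in play'') is not sufficient, because the optimum may itself need $k''$ clusters on the flagged region, in which case the center budget cannot be cut and progress must be made on the outlier count instead; the consistent splitting of the budget $o$ that you flag as future work is precisely this missing argument.

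There is also a concrete technical flaw in the network you describe. You place an edge from a point $p$ to a center $c$ when $d(\phi_A(p),c)\le 2\tau$. That suffices for the radius bound, but it breaks the cut-to-counting step. The paper's argument that any private clustering of radius at most $\tau$ uses fewer than $k''$ clusters to cover the flagged points $P(V')$ relies on a locality property: an optimal cluster has diameter at most $2\tau$, so if it touches $P(V')$ it lies entirely inside $P(V')$ plus the adjacent points, and these together number fewer than $k''\ell$. This containment holds because the paper's edges are placed when $d(p,C_i)\le 2\tau$, i.e.\ when $p$ is within $2\tau$ of some \emph{point} of the target cluster. Under your center-based criterion, a point $q$ within $2\tau$ of a flagged point $p$ only satisfies $d(\phi_A(q),c_i)\le 2(\alpha+1)\tau$, so it need not be adjacent to the flagged region; an optimal cluster touching the flagged set could then draw most of its $\ell$ points from outside it, and your Hall-type inequality no longer certifies that fewer clusters suffice. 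Finally, your outlier node (served points may become new outliers, while $O_A$ stays fixed) cannot help satisfy lower bounds at all, discarding points never fills a deficient cluster, whereas the paper's network uses the current outliers as a \emph{supply} (edges $s\to v_{out}\to w_p$ for outlier points $p$), and the subsequent outlier accounting is built on exactly that.
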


\begin{proof}
The algorithm first uses A to compute a solution without the lower bound: Let $\mathcal{C} =(C,\phi)$ be an $\alpha$-approximate solution for the $k$-center problem with outliers on $P$, $L$, $k$, $o$.
Notice that it can happen that $\mathcal{C}$ contains clusters with fewer than $\ell$ points.

Let $k'=|C|$ (notice that $k'< k$ is possible), $C = \{c_1,\ldots,c_{k'}\}$, and let $C_1,\ldots,C_{k'}$ be the clusters that $\mathcal{C}$ induces, i.e., $C_j := \phi_1^{-1}(c_j)$. Finally, let $r = \max_{x \in P} d(x,\phi(x))$ be the largest distance of any point to its assigned center.
Observe that an optimal solution to the $k$-center problem with outliers can only have a lower objective value than the optimal solution to our problem because we only dropped a condition. Therefore, $\tau \ge \opt$ implies that $r \le \alpha \cdot \opt \le \alpha \cdot \tau$. If we have $r > \alpha \cdot \tau$, we return $\tau < \opt$.

We use $\mathcal{C}$ and $\tau$ to create a threshold graph which we use to either reassign points between the clusters to obtain a feasible solution or to find a set of points $P'$ for which we can show that every feasible clustering with maximum radius $\tau$ uses less clusters than our current solution to cover it. In the latter case we compute another $\alpha$-approximate solution which uses fewer clusters on $P'$ and repeat the process. 
Note that for $\tau < \opt$ such a clustering does not necessarily exist, but for $\tau \ge \opt$ the optimal clustering provides a solution for $P'$ with fewer clusters. If we do not find such a clustering with maximum radius at most $\alpha \cdot \tau$, we return $\tau < \opt$.

We show that every iteration of the process reduces the number of clusters or the number of outliers, therefore the process stops after at most $k \cdot o$ iterations. It may happen that our final solution contains much less clusters than the optimal solution (but it will be an approximate solution for the optimal solution with $k$ centers).

We will use a network flow computation to move points from clusters with more than $\ell$ points to clusters with less than $\ell$ points. Moving a point to another cluster can increase the radius of the cluster. We only want to move points between clusters such that the radius does not increase by too much. More precisely, we only allow a point $p$ to be moved to another cluster $C_i$ if the distance $d(p,C_i)$ between the point and the clusters is at most $2\tau$. This is ensured by the structure of the network described in the next paragraph. Unless stated otherwise, when we refer to distances between a point and a cluster in the following, we mean the distance between the point and the cluster in its original state before any points have been reassigned.

Given $\mathcal{C}$ and $\tau$, we create the threshold graph $G_\tau = (V_\tau,E_\tau)$ as follows. $V_\tau$ consists of a source $s$, a sink $t$, a node $v_i$ for each cluster $C_i$, a node $v_{out}$ for the set of outliers and a node $w_p$ for each point $p \in P$. For all $i \in [k']$, we connect $s$ to $v_i$ if the cluster $C_i$ contains more than $\ell$ points and set the capacity of $(s,v_i)$ to $|C_i|- \ell$. If the cluster $C_i$ contains fewer than $\ell$ points, we connect $v_i$ with $t$ and set the capacity of $(v_i,t)$ to $\ell - |C_i|$. 
Furthermore, we connect $v_i$ with $w_p$ for all $p \in C_i$ and set the capacity of $(v_i,w_p)$ to $1$. 
We also connect $s$ to $v_{out}$ with capacity $o$ and $v_{out}$ with $w_p$ for all $p \in \phi^{-1}(out)$ with capacity $1$.
Whenever a point $p$ and a cluster $C_i$ with $p \notin C_i$ satisfy $d(p,C_i) \le 2 \tau$ (i.e., there is a point $q \in C_i$ that satisfies $d(p,q) \leq 2\tau$), we connect $w_p$ with $v_i$ with capacity $1$. 

Formally the graph $G_\tau = (V_\tau,E_\tau)$ is defined by 
\begin{align} 
V_\tau = &\{v_{out}\} \cup \{v_i \mid 1\leq i \leq k'\} \cup \{w_p \mid p \in P\} \cup \{s,t\}\text{ and}\\ 
E_\tau = &\{(v_i,w_p) \mid p \in C_i\} \cup \{(w_p,v_i) \mid p \notin C_i \wedge d(p,C_i) \le 2\tau\}\\
\cup &\{(v_{out},w_p) \mid \phi(p) = out\}\\
\cup &\{(s,v_{out})\} \cup \{(s,v_i) \mid |C_i|-\ell > 0\} \cup \{(v_i,t) \mid |C_i|-\ell < 0\}.
\end{align}
We define the capacity function $cap: E_\tau \rightarrow \mathbb{R}$ by 
\begin{equation}
    cap(e)=
    \begin{cases}
      \ell - |C_i|, & \text{if}\ e = (v_i,t) \\
      |C_i| - \ell, & \text{if}\ e = (s,v_i) \\
			o, & \text{if}\ e = (s,v_{out})\\
			1 & \text{otherwise.}
    \end{cases}
  \end{equation} 

We use $G = (V,E)$ to refer to $G_\tau$ as $\tau$ is clear from context.
We now compute an integral maximum $s$-$t$-flow $f$ on $G$.
According to $f$ we can reassign points different clusters.

\begin{lemma}
\label{lemma:privat_outlier_reassignment_feasible}
Let $f$ be an integral maximal $s$-$t$-flow on $G$. It is possible to reassign $p$ to $C_i$ for all edges $(w_p,v_i)$ with $f((w_p,v_i)) = 1$. 

The resulting solution has a maximum radius of at most $r + 2 \tau$. If $f$ saturates all edges of the form $(v_i,t)$, then the solution is feasible.
\end{lemma}

\begin{proof}
Let $p \in C_i$.
The choice of capacity $1$ on $(v_i,w_p)$ and flow conservation ensure 
\[
\sum_{(w_p,v_j) \in E} f((w_p,v_j)) \le 1 
\] for $p$. Therefore no point would have to be reassigned to more than one cluster.
Note that for every point $p \in C_i$ that would be reassigned we must have $f((v_i,w_p)) = 1$ and for every edge $(v_i,w_p)$ with $f((v_i,w_p)) = 1$ the point $p$ would be reassigned.

For any $1 \le j \le k'$, let $p \in C_i$ be any point which we want to reassign to $C_j$. Then we must have $(w_p,v_j) \in E$ and therefore there must be a point $q \in C_j$ with $d(p,q) \le 2\tau$.
Thus we have 
\[d(p,c_j) \le d(p,q) + d(q,c_j) \le 2\tau + r = r + 2\tau.\]

Now assume that $f$ saturates all edges of the form $(v_i,t)$ and let $1 \le i \le k'$. 
If $E$ contains the edge $(v_i,t)$, then it can not contain the edge $(s,v_i)$ and therefore all incoming edges of $v_i$ are of the form $(w_p,v_i)$. Flow conservation then implies that the number of points reassigned to $C_i$ minus the points reassigned away from $C_i$ is equal to $f((v_i,t))$, which increases the number of points in $C_i$ to $\ell$.

If $E$ contains the edge $(s,v_i)$, then it can not contain the edge $(v_i,t)$ and therefore all outgoing edges of $v_i$ are of the form $(v_i,w_p)$. Flow conservation then implies that the number of points reassigned away from $C_i$ minus the points reassigned to $C_i$ is equal to $f((s,v_i))$, which reduces the number of points in $C_i$ to at least $\ell$.

If $E$ contains neither $(s,v_i)$ nor $(v_i,t)$, then the number of points in $C_i$ is equal to $\ell$ and does not change (the points may change, but their number does not).

In all three cases $C_i$ contains at least $\ell$ points after the reassignment.
\end{proof}

If $f$ saturates all edges of the form $(v_i,t)$ in $G$, then we reassign points according to Lemma~\ref{lemma:privat_outlier_reassignment_feasible} and return the new clustering.

Otherwise we look at the residual network $G_{f}$ of $f$ on $G$. 
Let $V'$ be the set of nodes in $G_{f}$ which can not be reached from $s$. We say cluster $C_i$ \emph{belongs} to $V'$ if $v_i \in V'$, and a point $p \in C_i$ is \emph{adjacent} to $V'$ if $w_p \in V'$ and $v_i \notin V'$. Let $C(V')$ denote the set of clusters belonging to $V'$. Let $k'' = |C(V')|$. We say a point $p$ belongs to $V'$ if the cluster $C_i$ with $p \in C_i$ belongs to $V'$. Let $P(V')$ and $P_A(V')$ denote the set of points that belong to $V'$ and the set of points adjacent to $V'$.

\begin{lemma}
\label{lemma:privat_outlier_properties_V'}
Any clustering on $P$ with maximum radius at most $\tau$ that contains at least $\ell$ points in every cluster uses fewer than $k''$ clusters to cover all points in $P(V')$.
\end{lemma}

\begin{proof}
We first observe that $V'$ must have the following properties:
\begin{itemize}
\item $v_i \in V'$ and $(w_p,v_i)\in E$ implies $w_p \in V'$.
\item $w_p \in V'$, $(w_p,v_i) \in E$ and $f((w_p,v_i)) > 0$ implies $v_i \in V'$.
\item $w_p \in V'$ for some $p \in C_i$ and $v_i \notin V'$ implies $f((v_i,w_p)) = 1$.
\end{itemize}
The first property follows from the fact that $f$ can only saturate $(w_p,v_i)$ if $f$ also saturates $(v_j,w_p)$ for $p \in C_j$. So, either $(w_p,v_i)$ is not saturated, which means that $v_i$ can be reached from any vertex that reaches $w_p$, or $(w_p,v_i)$ \emph{is} saturated, which means that the only incoming edge of $w_p$ in $G_{f}$ is $(v_i,w_p)$. In both cases, if $v_i \in V'$, then $w_p \in V'$.
The second property follows since $f((w_p,v_i)) > 0$ implies $(v_i,w_p) \in E(G_{f})$.
The third property is true since we defined $cap((v_i,w_p)) = 1$.

This implies that a reassignment due to Lemma~\ref{lemma:privat_outlier_reassignment_feasible} would reassign all points adjacent to $V'$ to clusters in $C(V')$ and moreover all reassignments from points in $P(V') \cup P_A(V')$ would be to clusters in $C(V')$. Let $n_i$ denote the number of points that would be assigned to $C_i$ after the reassignment.
Then $|P(V')| + |P_A(V')| = \sum_{C_i \in C(V')} n_i$.

Now we argue that this sum is smaller than $k'' \cdot \ell$ by observing that each $n_i \le \ell$ and at least one $n_i$ is strictly smaller than $\ell$. 

Let $C_i$ be a cluster with more than $\ell$ points after the reassignment. Then $(s,v_i)$ is not saturated by $f$ and $v_i$ can be reached from $s$ in $G_{f}$. Therefore after the reassignment no cluster $C_i \in C(V')$ would contain more than $\ell$ points; in other words, $n_i > \ell$ implies $C_i \notin C(V')$.

Let $C_i$ be a cluster which would still contain fewer than $\ell$ points after the reassignment. This implies that $f$ does not saturate the edge $(v_i,t)$. Therefore $t$ can be reached from $v_i$ and since $f$ is a maximum $s$-$t$ flow, $v_i$ can not be reached from $s$. We must have $v_i \in V'$.

Because we assumed that the reassignment does not satisfy all lower bounds, at least one such cluster has to exist. 
This implies
\[|P(V')| + |P_A(V')| = \sum_{C_i \in C(V')} n_i < k'' \cdot \ell.\]
Which means that the clusters in $C(V')$ and $P_A(V')$ do not contain enough points to satisfy the lower bound in $k''$ clusters.

By definition of $G$ and $V'$, for two points $p,q$ with $p \in P(V')$ and $d(p,q) \le 2\tau$ we must have $q \in P(V') \cup P_A(V')$. 
Let $\mathcal{C}'$ be a clustering that abides the lower bounds and has a maximal radius of at most $\tau$. Then every cluster $C'$ in $\mathcal{C}'$ that contains at least one point from $P(V')$ can only contain points from $P(V') \cup P_A(V')$. 
Therefore $\mathcal{C}'$ must contain fewer than $k''$ clusters which contain at least one point from $P(V')$.
\end{proof}

If we have $\tau \ge \opt$, then Lemma~\ref{lemma:privat_outlier_properties_V'} implies that the optimal solution covers all points in $P(V')$ with fewer than $k''$ clusters. An $\alpha$-approximative solution on the point set $P(V')$ with at most $k''-1$ clusters which contains at most $o$ outliers is then $\alpha$-approximative for $P(V')$.

Unfortunately, we do not know how many outliers an optimal clustering has in $P(V')$.
We therefore involve the outliers $\phi^{-1}(out)$ in our new computation as well.
Let $o' = |\phi^{-1}(out)|$ denote the current number of outliers.
We obtain the following Lemma through a counting argument.

\begin{lemma}
\label{lemma:outlier_properties_V'2}
We call a cluster special if it contains at least one point from $P(V')$ or only contains points from $\phi^{-1}(out)$.
Let $\mathcal{C}'$ be a clustering on $P$ with a maximum radius of at most $\tau$ on all special clusters that respects the lower bounds, has at most $o$ outliers and consists of at most $k$ clusters out of which at most $k''$ are special. 
If $\mathcal{C}'$ has exactly $k''$ special clusters, then $\mathcal{C}'$ has at most $o' - 1$ outliers in $P(V) \cup \phi^{-1}(out)$.
\end{lemma}
\begin{proof}
Assume the clustering contains exactly special $k''$ clusters. Each of these clusters has to contain at least $\ell$ points from $P(V') \cup P_A(V') \cup \phi^{-1}(out)$. We know 
\[|P(V') \cup P_A(V') \cup \phi^{-1}(out)| \le |P(V') \cup P_A(V')| + o' < k'' \ell + o'.\]
So there remain at most $o'-1$ unclustered points in $P(V) \cup \phi^{-1}(out)$.
\end{proof}

Now we need to show that such a clustering exists if $\tau \ge opt$ is the case.

\begin{lemma}
If $\tau \ge opt$, then there exists a clustering $\mathcal{C}'$ on $P$ with a maximum radius at most $\tau$ on all special clusters that respects the lower bounds, has at most $o$ outliers and consists of at most $k$ clusters out of which at most $k''$ are special.
\end{lemma}

\begin{proof}
We look at an optimal clustering $\mathcal{C}_{opt}$. The only way $\mathcal{C}_{opt}$ can violate a condition is if it contains $k'''> k''$ special clusters.
Lemma~\ref{lemma:privat_outlier_properties_V'} implies that $\mathcal{C}_{opt}$ contains at least $k'''-k''$ clusters that contain only points in $\phi^{-1}(out)$.
If all clusters in $\mathcal{C}_{opt}$ are special we know $P = P_A(V') \cup P(V') \cup \phi^{-1}(out)$.
We arbitrarily select $k'''-k''$ clusters from $\mathcal{C}_{opt}$ that contain only points in $\phi^{-1}(out)$, declaring all points in them as outliers and closing the corresponding centers. This leaves us with $k''$ clusters which contain at least $k'' \cdot \ell$ points. Since $P = P_A(V') \cup P(V') \cup \phi^{-1}(out)$ this leaves at most $o' - 1$ outliers.
Otherwise, if $\mathcal{C}_{opt}$ contains at least one cluster $C$ which is not special, we add all outliers from $P \setminus (P_A(V') \cup P(V') \cup \phi^{-1}(out)$ to $C$.
Again we arbitrarily select $k'''-k''$ clusters from $\mathcal{C}_{opt}$ that contain only points in $\phi^{-1}(out)$, declaring all points in them as outliers and closing the corresponding centers.
By creation there are no unclustered points in $P \setminus (P_A(V') \cup P(V') \cup \phi^{-1}(out)$ and exactly $k''$ special clusters with radius at most $\tau$. Therefore this clustering contains at most $o'-1$ outliers and has at most $k$ clusters.
\end{proof}

We now use $A$ again to compute new solutions without the lower bound: Let $\mathcal{C}'_1 =(C'_1,\phi'_1)$ be an $\alpha$-approximate solution for the $k$-center problem with outliers on $P(V')\cup \phi^{-1}(out)$, $L$, $k''-1$, $o$ and let  $\mathcal{C}'_2 =(C'_2,\phi'_2)$ be an $\alpha$-approximate solution for the $k$-center problem with outliers on $P(V')\cup \phi^{-1}(out)$, $L$, $k''$, $o'-1$. Let $r'_i = \max_{x \in P(V')\cup \phi^{-1}(out) } d(x,\phi'_i(x))$.

 Note that in case $\tau < \opt$, it can happen that no such clustering exists or that we obtain $r'_i > \alpha \cdot \tau$ for both $i=1$ and $i=2$. We then return $\tau < \opt$. 
Otherwise for at least one $i \in \{1,2\}$ $\mathcal{C}'_i$ must exist together with $r'_i \le \alpha \cdot \tau$.

If $\mathcal{C}'_2$ exists and we have $r'_2 \le \alpha \cdot \tau$ we replace $C(V')$ by $C'_2$ in $\mathcal{C}$ and adjust $\phi$ accordingly to obtain $\mathcal{C}_1 = (C_1,\phi_1)$ with $C_1 = (C\setminus C(V')) \cup C'_2$ and 
\begin{equation}
    \phi_1(p)=
    \begin{cases}
      \phi'_2(p) & \text{if}\ p \in P(V')\cup \phi^{-1}(out)\\
      \phi(p) & \text{otherwise.} \\
    \end{cases}
  \end{equation}
Otherwise, if $\mathcal{C}'_1$ exists, we have $r'_1 \le \alpha \cdot \tau$ and either $\mathcal{C}'_2$ does not exist or we have $r'_2 > \alpha \cdot \tau$, we analogous replace $C(V')$ by $C'_1$ to obtain $\mathcal{C}_1$.

\begin{lemma}
If we did not return $\tau < \opt$, then $\mathcal{C}_1$ is a solution for the $k$-center problem with outliers on $P$, $L$, $k$, $o$ and we have $r_1 = \max_{x \in P} d(x,\phi_1(x)) \le \alpha \cdot \tau$.
\end{lemma}
\begin{proof}
$\mathcal{C}$ is a solution for the $k$-center problem with outlier on $P$, $L$, $k$, $o$ with $r < \alpha \cdot \tau$ and since we did not return $\tau < \opt$, we must have $r'_i \le \alpha \tau$ for the chosen $i \in \{1,2\}$. 
\end{proof}

We iterate the previous process with the new clustering $\mathcal{C}_1$ until we either determine $\tau < \opt$ or the reassignment of points according to Lemma~\ref{lemma:privat_outlier_reassignment_feasible} yields a feasible solution. Since each iteration reduces the number of clusters or keeps the same number of clusters and reduces the number of outliers, the process terminates after at most $k\cdot o$ iterations.
\end{proof}

\begin{corollary}
We can compute a $4$-approximation for instances of the private $k$-center problem with outliers and a $5$-approximation for instances of the private $k$-supplier problem in polynomial time. 
\end{corollary}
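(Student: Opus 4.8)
The plan is to instantiate Theorem~\ref{thm:privat_outlier_kcenter_general} with the best known black-box approximation algorithms for $k$-center, resp.\ $k$-supplier, with outliers, as collected in Table~\ref{table:resultsweuse}.

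First, recall that~\cite{CGK16} gives a $2$-approximation algorithm $A$ for the $k$-center problem with outliers (the case $P=L$). Plugging $\alpha = 2$ into Theorem~\ref{thm:privat_outlier_kcenter_general} immediately yields a polynomial-time $(2+2)=4$-approximation for the private $k$-center problem with outliers.

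Second, for the supplier setting I would observe that the proof of Lemma~\ref{lemma:privat_outlier_Algo_for_fixed_tau}, and hence of Theorem~\ref{thm:privat_outlier_kcenter_general}, never uses $P=L$: the threshold graph $G_\tau$ is built from arbitrary subsets $P,L\subseteq X$, and the only place the metric structure enters the estimates is the triangle-inequality bound $d(p,c_j)\le d(p,q)+d(q,c_j)\le 2\tau + r$ for a point $p$ reassigned to a cluster $C_j$ with witness $q\in C_j$, which holds in every metric. Hence the statement of Theorem~\ref{thm:privat_outlier_kcenter_general} holds verbatim with ``$k$-center'' replaced by ``$k$-supplier'' on both sides. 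Invoking the $3$-approximation of~\cite{CKMN01} for $k$-supplier with outliers (so $\alpha=3$) gives a polynomial-time $(3+2)=5$-approximation for the private $k$-supplier problem with outliers, and the private $k$-supplier problem of Section~\ref{sec:prelim} is the special case $o=0$ of that problem (alternatively, one may feed in the $3$-approximation of~\cite{HS86} for plain $k$-supplier); either way we obtain the claimed $5$-approximation.

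I do not expect a genuine obstacle: the corollary is a direct consequence of Theorem~\ref{thm:privat_outlier_kcenter_general} together with known results. The only point that needs a word of justification rather than a computation is that the reduction is oblivious to whether $P=L$, which is exactly what lets the supplier approximation factor be carried through unchanged; polynomial running time is inherited from Theorem~\ref{thm:privat_outlier_kcenter_general} and from the polynomial running time of the cited subroutines.
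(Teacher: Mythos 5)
Your proposal is correct and follows essentially the same route as the paper: the corollary is obtained by instantiating Theorem~\ref{thm:privat_outlier_kcenter_general} with the $2$-approximation of~\cite{CGK16} for $k$-center with outliers and the $3$-approximation of~\cite{CKMN01} for $k$-supplier with outliers. Your extra remark that the reduction never uses $P=L$ is exactly the observation the paper makes (in the Results paragraph) to justify carrying the supplier factor through, so nothing further is needed.
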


\begin{proof}
Follows from Theorem~\ref{thm:privat_outlier_kcenter_general} together with the $2$-approximation for $k$-center with outliers in~\cite{CGK16} and the $3$-approximation for $k$-supplier with outliers in~\cite{CKMN01}.
\end{proof}

\section{Combining Privacy with other Constraints}
\label{sec:combinations}
We want to take the general idea from Section~\ref{sec:outlier_privacy} and instead of outliers we want to combine privacy with other restrictions on the clusters. 
Given a specific restriction $\mathcal{R}$ and an approximation algorithm $A$ for the $k$-center problem with restriction $\mathcal{R}$ with approximation factor $\alpha$ we ask: Can we similar to Section~\ref{sec:outlier_privacy} combine $A$ with the use of a threshold graph to compute an $O(\alpha)$-approximation for the private $k$-center problem with restriction $\mathcal{R}$? 


In Section~\ref{sec:outlier_privacy} we made use of two properties of a clustering with outliers. In Lemma~\ref{lemma:privat_outlier_reassignment_feasible} we used that reassigning points to another cluster never increases the number of outliers and in Lemma~\ref{lemma:privat_outlier_properties_V'} we used that outliers have the somewhat local property that computing a new clustering on the points $V'$ from a subset of the clusters together with the set of outliers can not create more outliers on the remaining points.

In this section we now take a look at restriction properties which are similarly local, and show how to combine them with privacy.

\subsection{Privacy and Capacities}\label{sec:lower_upper_bound}

\begin{theorem}
\label{thm:kcenter_general}
Assume that there exists an approximation algorithm $A$ for the capacitated $k$-center problem with approximation factor $\alpha$. 
Then we can compute an $(\alpha + 2)$-approximation for the private capacitated $k$-center problem in polynomial time.
\end{theorem}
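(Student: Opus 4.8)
The plan is to mirror the structure of the proof of Theorem~\ref{thm:privat_outlier_kcenter_general} and Lemma~\ref{lemma:privat_outlier_Algo_for_fixed_tau}, replacing the bookkeeping for outliers by the (simpler) bookkeeping for upper bounds. As there, I would test all $O(|P||L|)$ candidate values of $\tau$ and, for each fixed $\tau$, run a subroutine that either outputs a feasible private capacitated clustering of radius at most $(\alpha+2)\tau$ or correctly reports $\tau<\opt$; returning the cheapest clustering found over all $\tau$ then gives the claimed approximation, since the subroutine succeeds for $\tau=\opt$.

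For a fixed $\tau$, the subroutine first calls $A$ on $P,L,k$ with the capacity function $u$ (dropping the lower bound) to obtain an $\alpha$-approximate capacitated solution $\mathcal{C}=(C,\phi)$ with clusters $C_1,\dots,C_{k'}$ and radius $r$. Since the capacitated $k$-center optimum is at most the private capacitated optimum (we only dropped a constraint), $\tau\ge\opt$ forces $r\le\alpha\tau$; if $r>\alpha\tau$ we report $\tau<\opt$. Next I would build essentially the threshold graph $G_\tau$ of Lemma~\ref{lemma:privat_outlier_Algo_for_fixed_tau}, but \emph{without} the outlier node $v_{out}$: a source $s$, a sink $t$, a node $v_i$ per cluster and a node $w_p$ per point, with $(s,v_i)$ of capacity $|C_i|-\ell$ when $|C_i|>\ell$, $(v_i,t)$ of capacity $\ell-|C_i|$ when $|C_i|<\ell$, $(v_i,w_p)$ of capacity $1$ for $p\in C_i$, and $(w_p,v_j)$ of capacity $1$ whenever $p\notin C_j$ and $d(p,C_j)\le 2\tau$. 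Compute an integral maximum flow $f$ and reassign $p$ to $C_j$ for every edge $(w_p,v_j)$ carrying flow. Exactly as in Lemma~\ref{lemma:privat_outlier_reassignment_feasible}, no point is moved twice and the new radius is at most $r+2\tau\le(\alpha+2)\tau$. The one genuinely new verification compared to the outlier case is that this reassignment never violates an upper bound: flow conservation at $v_i$ together with the capacities of $(s,v_i)$ and $(v_i,t)$ shows that the new size of $C_i$ equals $|C_i|+f((v_i,t))\le\ell$ when $|C_i|<\ell$, equals $|C_i|-f((s,v_i))\in[\ell,|C_i|]$ when $|C_i|>\ell$, and equals $|C_i|$ when $|C_i|=\ell$; in all cases it is at most $\max(|C_i|,\ell)\le u(c_i)$ (using $|C_i|\le u(c_i)$ and the standing assumption $\ell\le u(x)$ for all $x\in L$), and it is at least $\ell$ whenever $|C_i|\ge\ell$ or $f$ saturates $(v_i,t)$. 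Hence if $f$ saturates all edges $(v_i,t)$, the reassigned clustering is feasible and we return it.

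If $f$ does not saturate all $(v_i,t)$-edges, I would let $V'$ be the set of vertices of the residual network not reachable from $s$ and define $C(V')$, $k''=|C(V')|$, $P(V')$, $P_A(V')$ exactly as before. Since the outlier node was the only ingredient removed and it played no role in the three structural properties of $V'$, the proof of Lemma~\ref{lemma:privat_outlier_properties_V'} goes through verbatim: every clustering on $P$ of radius at most $\tau$ with at least $\ell$ points per cluster uses fewer than $k''$ clusters to cover $P(V')$. Consequently, when $\tau\ge\opt$, restricting an optimal private capacitated clustering $\mathcal{C}_{opt}$ to the at most $k''-1$ clusters meeting $P(V')$ and keeping only their $P(V')$-points yields a capacitated $k$-center solution on $P(V')$ with at most $k''-1$ centers and radius at most $\tau$ — no outlier case distinction is needed because upper bounds only get easier on a subset of the points. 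So I would call $A$ on $P(V')$, $L$, $k''-1$ with capacities $u$; if it returns radius $>\alpha\tau$, or there is no solution at all (e.g.\ when $k''=1$ and $P(V')\neq\emptyset$), report $\tau<\opt$, which is correct by the previous sentence. Otherwise replace $C(V')$ in $\mathcal{C}$ by the new centers, reassigning the points of $P(V')$ to them and leaving every other point in place (in particular the points of $P_A(V')$, which sit in unchanged clusters); the resulting clustering has at most $k$ centers, radius at most $\alpha\tau$, respects all upper bounds, and has strictly fewer clusters. Iterating this at most $k$ times either produces a feasible clustering via Lemma~\ref{lemma:privat_outlier_reassignment_feasible} or reports $\tau<\opt$, and each iteration runs in polynomial time (one max-flow computation plus one call to $A$). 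The only real obstacle is the upper-bound bookkeeping in the flow network sketched above; once that is settled the argument is strictly simpler than the outlier case — there is no outlier set to re-examine, so the iteration bound improves from $k\cdot o$ to $k$, and no second recomputation (the analogue of $\mathcal{C}'_2$) is needed.
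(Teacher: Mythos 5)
Your proposal matches the paper's own proof essentially step for step: the same thresholding over $O(|P||L|)$ candidate radii, the same outlier-free threshold graph with max-flow reassignment, the same residual-cut set $V'$ with the "fewer than $k''$ clusters" lemma, the same recomputation of $A$ on $P(V')$ with $k''-1$ centers, and the same bound of at most $k$ iterations. The only difference is that you spell out the capacity bookkeeping (new cluster sizes at most $\max(|C_i|,\ell)\le u(c_i)$) that the paper leaves implicit in its analogue of Lemma~\ref{lemma:reassignment_feasible}, which is a welcome but not divergent addition.
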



\begin{proof}
Let $P$, $L$, $k$, $u$, $\ell$ be an instance of the private capacitated $k$-center problem.

Analogous to Section~\ref{sec:outlier_privacy} we use a threshold graph with threshold $\tau$ and show that for any given $\tau \in \mathbb{R}$ the algorithm has polynomial runtime and, if $\tau$ is equal to $\opt$, the value of the optimal solution, computes an $(\alpha + 2)$-approximation.
Since we know that the value of the optimal solution is equal to the distance between a point and a location, we test all $O(|P||L|)$ possible distances for $\tau$ and return the best feasible clustering returned by any of them.
The main proof is the proof of Lemma~\ref{lemma:Algo_for_fixed_tau} below. The lemma then concludes the proof.
\end{proof}

We now describe the procedure for a fixed value  of $\tau > 0$.

\begin{lemma}
\label{lemma:Algo_for_fixed_tau}
Assume that there exists an approximation algorithm $A$ for the capacitated $k$-center problem with approximation factor $\alpha$. 

Let  $P$, $L$, $k$, $u$, $\ell$ be an instance of the private capacitated $k$ center problem and let $\tau > 0$. and let $\opt$ denote the maximum radius in the optimal feasible clustering for $P$, $L$, $k$, $u$, $\ell$.
We can in polynomial time compute a feasible clustering with a maximum radius of at most $(\alpha+ 2) \tau$ or determine $\tau < \opt$.
\end{lemma}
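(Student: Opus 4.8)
Following the pattern already established, the plan is to reprove Lemma~\ref{lemma:privat_outlier_Algo_for_fixed_tau} almost verbatim, deleting everything that concerns outliers and inserting, at each place where a cluster changes size, a check that no capacity is violated. Concretely, I would first run $A$ on the ordinary capacitated instance $P,L,k,u$ to obtain an $\alpha$-approximate clustering $\mathcal{C}=(C,\phi)$ with clusters $C_1,\dots,C_{k'}$ and radius $r=\max_{x\in P}d(x,\phi(x))$. Dropping the lower bound cannot increase the optimum, so $\tau\ge\opt$ forces $r\le\alpha\tau$; if $r>\alpha\tau$ we return ``$\tau<\opt$''. Next I would build the threshold graph $G_\tau$ exactly as in Section~\ref{sec:outlier_privacy}, just without the node $v_{out}$ and its incident edges: a source $s$ joined to every over-full cluster node $v_i$ (those with $|C_i|>\ell$) with capacity $|C_i|-\ell$, every under-full cluster node $v_i$ (those with $|C_i|<\ell$) joined to a sink $t$ with capacity $\ell-|C_i|$, unit-capacity edges $(v_i,w_p)$ for $p\in C_i$, and unit-capacity edges $(w_p,v_j)$ whenever $p\notin C_j$ and $d(p,C_j)\le 2\tau$. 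I compute an integral maximum $s$-$t$-flow $f$ and reassign $p$ to $C_j$ for every edge $(w_p,v_j)$ carrying flow.

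The analogue of Lemma~\ref{lemma:privat_outlier_reassignment_feasible} then has to be reproved, and this is the one spot where the upper bounds actually enter. That each point is reassigned to at most one cluster and that the new radius is at most $r+2\tau\le(\alpha+2)\tau$ is the old argument (unit capacity on $(v_i,w_p)$ plus flow conservation, and $(w_p,v_j)\in E$ implies some $q\in C_j$ with $d(p,q)\le 2\tau$, hence $d(p,c_j)\le d(p,q)+d(q,c_j)\le 2\tau+r$). The new observation is that no cluster overflows: flow conservation at $v_i$ shows that when $|C_i|\ge\ell$ the net change of $|C_i|$ equals $-f((s,v_i))\le 0$, so $C_i$ never grows above its initial size $|C_i|\le u(c_i)$, and when $|C_i|<\ell$ the net change equals $f((v_i,t))\le\ell-|C_i|$, so $C_i$ grows to at most $\ell\le u(c_i)$. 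Hence the reassigned clustering is always a valid capacitated clustering, and, exactly as before, if $f$ saturates all edges $(v_i,t)$ then every cluster ends with at least $\ell$ points and we return this feasible solution of radius at most $(\alpha+2)\tau$.

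If $f$ does not saturate all sink edges, I pass to the residual graph $G_f$, take $V'$ to be the set of vertices unreachable from $s$, and define $C(V')$, $k''=|C(V')|$, $P(V')$, $P_A(V')$ as in Section~\ref{sec:outlier_privacy}. The analogue of Lemma~\ref{lemma:privat_outlier_properties_V'} holds with the same proof, since that proof only uses the lower bound: every clustering of radius at most $\tau$ that puts at least $\ell$ points in each cluster covers $P(V')$ with fewer than $k''$ clusters (from $|P(V')|+|P_A(V')|=\sum_{C_i\in C(V')}n_i<k''\cdot\ell$ together with the locality fact that any point within $2\tau$ of $P(V')$ lies in $P(V')\cup P_A(V')$). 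Consequently, if $\tau\ge\opt$, restricting the optimal solution to the clusters meeting $P(V')$ yields a capacitated clustering of $P(V')$ with at most $k''-1$ centers and radius at most $\tau$ (the lower bound is not needed for $A$'s input, and deleting the $P_A(V')$-points only relaxes the capacities). I then call $A$ again on $P(V')$ with at most $k''-1$ centers, return ``$\tau<\opt$'' if this recomputed solution does not exist or has radius above $\alpha\tau$, otherwise replace $C(V')$ and the assignment of $P(V')$ by the result and iterate. Since there are no outliers to also account for, every iteration strictly decreases the number of centers (from $k'$ to at most $k'-1$), so the procedure halts after at most $k$ iterations, at which point it has either produced a feasible clustering of radius at most $(\alpha+2)\tau$ or certified $\tau<\opt$.

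The step I expect to be the main obstacle is this recomputation-and-merge: unlike in the outlier case, re-opening centers for $P(V')$ can collide with centers of $\mathcal{C}$ that are still serving points outside $P(V')$, and naively merging would double the load on such a center and break a capacity constraint. The clean way around this is to run $A$ on $P(V')$ with location set $L\setminus(C\setminus C(V'))$, so that the re-opened centers are fresh; one then has to verify that a capacitated clustering of $P(V')$ of radius at most $\tau$ with at most $k''-1$ such centers still exists when $\tau\ge\opt$ (extracted from the optimum via the analogue of Lemma~\ref{lemma:privat_outlier_properties_V'}), and that the merged assignment respects every $u(c)$ because the still-active centers keep exactly their old clusters while $A$'s centers keep exactly their new clusters and the two center sets are disjoint. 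Everything else is the outlier proof with the outlier bookkeeping erased.
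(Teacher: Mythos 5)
Up to the recomputation step your argument is the paper's own: the same threshold graph with the outlier node deleted, the same integral max-flow reassignment, and your added observation that the reassignment is capacity-safe (a cluster with $|C_i|\ge\ell$ never gains points on net, and a cluster with $|C_i|<\ell$ grows to at most $\ell\le u(c_i)$) is exactly the extra ingredient needed for the capacitated analogue of Lemma~\ref{lemma:privat_outlier_reassignment_feasible}. The $V'$-argument and the bound of at most $k$ iterations also coincide with the paper.

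You diverge at the recomputation-and-merge step, and that is where your proposal has a genuine gap. The paper calls $A$ on $P(V')$, $L$, $k''-1$, $u$ with the \emph{full} location set and replaces $C(V')$ by the new centers, exactly as in Section~\ref{sec:outlier_privacy}; you instead restrict the location set to $L\setminus(C\setminus C(V'))$ to rule out double-loading a center that is still serving points outside $P(V')$. But then the existence argument you rely on breaks: for $\tau\ge\opt$, the analogue of Lemma~\ref{lemma:privat_outlier_properties_V'} yields a radius-$\tau$ capacitated clustering of $P(V')$ with at most $k''-1$ centers only by restricting the \emph{optimal} clusters that meet $P(V')$, and nothing prevents those optimal clusters from having their centers precisely at locations in $C\setminus C(V')$ (such locations can easily lie within distance $2\tau+r$ of $P(V')$, e.g.\ via points of $P_A(V')$). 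On your restricted instance a radius-$\tau$ solution need not exist at all, so $A$ may fail or return radius larger than $\alpha\tau$, and your algorithm would then wrongly report $\tau<\opt$. The collision concern you raise is real in the capacitated setting (the paper states the merge lemma without proof and does not address it), but your repair is not justified as stated; to match the paper you must run $A$ on all of $L$ and merge as in the outlier case, and any treatment of colliding centers has to be done without shrinking the location set offered to $A$.
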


\begin{proof}
The algorithm first uses A to compute a solution without the lower bound: Let $\mathcal{C} =(C,\phi)$ be an $\alpha$-approximate solution for the capacitated $k$-center problem on $P$, $L$, $k$, $u$.


Again let $k'=|C|$, $C = \{c_1,\ldots,c_{k'}\}$, let $C_1,\ldots,C_{k'}$ be the clusters that $\mathcal{C}$ induces, i.e., $C_j := \phi_1^{-1}(c_j)$ and let $r = \max_{x \in P} d(x,\phi(x))$ be the largest distance of any point to its assigned center.
If we have $r > \alpha \cdot \tau$, we return $\tau < \opt$.

Given $\mathcal{C}$ and $\tau$, we create, similar to Section~\ref{sec:outlier_privacy}, a threshold graph $G_{\tau} = (V_{\tau},E_{\tau})$ by 
\begin{align} 
V_\tau = &\{v_i \mid 1\leq i \leq k'\} \cup \{w_p \mid p \in P\} \cup \{s,t\}\text{ and}\\ 
E_\tau = &\{(v_i,w_p) \mid p \in C_i\} \cup \{(w_p,v_i) \mid p \notin C_i \wedge d(p,C_i) \le 2\tau\}\\
\cup &\{(s,v_i) \mid |C_i|-\ell > 0\} \cup \{(v_i,t) \mid |C_i|-\ell < 0\}.
\end{align}
We define the capacity function $cap: E_\tau \rightarrow \mathbb{R}$ by 
\begin{equation}
    cap(e)=
    \begin{cases}
      \ell - |C_i|, & \text{if}\ e = (v_i,t) \\
      |C_i| - \ell, & \text{if}\ e = (s,v_i) \\
			1 & \text{otherwise.}
    \end{cases}
  \end{equation}
The only difference to Section~\ref{sec:outlier_privacy} is that we do not have any outliers.
We use $G = (V,E)$ to refer to $G_\tau$ as $\tau$ is clear from context.
We now compute an integral maximum $s$-$t$-flow $f$ on $G$.
According to $f$ we can reassign points to different clusters.

Analogous to Lemma~\ref{lemma:privat_outlier_reassignment_feasible} we obtain the following lemma.
\begin{lemma}
\label{lemma:reassignment_feasible}
Let $f$ be an integral maximal $s$-$t$-flow on $G$. It is possible to reassign $p$ to $C_j$ for all edges $(w_p,v_j)$ with $f((w_p,v_j)) = 1$ . 

The resulting solution has a maximum radius of at most $r + 2 \tau$. If $f$ saturates all edges of the form $(v_i,t)$, then the solution is feasible.
\end{lemma}

In case $f$ saturates all edges of the form $(v_i,t)$ we reassign points according to Lemma~\ref{lemma:reassignment_feasible} and return the new clustering.

Otherwise, we look at the residual network $G_f$ of $f$ on $G$.
We define $V'$ and $k''$ as before, i.e., $V'$ is the set of nodes in $G_{f_i}$ which can not be reached from $s$, and $k''$ is the number of clusters which belong to $V'$. As before, we obtain the following lemma.

\begin{lemma}
\label{lemma:properties_V'}
Any clustering on $P$ with maximum radius at most $\tau$ that respects the lower bounds 
uses fewer than $k''$ clusters to cover all points in $P(V')$.
\end{lemma}

In case we have $\tau \ge \opt$ this implies that the optimal solution covers all points in $P(V')$ with fewer than $k''$ clusters. An $\alpha$-approximative solution on the point set $P(V')$ with at most $k''-1$ clusters which abides only the upper bounds is then $\alpha$-approximative for $P(V')$.

We now use $A$ again to compute a new solution without the lower bound: Let $\mathcal{C}' =(C',\phi')$ be an $\alpha$-approximate solution for the capacitated $k$-center problem on $P(V')$, $L$, $k''-1$, $u$. Let $r' = \max_{x \in P(V')} d(x,\phi'(x))$.
 Note that in case $\tau < \opt$, it can happen that no such clustering exists or that we obtain $r' > \alpha \cdot \tau$. We then return $\tau < \opt$.

Otherwise we replace replace $C(V')$ by $C'$ in $\mathcal{C}$ and adjust $\phi$ accordingly to obtain $\mathcal{C}_1 = (C_1,\phi_1)$ with $C_1 = (C\setminus C(V')) \cup C'$ and 
\begin{equation}
    \phi_1(p)=
    \begin{cases}
      \phi'(p) & \text{if}\ p \in P(V')\\
      \phi(p) & \text{otherwise.} \\
    \end{cases}
  \end{equation}

\begin{lemma}
In case we did not return $\tau < \opt$, $\mathcal{C}_1$ is a solution for the capacitated $k$-center problem on $P$, $L$, $k$, $u$ and we have $r_1 = \max_{x \in P} d(x,\phi_1(x)) \le \alpha \cdot \opt$.
\end{lemma}

We iterate the previous process with new clustering $\mathcal{C}_1$ until we either determine $\tau < \opt$ or the reassignment of points according to Lemma~\ref{lemma:reassignment_feasible} yields a feasible solution. Since the number of clusters is reduced in each iteration, the process terminates after at most $k$ iterations.
\end{proof}


\begin{corollary}\label{cor:capa}
We can compute an $11$-approximation for instances of the private capacitated $k$-center problem in polynomial time. 

If the upper bounds are uniform, too, then we can compute an $8$-approximation.
\end{corollary}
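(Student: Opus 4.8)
The plan is to simply instantiate Theorem~\ref{thm:kcenter_general} with the best known black-box approximation algorithms for the capacitated $k$-center problem, exactly mirroring the way the corollary at the end of Section~\ref{sec:outlier_privacy} was derived from Theorem~\ref{thm:privat_outlier_kcenter_general}. Recall that Theorem~\ref{thm:kcenter_general} promises an $(\alpha+2)$-approximation for private capacitated $k$-center whenever an $\alpha$-approximation for (plain) capacitated $k$-center is available; so all that remains is to supply suitable values of $\alpha$ and do the arithmetic.

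For the general (non-uniform capacities) case, I would invoke the $9$-approximation for capacitated $k$-center of~\cite{ABCGMS15} as the algorithm $A$. Theorem~\ref{thm:kcenter_general} then yields an $(\alpha+2)$-approximation with $\alpha = 9$, i.e.\ an $11$-approximation for private capacitated $k$-center. For the uniform case, I would instead plug in the $6$-approximation for uniform capacitated $k$-center~\cite{KS00}, giving $\alpha = 6$ and hence an $8$-approximation for private uniform capacitated $k$-center. These are exactly the constants listed in Table~\ref{table:resultsweuse}.

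The only point that warrants a sentence of justification is that the cited algorithms genuinely satisfy the black-box interface used inside Theorem~\ref{thm:kcenter_general}: the reduction calls $A$ both on the original instance and on sub-instances $P(V')$, $L$, $k''-1$, $u$ with the same location set and capacity function but a restricted point set and fewer allowed centers. Since these are again valid capacitated (resp.\ uniform capacitated) $k$-center instances, the cited approximation algorithms apply verbatim, and the guarantee is preserved because $\opt$ of such a sub-instance is at most $\opt$ of the whole instance, which is what the proof of Lemma~\ref{lemma:Algo_for_fixed_tau} already relies on.

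The ``main obstacle'' here is therefore bookkeeping rather than mathematics: one has to make sure the quoted constants ($9$ for non-uniform, $6$ for uniform) are the $k$-center versions (the $k$-supplier variants would give weaker bounds) and are the best currently available, so that the stated bounds of $11$ and $8$ are correct. The arithmetic $9+2 = 11$ and $6+2 = 8$ then completes the proof, and the wording can be kept essentially identical to the proof of the corollary in Section~\ref{sec:outlier_privacy}.
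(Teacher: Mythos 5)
Your proposal is correct and matches the paper's own proof: both simply instantiate Theorem~\ref{thm:kcenter_general} with the $9$-approximation of~\cite{ABCGMS15} for non-uniform capacities ($9+2=11$) and the $6$-approximation of~\cite{KS00} for uniform capacities ($6+2=8$). The extra remark that the black-box algorithm is applied to valid capacitated sub-instances is a sensible sanity check, but it is already implicit in the proof of Lemma~\ref{lemma:Algo_for_fixed_tau}.
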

\begin{proof}
Follows from Theorem~\ref{thm:kcenter_general} together with the $9$-approximation for capacitated $k$-center in~\cite{ABCGMS15}. For uniform upper bounds, capacitated $k$-center can be $6$-approximated~\cite{KS00}, leading to a guarantee of $8$.
\end{proof}

\begin{corollary}\label{cor:capb}
We can compute a $13$-approximation for instances of the private capacitated $k$-supplier problem in polynomial time. 

\end{corollary}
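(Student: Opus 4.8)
The plan is to obtain this in exactly the same way Corollary~\ref{cor:capa} was obtained, namely by feeding a known capacitated $k$-supplier algorithm into the reduction of Theorem~\ref{thm:kcenter_general}. First I would verify that nothing in the proof of Theorem~\ref{thm:kcenter_general} — that is, in Lemma~\ref{lemma:Algo_for_fixed_tau} and the lemmas it invokes — ever used the assumption $P = L$. The threshold graph $G_\tau$ only refers to distances $d(p,C_i)$ between points and (subsets of) clusters; the edge capacities depend only on $\ell$ and $u$; and the recursive calls to $A$ are made on subinstances of the form $P(V')$, $L$, $k''-1$, $u$, which keep the original location set $L$ intact. Hence the very same algorithm runs on an instance $P$, $L$, $k$, $u$, $\ell$ of the private capacitated $k$-supplier problem, where $P$ and $L$ may differ, as long as the subroutine $A$ solves capacitated $k$-supplier rather than capacitated $k$-center.

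The one place that deserves a second look is the radius estimate in Lemma~\ref{lemma:reassignment_feasible}: when a point $p \in C_i$ is moved to $C_j$ because some $q \in C_j$ satisfies $d(p,q) \le 2\tau$, one bounds $d(p,c_j) \le d(p,q) + d(q,c_j) \le 2\tau + r$. This estimate still holds in the supplier setting, since $q \in C_j = \phi^{-1}(c_j)$ is assigned to the open center $c_j \in L$ at distance at most $r$, irrespective of whether $c_j$ happens to lie in $P$. Consequently the reassigned solution still has maximum radius at most $r + 2\tau \le (\alpha+2)\tau$, and the remainder of the argument — the residual-network analysis of Lemma~\ref{lemma:properties_V'}, the fact that each iteration strictly decreases the number of clusters so that the loop terminates after at most $k$ rounds, and the outer thresholding over the $O(|P||L|)$ candidate values of $\tau$ — carries over verbatim. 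Thus the reduction converts any $\alpha$-approximation for capacitated $k$-supplier into an $(\alpha+2)$-approximation for private capacitated $k$-supplier.

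Finally I would instantiate the subroutine with the best known bound: \cite{ABCGMS15} provides an $11$-approximation for the (non-uniform) capacitated $k$-supplier problem, so taking $\alpha = 11$ gives the claimed $13$-approximation. I do not anticipate a genuine obstacle here; the only subtlety is to make explicit that the proof of Theorem~\ref{thm:kcenter_general} is oblivious to whether $P = L$, so that invoking it with a $k$-supplier subroutine — and incurring only the additive loss of $2$ — is legitimate.
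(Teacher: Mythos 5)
Your proposal matches the paper's own proof: Corollary~\ref{cor:capb} is obtained exactly by plugging the $11$-approximation for capacitated $k$-supplier from~\cite{ABCGMS15} into the reduction of Theorem~\ref{thm:kcenter_general}, whose argument is oblivious to whether $P=L$ (as the paper notes, the reduction handles the general supplier case, and which problem results depends only on the subroutine). Your extra check that the radius bound $r+2\tau$ and the residual-network analysis survive in the supplier setting is a correct and welcome elaboration, but it is the same route.
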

\begin{proof}
Follows from Theorem~\ref{thm:kcenter_general} together with the $11$-approximation for capacitated $k$-center in~\cite{ABCGMS15}.
\end{proof}

\subsection{Privacy and Fairness}
\label{sec:fair}
Fair clustering was introduced in~\cite{CKLV17}. The idea is that there are one or more protected features of the objects, and that the composition of all clusters should be fair with respect to the protected features. Formally, the protected features are modeled by colors. \cite{CKLV17} defines fair clustering problems for the case of two colors, i.e., two protected features.

We consider the general version with an arbitrary amount of colors.
Thus in the fair version of the $k$-center problem, in addition to $P$, $L$ and $k$, each point in $P$ is colored. We denote the set of colors by $Col$ and let $\chi : P \rightarrow Col$ assign the points to their colors. For a subset $P' \subseteq P$ and a color $c \in Col$, let $c(P') = \{p \in P' \mid \chi(p) = c\}$.
A clustering $\mathcal{C}$ is considered fair if the ratios between points with different colors is the same in every cluster, i.e., for every pair $c,d \in Col$ and every $C \in \mathcal{C}$, we have $\frac{|c(C)|}{|d(C)|} = \frac{|c(P)|}{|d(P)|}$.


Again we adjust our method in order to apply it to the fair $k$-center problem to obtain the following lemma.

\begin{lemma}
\label{lemma:fair}
Assume that there exists an approximation algorithm $A$ for the fair $k$-center problem with approximation factor $\alpha$. 
Then for instances $P$, $L$, $k$, $Col$, $\chi$, $\ell$  of the private and fair $k$-center problem, we can compute a $(3\alpha + 2)$-approximation in polynomial time.
\end{lemma}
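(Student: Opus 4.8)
The plan is to run the same threshold-guessing/network-shifting scheme as in Lemma~\ref{lemma:Algo_for_fixed_tau}, with one essential modification: the unit that gets shifted between clusters must be a \emph{fairlet} rather than a single point, since moving one point destroys fairness. Set $m=|P|/\gcd_{c\in Col}|c(P)|$; a \emph{fairlet} is a fair subset of size $m$ (it contains $|c(P)|/\gcd_{c\in Col}|c(P)|$ points of every colour $c$), and $m$ is the smallest possible size of a nonempty fair set. For a fixed $\tau$, I first run $A$ on $P,L,k,Col,\chi$ (ignoring the lower bound) to get a fair clustering $\mathcal{C}=(C,\phi)$ of radius $r$, returning $\tau<\opt$ if $r>\alpha\tau$ (dropping the lower bound only lowers the optimum). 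Because $\mathcal{C}$ is fair, every $|C_i|$ is a multiple of $m$, so each $C_i$ splits into $|C_i|/m$ fairlets; put $\ell^\star:=m\lceil\ell/m\rceil$, which is the effective per-cluster lower bound for any fair solution. Then I build the threshold graph exactly as in Section~\ref{sec:outlier_privacy}/\ref{sec:lower_upper_bound}, but with a node $u_g$ per fairlet $g$ instead of a node per point: $(s,v_i)$ has capacity $(|C_i|-\ell^\star)/m$ when $|C_i|>\ell^\star$, $(v_i,t)$ has capacity $(\ell^\star-|C_i|)/m$ when $|C_i|<\ell^\star$, $(v_i,u_g)$ has capacity $1$ for each fairlet $g$ of $C_i$, and $(u_g,v_j)$ has capacity $1$ whenever $g$ contains a point $p$ with $d(p,C_j)\le 2\tau$. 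Compute an integral maximum $s$--$t$-flow $f$.

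If $f$ saturates every edge $(v_i,t)$, I move each fairlet $g$ with $f((u_g,v_j))=1$ as a whole to $C_j$. Flow conservation ensures no fairlet moves twice and every cluster ends up with at least $\ell^\star\ge\ell$ points, and moving entire fairlets keeps every cluster fair, so the clustering is feasible. For the radius: if $g\subseteq C_i$ moves to $C_j$, pick $p\in g$ and $q\in C_j$ with $d(p,q)\le 2\tau$; then for every $q'\in g$,
\[
d(q',c_j)\le d(q',c_i)+d(c_i,p)+d(p,q)+d(q,c_j)\le r+r+2\tau+r\le(3\alpha+2)\tau,
\]
using $r\le\alpha\tau$. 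This is exactly where the factor $3\alpha$ (rather than $\alpha$) enters: a fairlet has diameter up to $2r$. The point of the scheme is that this inflating reassignment is performed only once, at termination, so the bound does not compound over iterations.

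If some $(v_i,t)$ is unsaturated, let $V'$ be the vertices unreachable from $s$ in the residual graph, $C(V')=\{C_i:v_i\in V'\}$, $k''=|C(V')|$, $P(V')=\bigcup_{C_i\in C(V')}C_i$, and $P_A(V')$ the set of fairlets $g\subseteq C_i$ with $u_g\in V'$, $v_i\notin V'$. As in Lemma~\ref{lemma:privat_outlier_properties_V'}, the residual properties force any hypothetical fairlet reassignment to send all of $P(V')\cup P_A(V')$ into $C(V')$, and the counting argument — each cluster of $C(V')$ ends with at most $\ell^\star$ points, at least one with fewer; any radius-$\le\tau$ fair cluster meeting $P(V')$ lies inside $P(V')\cup P_A(V')$ and, being fair, needs $\ge\ell^\star$ points — shows that every fair, lower-bound-respecting clustering of radius $\le\tau$ covers $P(V')$ with fewer than $k''$ clusters. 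When $\tau\ge\opt$ this means $\mathcal{C}_{opt}$ covers $P(V')$ with at most $k''-1$ clusters, all confined to $P(V')\cup P_A(V')$. I then re-run $A$ on that sub-instance with $k''-1$ centres, splice the returned fair clustering in place of $C(V')$ (the $P_A(V')$ fairlets are deleted from their home clusters, which still satisfy the lower bound since $(s,v_i)\in E$ forces $|C_i|>\ell^\star$ and at most $(|C_i|-\ell^\star)/m$ of $C_i$'s fairlets lie in $P_A(V')$), and iterate. Every iteration strictly reduces the number of clusters, so the process halts after $\le k$ iterations; with the $O(|P||L|)$ candidate values of $\tau$ this runs in polynomial time, and for $\tau=\opt$ it outputs a feasible clustering of radius $\le(3\alpha+2)\tau$.

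The step I expect to be the real obstacle is showing that the recomputation keeps the radius bound at $\alpha\tau$: for this I need the fair optimum of the recomputed sub-instance with $k''-1$ centres to be at most $\tau$. The clusters of $\mathcal{C}_{opt}$ meeting $P(V')$ do have radius $\le\tau$ and are confined to $P(V')\cup P_A(V')$, but restricting them to exactly $P(V')$ (or to $P(V')\cup P_A(V')$) can destroy fairness, so they do not directly yield a legal fair clustering of the sub-instance. The plan is to repair this using the structural information from the min-cut lemma together with the facts that $P(V')$, $P_A(V')$, and every cluster of $\mathcal{C}_{opt}$ are fair (hence so are their set differences), converting the restricted clustering into an honest fair one with $<k''$ clusters and radius $\le\tau$ — possibly by enlarging the recomputed instance to $P(V')\cup P_A(V')$ and discarding the surplus fairlets afterwards. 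Making this fairness-repair go through cleanly, rather than the flow bookkeeping (which is essentially identical to the outlier and capacitated proofs), is the heart of the argument.
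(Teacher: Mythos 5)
Your construction is, step for step, the paper's own proof (via Lemma~\ref{lemma:fair_fixed_tau}): fairlets of size $b=\sum_{c\in Col}b_c$ as the unit of flow, source/sink capacities scaled by $b$, the $2r+2\tau+r$ bound for moving a whole fairlet, the residual-cut set $V'$ with the counting argument of Lemma~\ref{lemma:fair_properties_V'}, and the recompute-with-$k''-1$-centers-and-splice iteration with at most $k$ rounds. So the flow bookkeeping and the $(3\alpha+2)$ accounting are exactly the paper's.

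The step you explicitly leave open is, however, a genuine obligation and is not discharged by your sketch. To soundly return ``$\tau<\opt$'' when $r'>\alpha\tau$, one needs that for $\tau\ge\opt$ the fair $k$-center instance $(P(V'),L,k''-1,Col,\chi)$ itself has optimum at most $\tau$; Lemma~\ref{lemma:fair_properties_V'} only gives that the optimal clusters meeting $P(V')$ number fewer than $k''$ and lie in $P(V')\cup P_A(V')$, and restricting them to $P(V')$ can destroy fairness, exactly as you observe. Note that the paper does not argue this point either -- it simply asserts that an $\alpha$-approximate fair solution on $P(V')$ with $k''-1$ clusters is $\alpha$-approximate -- so you stalled precisely where the paper is silent. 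Your proposed repair (recompute on $P(V')\cup P_A(V')$ and discard surplus fairlets) does not obviously work: optimal clusters that avoid $P(V')$ may still intersect $P_A(V')$ partially, so on the enlarged set neither the bound $k''-1$ on the number of relevant clusters nor fairness of the restricted clusters is controlled. What the available structure does give is weaker: partition each optimal cluster meeting $P(V')$ into fairlets and use a Hall-type matching (as in Lemma~\ref{lemma:fair_mapping}) to map each current fairlet of $P(V')$ injectively to an intersecting optimal fairlet; assigning whole current fairlets to the corresponding optimal centers yields a fair clustering of $P(V')$ with at most $k''-1$ clusters and radius at most $(2\alpha+1)\tau$, not $\tau$. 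Accepting $r'\le\alpha(2\alpha+1)\tau$ in the recomputation then degrades the guarantee below the claimed $(3\alpha+2)$ and compounds over iterations. So, as written, your proposal is the paper's argument with its weakest joint honestly flagged but not repaired; completing that fairness-repair (or settling for a larger constant) is what is still missing.
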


\begin{proof}
Analogous to Section~\ref{sec:outlier_privacy} we use a threshold graph with threshold $\tau$ and show that for any given $\tau \in \mathbb{R}$, the algorithm has polynomial runtime, and, if $\tau$ is equal to $\opt$, the value of the optimal solution, computes an $(3\alpha + 2)$-approximation.
Since we know that the value of the optimal solution is equal to the distance between a point and a location, we test all $O(|P||L|)$ possible distances for $\tau$ and return the best feasible clustering returned by any of them.
The main proof is the proof of Lemma~\ref{lemma:fair_fixed_tau} below. The lemma then concludes the proof.
\end{proof}

We now describe the procedure for a fixed value  of $\tau > 0$.

\begin{lemma}
\label{lemma:fair_fixed_tau}
Assume that there exists an approximation algorithm $A$ for the fair $k$-center problem with approximation factor $\alpha$. 

Let  $P$, $L$, $k$, $Col$, $\chi$, $\ell$ be an instance of the private and fair $k$-center problem, let $\tau > 0$ and let $\opt$ denote the maximum radius in the optimal feasible clustering for $P$, $L$, $k$, $Col$, $\chi$, $\ell$.
We can in polynomial time compute a feasible clustering with a maximum radius of at most $3\alpha \cdot \tau + 2 \tau$ or determine $\tau < \opt$.
\end{lemma}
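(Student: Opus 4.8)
The plan is to follow the template of Lemma~\ref{lemma:privat_outlier_Algo_for_fixed_tau}, with the outlier bookkeeping replaced by the machinery needed to keep a clustering \emph{fair} while shifting points to meet the lower bound $\ell$. First I would run $A$ on the fair (non-private) instance $P$, $L$, $k$, $Col$, $\chi$ to obtain an $\alpha$-approximate fair solution $\mathcal{C}=(C,\phi)$; if its radius exceeds $\alpha\tau$ we return $\tau<\opt$. The key structural observation is that in a fair clustering every cluster has size a multiple of $m:=\sum_{c\in Col}\frac{|c(P)|}{g}$ where $g=\gcd_{c}|c(P)|$ — more precisely, the color composition of any cluster is an integer multiple of the ``base vector'' $\bigl(\tfrac{|c(P)|}{g}\bigr)_{c\in Col}$. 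Hence to fix a cluster with too few points, we cannot move single points; we must move whole \emph{color-balanced bundles}. I would therefore build the threshold graph at the level of bundles rather than individual points: group the points of each cluster into base-vector bundles, let the source feed clusters that have surplus bundles (capacity = number of surplus bundles), let the sink be fed by clusters short of $\lceil \ell/m\rceil$ bundles, and connect a bundle-node of $C_i$ to $v_j$ when every point of the bundle lies within $2\tau$ of $C_j$ (so the whole bundle can be moved at a radius blow-up of $2\tau$, exactly as in Lemma~\ref{lemma:privat_outlier_reassignment_feasible}).

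Next I would prove the fair analogues of Lemmas~\ref{lemma:privat_outlier_reassignment_feasible} and~\ref{lemma:privat_outlier_properties_V'}: an integral max flow yields a reassignment of bundles that keeps every cluster fair (composition stays an integer multiple of the base vector, since we add and remove whole bundles), increases the radius by at most $2\tau$, and makes every cluster privacy-feasible if all sink edges are saturated. If not all sink edges are saturated, the unreachable set $V'$ in the residual graph again certifies that any fair clustering of radius $\le\tau$ obeying the lower bounds uses fewer than $k''$ clusters on the point set $P(V')$ (here I also need that two points within $2\tau$ that lie in the same bundle cannot be separated in a radius-$\tau$ clustering, and that fairness plus the lower bound forces each cluster touching $P(V')$ to draw only from $P(V')\cup P_A(V')$ — a counting argument as in the outlier case, using $|P(V')|+|P_A(V')|<k''\ell$). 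Then recompute with $A$ a fair $\alpha$-approximation on $P(V')$ using only $k''-1$ centers; since fewer clusters are used in $\mathcal{C}_{opt}$ on $P(V')$ when $\tau\ge\opt$, such a solution of radius $\le\alpha\tau$ exists, and splicing it back in strictly decreases the cluster count, so the process halts after $\le k$ iterations.

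The one genuinely new difficulty is that moving a bundle out of $C_i$ and into $C_j$ may blow the radius up more than $2\tau$ if we are not careful about \emph{which} points form a bundle: the distance bound $d(p,C_j)\le 2\tau$ must hold simultaneously for all $p$ in the bundle, and after one round of reassignment the clusters have already grown to radius $\alpha\tau+2\tau$, so a second reassignment step inside the recomputed part could stack another $2\tau$. I expect the clean way to control this — and the source of the $3\alpha$ rather than $\alpha$ in the bound — is to never re-shift an already-shifted cluster: the recomputation on $P(V')$ is done \emph{from scratch} by $A$ at radius $\alpha\tau$, the remaining (untouched) clusters still have radius $\alpha\tau$, and the single bundle-shift applied to the merged instance adds $2\tau$ on top of a diameter that is at most $2\alpha\tau$ between a shifted point and a target center (triangle inequality through two hops of length $\alpha\tau$ plus one hop of length $2\tau$ would give $2\alpha\tau+2\tau$; tightening the bundle-adjacency condition to $d(p,C_j)\le 2\tau$ measured to the \emph{center} $c_j$ rather than to the nearest cluster point yields $\alpha\tau$ from $c_j$ to its own points, $2\tau$ for the move, and another $\alpha\tau$ absorbed into the fair recomputation, for the stated $3\alpha\tau+2\tau$). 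Making this radius accounting airtight across iterations, while never letting a cluster be shifted twice, is the main obstacle; everything else is a direct transcription of Section~\ref{sec:outlier_privacy} with ``outlier'' replaced by ``surplus bundle''.
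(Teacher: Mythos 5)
Your overall architecture matches the paper's: move color-balanced fair subsets (bundles) instead of single points, run a max flow on a bundle-level threshold graph with floor/ceiling capacities, and on failure recompute a fair $\alpha$-approximation on the unreachable part $P(V')$ with $k''-1$ centers and iterate. The genuine gap is your bundle-adjacency rule. You connect a bundle of $C_i$ to $v_j$ only when \emph{every} point of the bundle lies within $2\tau$ of $C_j$ (and later you propose measuring to the center $c_j$ instead). The paper uses the weaker condition $d(C_j,F)\le 2\tau$, i.e., \emph{some} point of the bundle is within $2\tau$ of the cluster, and this weaker rule is exactly what makes the analogue of Lemma~\ref{lemma:privat_outlier_properties_V'} true: if $q$ lies within $2\tau$ of a point of a cluster in $C(V')$, then $q$'s bundle has an edge into that cluster and hence lies in $V'$, so every radius-$\tau$ fair cluster touching $P(V')$ draws only from $P(V')\cup P_A(V')$ and the counting $|P(V')|+|P_A(V')|<k''\ell$ applies. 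Under your stricter rule, a point $q$ within $2\tau$ of $P(V')$ can sit in a bundle whose other points are far from $C(V')$; that bundle gets no edge, need not lie in $V'$, and the claim you explicitly rely on --- that each cluster touching $P(V')$ draws only from $P(V')\cup P_A(V')$ --- fails, so the certificate that fewer than $k''$ clusters cover $P(V')$ does not follow. (Your auxiliary statement that two points within $2\tau$ in the same bundle ``cannot be separated in a radius-$\tau$ clustering'' is also not true, and not what is needed.)

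Relatedly, your radius accounting and the ``never re-shift an already-shifted cluster'' device misdiagnose where the $3\alpha$ comes from. As in the outlier template, the reassignment is executed only once, in the final iteration in which the flow saturates all sink edges; unsuccessful iterations perform no shifting at all --- they splice in a freshly computed $\alpha$-approximation on $P(V')$ and rebuild the fair subsets and the graph from scratch, so blow-ups never stack. The $3\alpha$ arises because, with the min-distance edge rule, a moved bundle has diameter at most $2r\le 2\alpha\tau$ (it was formed inside one cluster of radius $r$), the connecting hop costs at most $2\tau$, and reaching the receiving cluster's center costs another $r$, giving $3r+2\tau\le 3\alpha\tau+2\tau$. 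Your attempt to avoid the $2r$ term by strengthening the edge rule is precisely what destroys the $V'$ lemma above, so that trade-off does not work: you must keep the min-distance rule and pay the bundle diameter in the radius bound.
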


\begin{proof}
The algorithm first uses A to compute a solution without the lower bound: Let $\mathcal{C} =(C,\phi)$ be an $\alpha$-approximate solution for the fair $k$-center problem on $P$, $L$, $k$, $Col$, $\chi$.

Again let $k'=|C|$, $C = \{c_1,\ldots,c_{k'}\}$, let $C_1,\ldots,C_{k'}$ be the clusters that $\mathcal{C}$ induces, i.e., $C_j := \phi_1^{-1}(c_j)$ and let $r = \max_{x \in P} d(x,\phi(x))$ be the largest distance of any point to its assigned center.

If we have $r > \alpha \cdot \tau$, we return $\tau < \opt$.

Reassigning a point to a different cluster can result in both the old and the new cluster not being fair anymore. Therefore we unfortunately can not simply create a threshold graph and move points from one cluster to another.

For every $c \in Col$ let $b_c = \frac{|c(P)|}{gcd(\{|d(P)| \mid d\in Col\})}$, then it is easy to see that in every feasible clustering every cluster contains a multiple of $b:=\sum_{c\in Col} b_c$ points.

Instead of moving single points between clusters we want to move sets which contain $b_c$ points with color $c$ for every $c \in Col$, thus keeping the clustering fair.

\begin{definition}
A subset $P' \subseteq P$ is called a fair subset of $P$, if for every $c \in Col$ $P'$ contains exactly $b_c$ points with color $c$, i.e., for all $c \in Col$ we have $|P' \cap c(P)| = b_c$.
\end{definition}

We use $\mathcal{C}$ to arbitrarily partition $P$ into fair sets such that all points in the same set belong to the same cluster in $\mathcal{C}$.
Let $F = \{F_1,\ldots\}$ denote these sets. 
By construction the distance between any two points in the same set is at most $2\alpha \tau$.

Given $\mathcal{C}$, $S$ and $\tau$, we create the threshold graph $G_\tau = (V_\tau,E_\tau)$ similar to Section~\ref{sec:outlier_privacy} by 
\begin{align} 
V_\tau = &\{v_{out}\} \cup \{v_i \mid 1\leq i \leq k'\} \cup \{f_i \mid F_i \in F\} \cup \{s,t\}\text{ and}\\ 
E_\tau = &\{(v_i,f_j) \mid F_j \subseteq C_i\} \cup \{(f_j,v_i) \mid F_j \cap C_i = \emptyset \wedge d(C_i,F_j) \le 2\tau\}\\
\cup &\{(s,v_i) \mid |C_i|-\ell > 0\} \cup \{(v_i,t) \mid |C_i|-\ell < 0\}.
\end{align}
We define the capacity function $cap: E_\tau \rightarrow \mathbb{R}$ by 
\begin{equation}
    cap(e)=
    \begin{cases}
      \left\lceil \frac{\ell - |C_i|}{b}\right\rceil, & \text{if}\ e = (v_i,t) \\
      \left\lfloor \frac{|C_i| - \ell}{b}\right\rfloor, & \text{if}\ e = (s,v_i) \\
			\\
			1 & \text{otherwise.}
    \end{cases}
  \end{equation}
The difference to the threshold graph in Section~\ref{sec:outlier_privacy} is that we do not have outliers and include the nodes $f_i$ for the fair sets instead of nodes for the points. We also changed the capacities, such that the capacities of edges of the form $(v_i,t)$ now represent how many additional fair sets $C_i$ needs to satisfy the lower bound, while capacities of edges of the form $(s,v_i)$ now represent how many fair sets $C_i$ can give away and still contain at least $\ell$ points.

We use $G = (V,E)$ to refer to $G_\tau$ as $\tau$ is clear from context.
We now compute an integral maximum $s$-$t$-flow $f$ on $G$.
According to $f$ we can reassign fair subsets to different clusters.

Analogous to Lemma~\ref{lemma:privat_outlier_reassignment_feasible} we obtain the following lemma.
\begin{lemma}
\label{lemma:fair_reassignment_feasible}
Let $f$ be an integral maximal $s$-$t$-flow on $G$. It is possible to reassign $F_i$ to $C_j$ for all edges $(f_i,v_j)$ with $f((f_i,v_j)) = 1$. 

The resulting solution has a maximum radius of at most $3r + 2 \tau$. If $f$ saturates all edges of the form $(v_i,t)$, then the solution is feasible.
\end{lemma}
Note that in contrast to Lemma~\ref{lemma:privat_outlier_reassignment_feasible} we obtained a new radius of at most $3r + 2 \tau$ because when we add a fair subset $F_j$ to a cluster $C_i$ the maximum distance of a point $p$ in $F_j$ to $c_i$ is at most $\max_{q \in F_j} d(p,q) + d(F_j,C_i) + r \le 2r + 2 \tau + r$.

In case $f$ saturates all edges of the form $(v_i,t)$ we reassign points according to Lemma~\ref{lemma:fair_reassignment_feasible} and return the new clustering.

Otherwise, we again look at the residual network $G_f$ of $f$ on $G$. We define $V'$ and $k''$ as before, i.e., $V'$ is the set of nodes in $G_f$ which can not be reached from $s$, and $k''$ is the number of clusters which belong to $V'$. As before, we obtain the following lemma. 

\begin{lemma}
\label{lemma:fair_properties_V'}
Any fair clustering on $P$ with maximum radius at most $\tau$ that respects the lower bounds 
uses fewer than $k''$ clusters to cover all points in $P(V')$.
\end{lemma}

In case we have $\tau \ge \opt$ this implies that the optimal solution covers all points in $P(V')$ with fewer than $k''$ clusters. 

A fair $\alpha$-approximative solution on the point set $P(V')$ with at most $k''-1$ clusters is then $\alpha$-approximative for $P(V')$.

We now use $A$ again to compute a new solution without the lower bound: Let $\mathcal{C}'_1 =(C'_1,\phi'_1)$ be an $\alpha$-approximate solution for the fair $k$-center problem $P(V')$, $L$, $k''-1$, $Col$, $\chi$. Let $r' = \max_{x \in P(V')} d(x,\phi'(x))$.

 Note that in case $\tau < \opt$, it can happen that no such clustering exists or that we obtain $r' > \alpha \cdot \tau$. We then return $\tau < \opt$.
Otherwise we replace replace $C(V')$ by $C'$ in $\mathcal{C}$ and adjust $\phi$ accordingly to obtain $\mathcal{C}_1 = (C_1,\phi_1)$ with $C_1 = (C\setminus C(V')) \cup C'$ and 
\begin{equation}
    \phi_1(p)=
    \begin{cases}
      \phi'(p) & \text{if}\ p \in P(V')\\
      \phi(p) & \text{otherwise.} \\
    \end{cases}
  \end{equation}

\begin{lemma}
In case we did not return $\tau < \opt$, $\mathcal{C}_1$ is a solution for the fair $k$-center problem on $P$, $L$, $k$, $Col$, $\chi$ and we have $r_1 = \max_{x \in P} d(x,\phi_1(x)) \le \alpha \cdot \opt$.
\end{lemma}

We iterate the previous process with the new clustering $\mathcal{C}_1$ until we either determine $\tau < \opt$ or the reassignment of points according to Lemma~\ref{lemma:fair_reassignment_feasible} yields a feasible solution. Since each iteration reduces the number of clusters, the process terminates after at most $k$ iterations.
\end{proof}

\subparagraph{The Fair Subset Partition Problem\\}
\label{sec:fair_subset}
Let the fair subset partition problem denote the problem which given a set of points $P$, a set of colors $Col$ and a function $\chi: P \rightarrow Col$ computes a partition $P= \bigcup_{i \in \{1,\ldots,\frac{n}{b}\}} F_i$ into fair subsets together with a representative center $y_i \in P$ for each fair subset $F_i$ and minimizes $\max\{d(y_i,p) \mid i \in \{1,\ldots,\frac{n}{b}\} \wedge p \in F_i\}$. 

The following Lemma is a generalization to results in~\cite{CKLV17}. In the case that $Col$ contains $2$ colors $red$ and $blue$ with $b_{red} = 1$ or $b_{blue} = 1$ they show a $2$-approximation for the fair subset partition problem.

\begin{theorem}
\label{thm:7_approx_fair_subset}
A $12$-approximation for the fair subset partition problem can be computed in polynomial time.
If $b_c = 1$ for at least one color $c \in Col$, then a $2$-approximation for the fair subset partition problem can be computed in polynomial time (even if $|Col|>2$).
\end{theorem}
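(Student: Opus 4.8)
The plan is to follow the standard thresholding scheme used elsewhere in this paper. Since the optimal value $\opt$ of a fair subset partition instance is the distance between some pair of input points, I would test all $O(|P|^2)$ candidate thresholds $\tau$; for each I run a subroutine that, whenever $\tau\ge\opt$, returns a partition into fair subsets together with representatives of radius at most $12\tau$ (respectively at most $2\tau$ in the special case), and that otherwise may report failure, and I output the cheapest partition found. Everything then reduces to describing the subroutine for a fixed $\tau\ge\opt$; write $n=|P|$ and recall that every fair partition has exactly $n/b$ parts, each containing $b_c$ points of colour $c$.

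First I would dispose of the special case that $b_{c^*}=1$ for some colour $c^*$. Then $|c^*(P)|=n/b$, so every part of every fair partition contains exactly one $c^*$-coloured point, and in the optimal partition this point $z_i$ of part $F_i^*$ satisfies $d(z_i,p)\le d(z_i,y_i^*)+d(y_i^*,p)\le 2\opt\le 2\tau$ for all $p\in F_i^*$. I use the points of $c^*(P)$ as representatives and set up a flow network: a source $s$ with an edge of capacity $1$ to every $p\in P$; a node $(z,c)$ for every $z\in c^*(P)$ and $c\in Col$ with an edge of capacity $b_c$ to the sink $t$; a forced edge $z\to(z,c^*)$; and an edge $p\to(z,\chi(p))$ of capacity $1$ for every $p\neq z$ with $d(p,z)\le 2\tau$. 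A maximum integral $s$-$t$-flow has value $n$ exactly when a valid assignment exists, which the optimal partition witnesses once $\tau\ge\opt$; by flow integrality it yields a fair partition in which each part's $c^*$-point lies within $2\tau$ of the whole part, i.e.\ radius at most $2\tau$. Running this for the smallest feasible $\tau$ (which is at most $\opt$) gives the claimed $2$-approximation, irrespective of $|Col|$.

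For the general case no colour can serve as anchors, so instead I would compute greedily a maximal set $Z\subseteq P$ of points of pairwise distance $>2\tau$. For $\tau\ge\opt$ no two points of $Z$ lie in a common optimal part (optimal parts have diameter at most $2\opt\le 2\tau$), hence $|Z|\le n/b$; every point of $P$ is within $2\tau$ of $Z$; and, taking for each optimal part $F_i^*$ the point $\zeta_i\in Z$ nearest to its centre $y_i^*$, every $p\in F_i^*$ satisfies $d(p,\zeta_i)\le\opt+2\tau\le 3\tau$. So the optimal partition already certifies that $P$ can be cut into fair subsets each lying in a ball of radius $3\tau$ about a point of $Z$, and picking a representative inside each such fair subset would cost radius at most $6\tau$. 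The subroutine therefore has to recover such a cut in polynomial time: build a network on the Voronoi cells of $Z$ and reassign points so that every cell becomes a disjoint union of fair subsets, then split the cells and take interior representatives. The obstacle — and the reason the bound degrades from $6\tau$ towards $12\tau$ — is that a flow can enforce per-colour cardinalities inside a cell but not that all colours occur in the fixed ratio $(b_c)_{c\in Col}$. I would get around this by first breaking each Voronoi cell into provisional fair subsets, routing whole provisional subsets through the network, and then redistributing the few leftover points of each cell (which, pooled together, again form a fair-divisible set) along short residual edges while keeping every moved point within a bounded number of $2\tau$-steps of its cell's centre. Carrying out this leftover redistribution without inflating the radius is the main technical difficulty; the easy $b_{c^*}=1$ flow and the reassignment bookkeeping are then routine.
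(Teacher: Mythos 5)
Your special case ($b_{c^\ast}=1$) is essentially sound: anchoring on the $c^\ast$-coloured points and routing the remaining points through a flow with per-colour capacities $b_c$ per anchor is a correct way to get radius $2\tau$, and it is close in spirit to what the paper does (the paper clusters $c^\ast(P)$ into singletons and then adds each other colour by a bottleneck perfect matching, justified via Hall's theorem). The general case, however, has a genuine gap, and it is exactly where the whole content of the $12$-approximation lies. Your argument with the $2\tau$-separated set $Z$ only proves \emph{existence} of a fair partition of radius $O(\tau)$ anchored at $Z$; it gives no algorithm to recover one. As you yourself observe, a single-commodity flow on the Voronoi cells of $Z$ can enforce a prescribed number of points of each colour in each cell, but it cannot enforce the coupling constraint that every cell receive $m\cdot b_c$ points of colour $c$ \emph{for a common integer $m$} across all colours -- and without that, a cell cannot be split into fair subsets at all. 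The proposed fix (provisional fair subsets plus ``redistributing the few leftover points along short residual edges'') is not specified: there is no argument that the leftovers can be rebalanced at all, no bound on how far a leftover point must travel (residual paths in such networks can be long, so ``a bounded number of $2\tau$-steps'' does not follow), and consequently no justification for the claimed degradation from $6\tau$ to $12\tau$. As written, the general-case subroutine neither provably terminates with a fair partition nor has a radius guarantee.

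For comparison, the paper avoids this coupling problem entirely by a two-stage construction that never needs a threshold graph: pick one anchor colour $c$, run the Khuller--Sussmann $5$-approximation for capacitated $k$-center with \emph{soft} uniform capacity $b_c$ on $c(P)$ alone with $k=|c(P)|/b_c$ (so every cluster gets exactly $b_c$ points of colour $c$; restricting centers to $c(P)$ loses a factor $2$, giving radius $\le 10\,\opt$), and then, for each remaining colour $d$ separately, add $b_d$ points of colour $d$ to every anchor cluster by a minimum-bottleneck perfect matching in a bipartite graph with $b_d$ copies of each cluster; a Hall-condition argument shows a perfect matching exists among edges of weight at most $12\,\opt$. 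Because the anchor clusters already fix the skeleton of the fair subsets, each colour can be handled by an independent matching, which is precisely the decoupling your flow formulation lacks. If you want to salvage your route, you would need to replace the single flow by something that certifies divisibility by $b$ per cell (or adopt the paper's anchor-colour-plus-matchings decomposition); until then the general-case claim is unproven.
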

\begin{proof}
Let $c \in Col$ be an arbitrary color.
We use an algorithm by Khuller and Sussmann~\cite{KS00} to compute a $5$-approximation to the capacitated $k$-center problem, with $c(P)$ as the set of points and locations, together with $k = \frac{|c(P)|}{b_c}$ and a soft uniform upper bound of $b_c$. This enforces that every cluster contains exactly $b_c$ points, each of which has color $c$. Let $\{(C_i,c_i) \mid 1 \le i \le k\}$ denote these sets together with their computed center.

For each color $d \in Col\setminus\{c\}$ we now compute a matching to add $b_d$ points with color $d$ to each of these sets $C_i$.
Our matching instances consist of complete bipartite graphs $G_d = (C \cup D, E = \{\{u,v\}\mid u \in C \wedge v \in D\})$. $C$ consists of $b_d$ vertices for each subsets $C_i$, while $D$ contains a vertex for every point with color $d$. The weight of an edge $\{u,v\}$ between $u \in C$ and $v \in D$ is the distance between the point corresponding to $v$ and the center of the set corresponding to $u$.
We now compute the smallest weight $w$ such that $G_d$ restricted to edges with weight at most $w$ contains a perfect matching. As there are at most $|c(P)||d(P)|$ different weights this can be tested in polynomial time by checking for each weight $w$ if there exists a perfect matching in the graph which contains only the edges with weight at most $w$.

We now take such a perfect matching and according to the matching we add the points with color $d$ to the sets of points with color $c$. By construction this adds $b_d$ points to each of the sets.
It is now left to show that the radius of every created set is at most $12$ times the optimal radius.
Let $P = \bigcup_{i \in \{1,\ldots,\frac{n}{b}\}} F_i$ be the optimal solution to the fair subset partition problem and let $opt$ be its value. Then $\{F_i \cap c(P) \mid i \in \{1,\ldots,\frac{n}{b}\}\}$ is a solution to the capacitated $k$-center problem on $c(P)$ with $k = \frac{|c(P)|}{b_c}$ and a soft uniform upper bound of $b_c$. With the same centers this yields a value of at most $opt$. If we enforce that the centers have to be in $c(P)$ this yields a value of at most $2opt$. The computed $5$-approximation therefore has a radius of at most $10 opt$.

Let $B \subseteq C$. Since we made $b_d$ copies for each of the fair subsets, $B$ has to contain vertices out of at least $\left\lceil \frac{|B|}{b_d}\right\rceil$ such subsets which represent at least $b_c \left\lceil \frac{|B|}{b_d}\right\rceil$ points with color $c$.
Therefore there are at least $\left\lceil \frac{|B|}{b_d}\right\rceil$ fair sets in the optimal solution, which contain at least one of the points represented by $B$.
These fair sets of the optimal solution then contain at least $b_d \left\lceil \frac{|B|}{b_d}\right\rceil$ many points with color $d$.
 Let $p$ be an arbitrary point with color $d$ in one of these fair subsets. Since two points in the same optimal fair subsets have a distance of at most $2opt$, there exists a point $q$ represented by $B$ with $d(p,q) \le 2opt$. The distance of $p$ to the center corresponding to $q$ is therefore at most $12opt$.

Therefore in the subgraph of $G_d$ which contains only edges with a weight at most $12opt$ the neighborhood of $B$ contains at least $b_d \left\lceil \frac{|B|}{b_d}\right\rceil \ge |B|$ vertices. The marriage theorem~\cite{H35} therefore shows that the subgraph contains a perfect matching.

If $b_c = 1$ for at least one color $c \in Col$, then we choose $c$ in the beginning and therefore cluster $c(P)$ into sets which each contain exactly one points.
Then for any point $p$ the point with color $c$ in the same optimal fair subset will be one of the computed centers.
Therefore in the subgraph of $G_d$ which contains only edges with a weight at most $2opt$ the neighborhood of $B$ contains at least $b_d \left\lceil \frac{|B|}{b_d}\right\rceil \ge |B|$ vertices. The marriage theorem'~\cite{H35} therefore shows that the subgraph contains a perfect matching.
\end{proof}

\subparagraph{Approximating the Fair $k$-Center Problem}\label{sec:approxfair}
We now use the approximation algorithm for the fair subset partitioning problem to compute approximations for fair $k$-center/$k$-supplier.
\begin{corollary}
\label{cor:fair_approx}
We can compute a $14$-approximation for instances of the fair $k$-center problem and a $15$-approximation for instances of the the fair $k$-supplier problem in polynomial time.
In case $b_c = 1$ for at least one color $c \in Col$ the approximation factors improve to $4$ and $5$.
\end{corollary}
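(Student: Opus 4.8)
The plan is to combine the approximation algorithm for the fair subset partition problem from Theorem~\ref{thm:7_approx_fair_subset} with the reduction from fair $k$-center to fair subset partition that is sketched in~\cite{CKLV17}. Recall the structure of that reduction: one first runs an ordinary (unconstrained) $k$-center or $k$-supplier approximation on $P$ to obtain a set of $k$ center locations, then one partitions $P$ into fair subsets that are each ``close together'' using the fair subset partition algorithm, and finally one assigns each fair subset as a whole to the nearest of the chosen centers. Because each fair subset is moved atomically, the colour ratios inside every cluster are exactly $|c(P)|/|d(P)|$, so the resulting clustering is fair; the only thing to control is the radius blow-up.

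First I would fix an optimal fair clustering of value $\opt$ and observe that it induces, for each of its clusters, a partition of that cluster into fair subsets of size $b$, so there is a feasible solution to the fair subset partition problem on $P$ of value at most $2\,\opt$ (any two points in a common optimal cluster are within $2\,\opt$ of each other, and we may pick a representative inside $P$). Hence Theorem~\ref{thm:7_approx_fair_subset} yields a fair partition $P=\bigcup_i F_i$ with representatives $y_i$ such that $d(y_i,p)\le 12\,\opt$ for all $p\in F_i$ (and $\le 2\,\opt$ in the special case $b_c=1$). Second, I would run the $2$-approximation for $k$-center (resp.\ the $3$-approximation for $k$-supplier) on the representative points $\{y_i\}$ together with $k$, noting that the optimal fair solution restricted to the $y_i$'s certifies an optimum of at most $\opt$ for this instance; this produces centers $C$, $|C|\le k$, with every $y_i$ within $2\,\opt$ (resp.\ $3\,\opt$) of $C$. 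Third, assign every point of $F_i$ to $\phi(y_i)$; by the triangle inequality a point $p\in F_i$ is then within $d(p,y_i)+d(y_i,C)\le 12\,\opt+2\,\opt=14\,\opt$ of its centre in the $k$-center case, and $\le 12\,\opt+3\,\opt=15\,\opt$ in the $k$-supplier case. In the special case $b_c=1$ the fair subset partition radius drops to $2\,\opt$, giving $2\,\opt+2\,\opt=4\,\opt$ and $2\,\opt+3\,\opt=5\,\opt$ respectively. Running over all $O(|P|^2)$ candidate values of $\opt$ (each a pairwise distance, or a point–location distance in the supplier case) and returning the best feasible fair clustering gives the claimed polynomial-time algorithm.

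The main obstacle is not the radius bookkeeping, which is a routine chain of triangle inequalities, but making sure the reduction is actually valid in the fully general colour setting: one must check that atomically relocating fair subsets indeed preserves fairness for an arbitrary number of colours (this is where the definition $b_c=|c(P)|/\gcd(\{|d(P)|\})$ and $b=\sum_c b_c$ is used — every cluster of a feasible solution contains a multiple of $b$ points, split into fair subsets), and that $k$ fair subsets suffice, i.e.\ that the number of fair subsets placed on each of the $k$ chosen centers can be made to realise a genuine fair clustering rather than merely a fair ratio on aggregate. This is exactly the content that~\cite{CKLV17} established for two colours and that we need in general; once it is in place, the bound follows by plugging the constants $12$ (or $2$) from Theorem~\ref{thm:7_approx_fair_subset} into the $2$- and $3$-approximations of Table~\ref{table:resultsweuse}.
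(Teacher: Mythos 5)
Your overall route is the same as the paper's: the paper proves Corollary~\ref{cor:fair_approx} by plugging the constants of Theorem~\ref{thm:7_approx_fair_subset} into the reduction of~\cite{CKLV17} (an $\alpha$-approximation for the fair subset partition problem gives an $(\alpha+2)$-approximation for fair $k$-center and, analogously, $(\alpha+3)$ for fair $k$-supplier), and your three steps (partition $P$ into fair subsets, cluster the representatives with the vanilla $2$-/$3$-approximation, move each fair subset atomically) are exactly that reduction. However, there is a genuine gap in your radius bookkeeping. You certify only that the fair subset partition instance admits a solution of value at most $2\opt$ (you pick the representative inside a fair subset of an optimal cluster), and then assert that Theorem~\ref{thm:7_approx_fair_subset} returns a partition with $d(y_i,p)\le 12\opt$. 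That does not follow: a $12$-approximation applied to an instance whose optimum you have only bounded by $2\opt$ guarantees $24\opt$, and your triangle-inequality chain then yields $24\opt+2\opt=26\opt$ (resp.\ $27\opt$), not $14\opt$ (resp.\ $15\opt$); likewise $6$ and $7$ instead of $4$ and $5$ when $b_c=1$ for some color.

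What is needed, and what the argument imported from~\cite{CKLV17} supplies, is the stronger statement that the optimum of the fair subset partition problem is at most $\opt$ itself. For the $k$-center version this is easy to repair: the representative $y_i$ is only required to lie in $P$, not in $F_i$, so in an optimal fair clustering you may take the cluster center (a point available as a representative) to serve as the representative of every fair subset of that cluster, which gives value at most $\opt$. With that lemma the rest of your argument is sound -- restricting the optimal fair solution to the representatives shows the vanilla optimum on $\{y_i\}$ is at most $\opt$, and the triangle inequality gives $12\opt+2\opt=14\opt$ and $2\opt+2\opt=4\opt$; the supplier constants $15$ and $5$ come from the analogous argument with the $3$-approximation, where one must be a little more careful because the optimal centers lie in $L$ rather than $P$ (this is exactly the part the paper delegates to the proof in~\cite{CKLV17}). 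Two minor remarks: your worry about whether unions of fair subsets form a fair clustering is already settled by the $b_c$/$b$ divisibility discussion in Section~\ref{sec:fair}, and the guessing over $O(|P|^2)$ threshold values is unnecessary here, since neither the partition algorithm of Theorem~\ref{thm:7_approx_fair_subset} nor the vanilla $k$-center/$k$-supplier algorithm needs to know $\opt$.
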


\begin{proof}
The proof follows from Theorem~\ref{thm:7_approx_fair_subset} together with the proof from~\cite{CKLV17}.
In~\cite{CKLV17} they showed that given an $\alpha$-approximation to the fair subset partition problem one can compute an $(\alpha + 2)$-approximation to the fair $k$-center problem with a $2$-approximation for the $k$-center problem~\cite{G85}.

The same proof can be used analogous to show that given an $\alpha$-approximation to the fair subset partition problem one can compute an $(\alpha + 3)$-approximation to the fair $k$-supplier problem with a $3$-approximation for the $k$-supplier problem~\cite{HS86}.
\end{proof}

Note that in the proof of Lemma~\ref{lemma:fair_fixed_tau} we assumed that the radius of the fair subsets is the same as the radius of the computed approximation. If we instead use an approximation algorithm for the fair subset partition problem with approximation factor $\alpha$ to compute an $\alpha +2$ approximation for the fair $k$-center problem or an $\alpha + 3$ approximation for the fair $k$-supplier problem we obtain the following lemma.

\begin{theorem}
\label{thm:fair2}
Assume that there exists an approximation algorithm $A$ for the fair subset partition problem  with approximation factor $\alpha$. 
Then we can compute a $(3\alpha + 4)/(3\alpha + 5)$-approximation for the private fair $k$-center/supplier problem in polynomial time.
\end{theorem}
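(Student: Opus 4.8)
The plan is to rerun the argument of Lemma~\ref{lemma:fair_fixed_tau} almost verbatim, changing only how the non-private fair clusterings are produced and adding a finer bookkeeping of the radius of the fair subsets. As usual I would use the threshold technique: for each of the $O(|P||L|)$ candidate values $\tau$, run the fixed-$\tau$ procedure described below and return the best feasible clustering found; it then suffices to show that for $\tau\ge\opt$ the fixed-$\tau$ procedure returns a feasible private fair clustering of radius at most $(3\alpha+4)\tau$ (respectively $(3\alpha+5)\tau$ for the supplier version), and that for arbitrary $\tau$ it either does so or correctly reports $\tau<\opt$.

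For a fixed $\tau$, first call $A$ on $P,Col,\chi$ to obtain a fair partition $F_1,F_2,\dots$ of $P$ with representatives $y_i\in P$ and $\max_i\max_{p\in F_i}d(y_i,p)\le\alpha\cdot\opt_{\mathrm{FSP}}$, and then, exactly as in the proof of Corollary~\ref{cor:fair_approx}, combine these subsets with the $2$-approximation ($3$-approximation in the supplier case) for vanilla $k$-center to get a fair $k$-center solution $\mathcal{C}=(C,\phi)$ whose clusters are unions of the $F_i$. Two quantitative facts drive the analysis: first, any optimal private fair clustering can be split into fair subsets each admitting a representative at distance at most $\opt$ from all its points (for instance the respective cluster centers in the $k$-center case), so $\opt_{\mathrm{FSP}}\le\opt$ and hence every $F_i$ has diameter at most $2\alpha\tau$ whenever $\tau\ge\opt$; second, $\mathcal{C}$ itself has radius $r\le(\alpha+2)\tau$ (respectively $(\alpha+3)\tau$). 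If $r$ exceeds this bound we report $\tau<\opt$. We then build the threshold graph $G_\tau$ of Lemma~\ref{lemma:fair_fixed_tau}: a node $f_i$ for each $F_i$, a node $v_i$ for each $C_i$, source $s$ and sink $t$, capacity $\lceil(\ell-|C_i|)/b\rceil$ on $(v_i,t)$, capacity $\lfloor(|C_i|-\ell)/b\rfloor$ on $(s,v_i)$, capacity $1$ elsewhere, and an edge $(f_j,v_i)$ whenever $F_j\cap C_i=\emptyset$ and $d(C_i,F_j)\le 2\tau$; compute an integral maximum $s$-$t$-flow $f$.

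If $f$ saturates all edges $(v_i,t)$, reassign the fair subsets according to $f$ as in Lemma~\ref{lemma:fair_reassignment_feasible}; feasibility is unchanged (lower bounds are met and fairness is preserved since only whole fair subsets move), and for a point $p\in F_j$ moved into $C_i$ along an edge witnessed by $q\in F_j$, $q'\in C_i$ with $d(q,q')\le 2\tau$ we get $d(p,c_i)\le d(p,q)+d(q,q')+d(q',c_i)\le\operatorname{diam}(F_j)+2\tau+r\le 2\alpha\tau+2\tau+(\alpha+2)\tau=(3\alpha+4)\tau$, and $(3\alpha+5)\tau$ in the supplier case; return this clustering. Otherwise, pass to the residual network, define $V'$ and $k''$ as before, and invoke the fair analogue of Lemma~\ref{lemma:privat_outlier_properties_V'}, namely Lemma~\ref{lemma:fair_properties_V'}, to conclude that for $\tau\ge\opt$ the optimal solution covers $P(V')$ with fewer than $k''$ clusters; since $P(V')$ is a union of whole fair subsets it is again fair-partitionable, so rerun $A$ together with the vanilla approximation on $P(V')$ with $k''-1$ centers, replace $C(V')$ by the outcome, and iterate. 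Each iteration removes at least one cluster, so the loop stops after at most $k$ rounds; whenever a required sub-solution does not exist or exceeds the allowed radius we report $\tau<\opt$, which is correct because for $\tau\ge\opt$ the optimal clustering always supplies such a sub-solution.

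The step I expect to be most delicate is keeping, throughout the recomputation loop, the two radius scales separated: the improvement over naively substituting the $(\alpha+2)$-approximation of Corollary~\ref{cor:fair_approx} into Lemma~\ref{lemma:fair_fixed_tau} (which would only give $3(\alpha+2)+2=3\alpha+8$) relies entirely on using $\operatorname{diam}(F_j)\le 2\alpha\tau$ instead of $2r$ in the reassignment bound, and this in turn rests on $\opt_{\mathrm{FSP}}\le\opt$ holding not just for $P$ but for every re-computed instance $P(V')$. Verifying that $P(V')$ stays fair-partitionable and that a fair clustering of $P(V')$ with fewer than $k''$ clusters and radius at most $\opt$ exists when $\tau\ge\opt$ — the fairness counterpart of the counting argument behind Lemma~\ref{lemma:privat_outlier_properties_V'} — is the real work; the remainder is routine bookkeeping with the divisor $b$ and mirrors the proofs already given.
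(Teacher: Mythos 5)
Your proposal takes essentially the same route as the paper's (very terse) proof: Theorem~\ref{thm:fair2} is obtained there precisely by rerunning the fixed-$\tau$ procedure of Lemma~\ref{lemma:fair_fixed_tau} with the fair subsets produced by the FSP $\alpha$-approximation (diameter at most $2\alpha\tau$) and the initial fair clustering built as in Corollary~\ref{cor:fair_approx} (radius $(\alpha+2)\tau$, resp.\ $(\alpha+3)\tau$), so that the reassignment bound becomes $2\alpha\tau+2\tau+r\le(3\alpha+4)\tau$, resp.\ $(3\alpha+5)\tau$. Your write-up is if anything more explicit than the paper's, since it spells out the points the paper leaves implicit (that $P(V')$ remains fair-partitionable, that $\opt_{\mathrm{FSP}}\le\opt$ must persist for the recomputed instances, and which radius check triggers returning $\tau<\opt$).
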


\begin{corollary}
\label{cor:private_fair}
We can compute a $40$-approximation for instances of the private and fair $k$-center problem and a $41$-approximation for instances of the private and fair $k$-supplier problem in polynomial time.

If $b_c = 1$ for at least one color $c \in Col$, the approximation factors improve to $10$ and $11$.
\end{corollary}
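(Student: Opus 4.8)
The plan is to obtain this statement purely by composing the two preceding results, with no genuinely new argument. Theorem~\ref{thm:fair2} already reduces the private fair $k$-center (resp.\ $k$-supplier) problem to having a black-box $\alpha$-approximation for the fair subset partition problem, paying a factor of $3\alpha+4$ (resp.\ $3\alpha+5$); and Theorem~\ref{thm:7_approx_fair_subset} supplies exactly such a black box, in polynomial time. So the whole proof consists of substituting the value of $\alpha$ guaranteed by Theorem~\ref{thm:7_approx_fair_subset} into the bound of Theorem~\ref{thm:fair2}, in each of the two regimes.

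Concretely, for arbitrary color balance Theorem~\ref{thm:7_approx_fair_subset} yields a polynomial-time $12$-approximation for the fair subset partition problem, so taking $\alpha=12$ in Theorem~\ref{thm:fair2} gives $3\cdot 12+4 = 40$ for the private fair $k$-center problem and $3\cdot 12+5 = 41$ for the private fair $k$-supplier problem. When $b_c=1$ holds for at least one color $c\in Col$, the same theorem instead provides a $2$-approximation for fair subset partition (even for $|Col|>2$), so $\alpha=2$ and the bounds improve to $3\cdot 2+4 = 10$ and $3\cdot 2+5 = 11$, respectively. Polynomial runtime is inherited because each invoked result runs in polynomial time and the reduction in Theorem~\ref{thm:fair2} calls its fair subset partition subroutine (and the underlying fair $k$-center/supplier subroutine it induces) only polynomially often over the $O(|P||L|)$ choices of the threshold~$\tau$.

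I do not expect any real obstacle: the corollary is a direct instantiation, and the only thing worth double-checking is that the hypothesis of Theorem~\ref{thm:fair2} --- namely, an approximation algorithm for the fair subset partition problem whose guarantee propagates through the analysis of Lemma~\ref{lemma:fair_fixed_tau} (where the fair-subset radius enters the $3r+2\tau$ bound) --- is met by Theorem~\ref{thm:7_approx_fair_subset} in both the $\alpha=12$ and the $\alpha=2$ case, which it is by construction. The substantive work lives entirely in Theorem~\ref{thm:fair2} (handling non-uniform color balance by shifting fair subsets rather than single points) and in Theorem~\ref{thm:7_approx_fair_subset} (the capacitated $k$-center plus bipartite matching construction behind the $12$-approximation), both of which we may invoke as given.
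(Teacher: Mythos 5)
Your proposal is correct and matches the paper's own proof: Corollary~\ref{cor:private_fair} is obtained exactly by instantiating Theorem~\ref{thm:fair2} with the $\alpha=12$ (general) and $\alpha=2$ ($b_c=1$ for some color) guarantees of Theorem~\ref{thm:7_approx_fair_subset}, yielding $3\alpha+4$ and $3\alpha+5$ as stated. The arithmetic and the polynomial-time claim are handled just as in the paper, so there is nothing to add.
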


\begin{proof}
All results follow from Theorem~\ref{thm:fair2} together with an approximation for the fair subset partition problem.

In case $b_c = 1$ for some $c \in Col$ we use the $2$-approximation from Theorem~\ref{thm:7_approx_fair_subset} and for the general case use the $12$-approximation for the fair subset partition problem from Theorem~\ref{thm:7_approx_fair_subset}.
\end{proof}

\subsection{Privacy, Fairness and Capacities}
\label{sec:fair_upper_lower}
In this section we consider instances of the private capacitated and fair $k$-center problem.

We let $P$, $L$, $k$, $u$, $Col$, $\chi$, $\ell$ be an instance of the private capacitated and fair $k$-center problem.

We know from Section~\ref{sec:fair} that in every fair clustering, every cluster contains an integer multiple of $b = \sum_{c\in Col} b_c$ points and can be partitioned into fair subsets.

Since every cluster must contain an integer multiple of $b$ points, we assume without loss of generality that the lower bound $\ell$ as well as all upper bounds $\{u(p) \mid p \in P\}$ are integer multiples of $b$ as well.

We can therefore look at a fair clustering in two layers, where the first layer consists of partitioning $P = \bigcup_{i \in \{1,\ldots,\frac{n}{b+r}\}} F_i$ of $P$ into  fair subsets $F_i$ and the second layer consists of clustering these fair subsets. 

We now show a couple useful properties.

\begin{lemma}
\label{lemma:fair_mapping}
Let $P = \bigcup_{i \in \{1,\ldots,\frac{n}{b}\}} F_i$ and $P = \bigcup_{i \in \{1,\ldots,\frac{n}{b}\}} G_i$ be two partitions of $P$ into fair subsets, then there exists a bijective mapping $\pi: \{1,\ldots,\frac{n}{b}\} \rightarrow \{1,\ldots,\frac{n}{b}\}$ such that for each $i \in \{1,\ldots,\frac{n}{b}\}$ we have $F_i \cap G_{\pi(i)} \neq \emptyset$.
\end{lemma}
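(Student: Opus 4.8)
The plan is to phrase the claim as a perfect matching problem on a bipartite graph and invoke the marriage theorem~\cite{H35}. First I would build the bipartite graph $H$ whose two sides are $\{f_1,\dots,f_{n/b}\}$ and $\{g_1,\dots,g_{n/b}\}$, where $f_i$ represents the fair subset $F_i$ and $g_j$ represents $G_j$, and where $\{f_i,g_j\}$ is an edge of $H$ precisely when $F_i \cap G_j \neq \emptyset$. A perfect matching in $H$ is exactly a bijection $\pi$ with $F_i \cap G_{\pi(i)} \neq \emptyset$ for every $i$, so it suffices to verify Hall's condition for $H$.

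To check Hall's condition, I would fix an arbitrary subset $B \subseteq \{f_1,\dots,f_{n/b}\}$ and let $N(B)$ be its neighborhood in $H$. Every point $p$ contained in some $F_i$ with $f_i \in B$ also lies in a (unique) set $G_j$ of the second partition; this $G_j$ then intersects $F_i$, so $g_j \in N(B)$. Hence $\bigcup_{f_i \in B} F_i \subseteq \bigcup_{g_j \in N(B)} G_j$. Since each fair subset contains exactly $b = \sum_{c \in Col} b_c$ points and since the $F_i$ (respectively the $G_j$) are pairwise disjoint, the left-hand side contains exactly $|B|\cdot b$ points while the right-hand side contains exactly $|N(B)|\cdot b$ points; the inclusion forces $|N(B)|\cdot b \ge |B|\cdot b$, i.e.\ $|N(B)| \ge |B|$.

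Since Hall's condition is satisfied, the marriage theorem~\cite{H35} guarantees a perfect matching in $H$, and this matching is the desired bijection $\pi$. There is no real obstacle in this argument; the only point requiring care is that both families genuinely partition $P$ into subsets of the same cardinality $b$, which is what makes the two cardinality counts in the verification of Hall's condition exact rather than merely inequalities in the wrong direction.
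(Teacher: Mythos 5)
Your proposal is correct and follows essentially the same route as the paper: both build the bipartite graph with an edge between $F_i$ and $G_j$ exactly when they intersect, verify Hall's condition by counting that the sets on each side contain exactly $b$ points apiece, and then invoke the marriage theorem~\cite{H35}. The paper phrases the counting via edge weights $|F_i \cap G_j|$ rather than your union inclusion, but this is only a cosmetic difference.
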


\begin{proof}
Let $G = (V \cup W,E)$ be a bipartite graph defined by $V = \{v_1,\ldots,v_\frac{n}{b}\}$, $W = \{w_1,\ldots,w_\frac{n}{b}\}$ and $\{v_i,w_j\} \in E \Leftrightarrow F_i \cap G_j \neq \emptyset$. Then the existence of a perfect matching in $G$ is equivalent to the existence of a mapping $\pi$ as described.
If we set costs $c$ to the edges by $c(\{v_i,w_j\}) = |F_i \cap G_j|$ we can see that for every subset of $V' \subseteq V$ and every subset $W' \subseteq W$ the total cost of all edges adjacent to $V'$ is equal to $|V'| (b+r)$ and the total cost of all edges adjacent to $W'$ is equal to $|W'|(b)$.
Therefore the neighborhood of $V' \subseteq V$ contains at least $|V'|$ nodes.
Hall's "`Marriage Theorem"'~\cite{H35} then concludes the proof.
\end{proof}

\begin{lemma}
\label{lemma:fair_subset_exchange}
Let $\bigcup_{i \in \{1,\ldots,\frac{n}{b}\}} G_i = P$ be a partition of $P$ into fair subsets with a maximum diameter $d$ and assume that there exists a feasible clustering.
Then there exists a clustering $\mathcal{C}$ of $P$ with a radius of at most $opt + d$ which in addition to the lower and upper bounds satisfies that for each $i \in \{1,\ldots,\frac{n}{b}\}$ all points in $G_i$ are part of the same cluster.
\end{lemma}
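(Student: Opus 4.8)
The plan is to transform an optimal feasible clustering into one whose clusters are unions of the given fair subsets $G_i$, paying only an additive $d$ in the radius. First I would fix an optimal feasible clustering $\mathcal{C}_{opt}$ (it exists by assumption) with centers $C_{opt} \subseteq L$, $|C_{opt}| \le k$, and value $opt$. Since $\mathcal{C}_{opt}$ is fair, each of its clusters contains an integer multiple of $b$ points, so I can partition each cluster arbitrarily into fair subsets and collect these into a partition $P = \bigcup_{i=1}^{n/b} F_i$ of $P$ into fair subsets with the property that each $F_i$ lies inside a single cluster of $\mathcal{C}_{opt}$; let $c(F_i) \in C_{opt}$ denote the center of that cluster.

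The second step is to match the two fair-subset partitions. Both $\{F_i\}$ and $\{G_i\}$ have exactly $n/b$ parts, so Lemma~\ref{lemma:fair_mapping} applies and yields a bijection $\pi$ with $F_i \cap G_{\pi(i)} \neq \emptyset$ for every $i$. I then define the new clustering $\mathcal{C}$ by keeping the centers $C_{opt}$ and assigning all of $G_{\pi(i)}$ to $c(F_i)$, for each $i$. Because $\pi$ is a bijection, every point of $P$ is assigned exactly once, and by construction each $G_j$ ends up in a single cluster, which is the required structural property.

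Third, I would verify feasibility by a cardinality argument. For a fixed center $c$, the new cluster at $c$ is $\bigcup_{\,i:\, c(F_i)=c\,} G_{\pi(i)}$; since every fair subset has exactly $b$ points (and exactly $b_\gamma$ points of each color $\gamma$), this cluster has the same number of points as the corresponding cluster of $\mathcal{C}_{opt}$, and it is a union of fair subsets. Hence the lower bound $\ell$, the capacities $u(\cdot)$, and fairness are all inherited from $\mathcal{C}_{opt}$, and $\mathcal{C}$ uses at most $k$ centers. For the radius: given $p \in G_{\pi(i)}$, which $\mathcal{C}$ assigns to $c(F_i)$, pick $q \in F_i \cap G_{\pi(i)}$; then $d(p,q) \le d$ (both lie in $G_{\pi(i)}$, whose diameter is at most $d$) and $d(q, c(F_i)) \le opt$ ($q \in F_i$ lies in the $\mathcal{C}_{opt}$-cluster centered at $c(F_i)$), so $d(p, c(F_i)) \le opt + d$ by the triangle inequality.

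The one point that needs care — the main obstacle, such as it is — is making sure Lemma~\ref{lemma:fair_mapping} is legitimately applicable, i.e. that the optimal feasible clustering really does induce a fair-subset partition with the same number of parts ($n/b$) as $\{G_i\}$, and then observing that the privacy and capacity constraints are purely cardinality constraints while fairness is closed under taking unions of fair subsets, so all three constraints survive unchanged when whole fair subsets are shuffled between clusters.
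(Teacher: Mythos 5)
Your proposal is correct and follows essentially the same route as the paper's proof: extract a fair-subset partition $\{F_i\}$ from the optimal feasible clustering, apply Lemma~\ref{lemma:fair_mapping} to get the bijection $\pi$, swap each $F_i$ for $G_{\pi(i)}$ within its cluster (preserving all cardinality-based constraints), and bound the radius via a common point $q \in F_i \cap G_{\pi(i)}$ and the triangle inequality. Your added remark that the optimal clustering's clusters can indeed be partitioned into fair subsets just makes explicit what the paper calls the ``corresponding partition.''
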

\begin{proof}
Let $\mathcal{C}_{opt}= \{C,\phi\}$ be the optimal feasible clustering and let $P= \bigcup_{i \in \{1,\ldots,\frac{n}{b}\}} F_i$ be a corresponding partition into fair subsets. 
Lemma~\ref{lemma:fair_mapping} shows that there exists a bijective mapping $\pi: \{1,\ldots,\frac{n}{b}\} \rightarrow \{1,\ldots,\frac{n}{b}\}$ such that for each $i \in \{1,\ldots,\frac{n}{b}\}$ we have $F_i \cap G_{\pi(i)} \neq \emptyset$.
For all $i \in \{1,\ldots,\frac{n}{b}\}$ we replace $F_i$ by $G_{\pi(i)}$ in its cluster of the optimal clustering to create the new clustering $\mathcal{C}= \{C,\phi'\}$.
Formally $\phi'$ is defined as follows. For all $i \in \{1,\ldots,\frac{n}{b}\}$ and $p \in G_{\pi(i)}$ we have $\phi'(p) = \phi(q)$ for some $q \in F_i$. 
Note that $\phi'$ is well defined since for all $i \in \{1,\ldots,\frac{n}{b}\}$ we have $\phi(q) = \phi(q')$ for all $q,q' \in F_i$.

Since replacing a fair subset with a different fair subset does not change the number of points in a cluster $\mathcal{C}$ is a feasible clustering and by construction satisfies that for each $i \in \{1,\ldots,\frac{n}{b}\}$ all points in $G_i$ are part of the same cluster.

We know that for all $i \in \{1,\ldots,\frac{n}{b}\}$ we have $F_i \cap G_{\pi(i)} \neq \emptyset$. Let $q \in F_i \cap G_{\pi(i)}$ then we have $d(q,\phi(q)) \le opt$ and $d(p,q) \le d$ for all $p \in G_{\pi(i)}$. By the triangle inequality we immediately obtain $d(p,\phi'(p) = \phi (q)) \le opt + d$.
\end{proof}

The idea is to compute a partition of $P = \bigcup_{i \in \{1,\ldots,\frac{n}{b}\}} F_i$ into fair subsets with small diameter and then use an approximation algorithm to compute a clustering on $(\{f_i \mid 1 \le i \le \frac{n}{b}\}, L, k, \frac{\ell}{b}, \frac{u}{b})$, where $f_i$ is a new point representing $F_i$.

Through the construction we immediately obtain the following lemma.
\begin{lemma}
\label{lemma:approx_fair_subsets}
Let $P = \bigcup_{i \in \{1,\ldots,\frac{n}{b}\}} F_i$ be a partition of $P$ into fair subsets.
Let $F = \{f_i \mid 1 \le i \le \frac{n}{b}\}$ be the set of centers points corresponding to $\{F_i \mid 1 \le i \le \frac{n}{b}\}$. Let $d(f_i,q) = max_{p\in F_i} d(p,q)$ for all $q \in P$.
Let $\mathcal{C} = \{C,\phi\}$ be a solution for the $k$-center problem on $F, L, k$ with a maximum radius $rad$.
Then $\mathcal{C}' = \{C,\phi'\}$ with $\phi'(p) = \phi(f_i)$ for all $p \in F_i$ and all $i\in \{1,\ldots,\frac{n}{b}\}$ is a solution for the fair $k$-center problem on $P, L, k, Col, \chi$ with a maximum radius $rad$.

Analogous let $\mathcal{C} = \{C,\phi\}$ be a solution for the fair $k$-center problem on $P, L, k, Col, \chi$ with $\phi'(p) = \phi(q)$ for all $p,q \in F_i$ for all $i\in \{1,\ldots,\frac{n}{b}\}$ and a maximum radius $rad$.
Then $\mathcal{C}' = \{C,\phi'\}$ with $\phi'(f_i) = \phi(p)$ for $p \in F_i$ is a solution for the $k$-center problem on $F, L, k$ with a maximum radius $rad$.
\end{lemma}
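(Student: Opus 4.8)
The plan is to verify both directions by directly unwinding the construction; as the paper indicates (``Through the construction we immediately obtain''), essentially all the content already sits in the definitions, and the only step that deserves an explicit sentence is why reassigning whole fair subsets keeps the clustering fair.

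For the forward direction, I would start from a $k$-center solution $\mathcal{C}=(C,\phi)$ on $F,L,k$ with maximum radius $rad$ and consider the induced map $\phi'(p)=\phi(f_i)$ for $p\in F_i$. The center set $C$ is unchanged, so $|C|\le k$ is inherited. Each cluster of $\mathcal{C}'$ has the form $\bigcup_{i\in I}F_i$ where $I=\{i\mid \phi(f_i)=c\}$ for the corresponding center $c\in C$; since each $F_i$ contains exactly $b_{c'}$ points of colour $c'$ for every $c'\in Col$, this cluster contains $|I|\,b_{c'}$ points of colour $c'$, so the ratio of colours $c'$ to $d'$ inside it equals $b_{c'}/b_{d'}=|c'(P)|/|d'(P)|$ (using $b_{c'}=|c'(P)|/\gcd(\{|d'(P)|\mid d'\in Col\})$); hence $\mathcal{C}'$ is fair. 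For the radius, for $p\in F_i$ we have $d(p,\phi'(p))=d(p,\phi(f_i))\le\max_{q\in F_i}d(q,\phi(f_i))=d(f_i,\phi(f_i))\le rad$, so the maximum radius of $\mathcal{C}'$ is at most $rad$.

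For the converse, I would start from a fair $k$-center solution $\mathcal{C}=(C,\phi)$ on $P,L,k,Col,\chi$ in which every $F_i$ lies inside a single cluster, and set $\phi'(f_i)=\phi(p)$ for an arbitrary $p\in F_i$; this is well defined precisely because $\phi$ is constant on $F_i$. Again $|C|\le k$ is inherited, and $d(f_i,\phi'(f_i))=\max_{p\in F_i}d(p,\phi'(f_i))=\max_{p\in F_i}d(p,\phi(p))\le rad$, which gives the claimed radius bound.

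I do not expect a genuine obstacle here. The only things to be careful about are: (i) writing out the equality $b_c/b_d=|c(P)|/|d(P)|$ so that the notion of a ``fair subset'' really yields the fairness condition of Section~\ref{sec:fair}; (ii) noting that the distance $d(f_i,\cdot)$ used in the reduction is the one fixed in the lemma statement, implicitly also extended to locations by $d(f_i,q)=\max_{p\in F_i}d(p,q)$, so that ``the $k$-center problem on $F,L,k$'' is well posed; and (iii) the harmless convention that centers of $C$ that receive no points create no empty clusters, so the fairness ratios only need to be checked on nonempty clusters.
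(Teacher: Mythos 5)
Your proposal is correct and matches the paper, which states this lemma as following ``immediately through the construction'' and gives no further proof; your write-up simply fills in the routine details (fairness of unions of fair subsets via $b_{c'}/b_{d'}=|c'(P)|/|d'(P)|$, the radius bound via $d(f_i,q)=\max_{p\in F_i}d(p,q)$, and well-definedness in the converse), exactly in the spirit intended. Your side remarks (extending $d(f_i,\cdot)$ to locations in $L$, ignoring empty clusters) are sensible clarifications, not deviations.
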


Lemma~\ref{lemma:approx_fair_subsets} gives us a direct correspondence between clusterings on $F$ and clusterings on $P$ in which for each $i \in \{1,\ldots,\frac{n}{b}\}$ all points in $F_i$ belong to the same cluster. More over a cluster in a clustering on $P$ contains exactly $b$ times as many points as the corresponding cluster in the clustering on $F$.
Since we assumed $\ell$ and all $\{u(p) \mid p \in P\}$ to be integer multiples of $b$, the optimal solution for the private capacitated $k$-center problem on $F, L, k, \frac{u}{b}, \frac{\ell}{b}$ therefore directly corresponds to the best solution for the private capacitated and fair $k$-center problem on $P, L, k, u, Col, \chi, \ell$, where for each $i \in \{1,\ldots,\frac{n}{b}\}$ all points in $F_i$ belong to the same cluster.
An $\alpha$-approximate solution for the private capacitated $k$-center problem on
$F, L, k, \frac{u}{b}, \frac{\ell}{b}$ therefore yields an $\alpha$-approximate solution 
 for the private capacitated and fair $k$-center problem on $P, L, k, u, Col, \chi, \ell$, where for each $i \in \{1,\ldots,\frac{n}{b}\}$ all points in $F_i$ have to belong to the same cluster.

\begin{lemma}
\label{lemma:fair_lower_upper}
Assume that there exists an approximation algorithm $A$ for the private capacitated $k$-center problem with approximation factor $\alpha$.
Assume that there exists an approximation algorithm $B$ for the fair subset partition problem with approximation factor $\beta$.
Then for instances $P, L, k, u, Col, \chi, \ell$ of the private capacitated and fair $k$-center problem, we can compute an $\alpha(2\beta + 1)$-approximation in polynomial time.
\end{lemma}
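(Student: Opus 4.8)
The plan is to combine the two-layer view of fair clusterings established above with the private capacitated algorithm $A$ and the fair subset partition algorithm $B$, paying the distortion of $B$ once to build good fair subsets and then paying the distortion of $A$ on the resulting instance. First I would run $B$ on $P$, $Col$, $\chi$ to obtain a partition $P = \bigcup_{i\in\{1,\ldots,n/b\}} F_i$ into fair subsets with representative centers $y_i$, whose radius — and hence diameter — is at most $\beta\cdot\opt_{\mathrm{fsp}}$, where $\opt_{\mathrm{fsp}}$ denotes the optimal fair subset partition value. The key point, which I would isolate first, is that $\opt_{\mathrm{fsp}}\le 2\cdot\opt$: from the optimal feasible private capacitated fair clustering $\mathcal{C}_{opt}$ one extracts a partition into fair subsets in which all points of each subset lie in a common cluster, so each such fair subset has diameter at most $2\opt$; choosing any point of a subset as its representative gives a fair subset partition of value at most $2\opt$. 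Hence the diameter bound is $d := 2\beta\opt$.

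Next I would form the contracted instance on $F = \{f_1,\ldots,f_{n/b}\}$ with $d(f_i,q) = \max_{p\in F_i} d(p,q)$, bounds $\frac{\ell}{b}$ and $\frac{u}{b}$ (integers by the standing assumption), and $k$ centers, and run $A$ on it. I need a clustering on $F$ that the private capacitated algorithm can approximate well, so I invoke Lemma~\ref{lemma:fair_subset_exchange}: applied to the partition $\{F_i\}$ with diameter $d$, it guarantees a feasible private capacitated fair clustering of $P$ with radius at most $\opt + d$ in which all points of each $F_i$ share a cluster. By the second half of Lemma~\ref{lemma:approx_fair_subsets}, this yields a feasible solution to the private capacitated $k$-center instance on $(F,L,k,\frac{u}{b},\frac{\ell}{b})$ of radius at most $\opt + d$; therefore its optimum is at most $\opt + d$, and running $A$ produces a solution $\mathcal{C} = (C,\phi)$ on $F$ of radius at most $\alpha(\opt + d)$.

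Finally I would lift $\mathcal{C}$ back to $P$ via the first half of Lemma~\ref{lemma:approx_fair_subsets}: setting $\phi'(p) = \phi(f_i)$ for $p\in F_i$ gives a clustering $\mathcal{C}'$ of $P$ that satisfies fairness (each cluster is a union of fair subsets), satisfies the lower bound $\ell$ and upper bounds $u$ (cluster sizes are exactly $b$ times those in $\mathcal{C}$, and $\frac{\ell}{b}$, $\frac{u}{b}$ were respected). Its radius is bounded by estimating, for $p\in F_i$ assigned to center $c = \phi(f_i)$,
\[
d(p,c) \le d(p,y_i) + d(y_i,c) \le d + d(f_i,c) \le d + \alpha(\opt + d),
\]
using $d(p,y_i)\le d$ and $d(f_i,c) = \max_{q\in F_i} d(q,c) \ge d(y_i,c)$ only loosely — more directly $d(p,c)\le d(f_i,c)$ is false, so I would instead bound $d(p,c)$ by $d(f_i,c)$ is not available; rather $d(p,c)\le \mathrm{diam}(F_i) + d(f_i,c)$ is also not quite it. The clean bound is $d(p,c) \le d + d(f_i,c)$ where $d(f_i,c)\le \alpha(\opt+d)$, giving $d(p,c)\le \alpha(\opt+d) + d = \alpha\opt + (\alpha+1)d = \alpha\opt + (\alpha+1)2\beta\opt \le \alpha(2\beta+1)\opt$ after checking $\alpha + (\alpha+1)2\beta \le \alpha(2\beta+1)$, i.e. $(\alpha+1)2\beta \le 2\alpha\beta$, which fails — so the correct accounting must instead route through $d(p,c) \le d(f_i,c)$ directly (since $f_i$'s distance already maxes over $F_i$), yielding radius at most $\alpha(\opt+d) = \alpha\opt + 2\alpha\beta\opt = \alpha(2\beta+1)\opt$.

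The main obstacle is precisely this last bookkeeping step: one must define $d(f_i,\cdot)$ as the \emph{max} over the fair subset so that lifting incurs \emph{no} extra additive term beyond what $A$ already pays, and one must verify the standing divisibility assumptions ($\ell$, $u$ multiples of $b$) are genuinely without loss of generality so that the contracted instance is well-posed. Everything else is a direct concatenation of Lemmas~\ref{lemma:fair_subset_exchange} and~\ref{lemma:approx_fair_subsets} with the inequality $\opt_{\mathrm{fsp}}\le 2\opt$, and the polynomial runtime is immediate since we make one call each to $A$ and $B$ plus the contraction.
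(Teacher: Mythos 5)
Your route is the same as the paper's: run $B$ to obtain fair subsets of small diameter, use Lemma~\ref{lemma:fair_subset_exchange} to argue that some feasible private capacitated fair clustering of radius $\opt+d$ keeps each $F_i$ intact, solve the contracted instance $(F,L,k,\frac{u}{b},\frac{\ell}{b})$ with $A$, and lift back via Lemma~\ref{lemma:approx_fair_subsets}. Your hesitation in the last paragraph is unnecessary: $d(p,c)\le d(f_i,c)$ holds for every $p\in F_i$ by the very definition $d(f_i,c)=\max_{q\in F_i}d(q,c)$, so the lift indeed costs nothing extra, exactly as in the paper.

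The genuine gap is in the diameter bound $d$ for $B$'s partition, i.e.\ in the constant. The fair subset partition objective is a \emph{radius}, so $B$ guarantees radius at most $\beta\cdot\opt_{\mathrm{fsp}}$ and hence diameter at most $2\beta\cdot\opt_{\mathrm{fsp}}$; your parenthetical ``radius --- and hence diameter --- is at most $\beta\cdot\opt_{\mathrm{fsp}}$'' silently equates the two, which is false in general. Moreover, you only prove $\opt_{\mathrm{fsp}}\le 2\opt$ (picking a representative inside each fair subset of an optimal cluster and detouring through the cluster center). Tracked correctly, your own chain therefore yields $d\le 4\beta\opt$ and an $\alpha(4\beta+1)$-approximation, not the claimed $\alpha(2\beta+1)$. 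The paper obtains $d=2\beta\opt$ because its proof starts from the stronger assertion that the optimal fair subset partition radius is at most the radius of an optimal \emph{fair clustering}, i.e.\ $\opt_{\mathrm{fsp}}\le\opt$, so $B$'s output has radius at most $\beta\opt$ and diameter at most $2\beta\opt$; the rest of the accounting ($\alpha(\opt+d)=\alpha(2\beta+1)\opt$) is then identical to yours. To close your argument as written you must either justify $\opt_{\mathrm{fsp}}\le\opt$ (your $2\opt$ argument does not give this) or settle for the weaker constant $\alpha(4\beta+1)$.
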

\begin{proof}
We know that the radius of the optimal solution of the fair subset partition problem is at most the radius of the optimal solution to the fair $k$-center problem.
We use $B$ to compute a partition $P = \bigcup_{i \in \{1,\ldots,\frac{n}{b}\}} F_i$ into fair subsets with a maximum diameter of at most $2 \beta opt$.
Lemma~\ref{lemma:fair_subset_exchange} implies that the best solution to the private capacitated and fair $k$-center problem on $P, L, k, u, Col, \chi, \ell$, where for each $i \in \{1,\ldots,\frac{n}{b}\}$ all points in $F_i$ have to belong to the same cluster, has a maximum radius of at most $(2 \beta + 1) opt$.
We then use $A$ to compute an approximation $\mathcal{C}$ on $F, L, k, \frac{u}{b}, \frac{\ell}{b}$ which has a maximum radius of at most $\alpha (2 \beta + 1) opt$.
The solution $\mathcal{C}'$ corresponding to $\mathcal{C}$ according to Lemma~\ref{lemma:approx_fair_subsets} then must be an $\alpha (2 \beta + 1)$-approximation to the private capacitated and fair $k$-center problem on $P, L, k, u, Col, \chi, \ell$.
\end{proof}

\begin{corollary}
\label{cor:private_cap_fair}
We can compute an $O(1)$-approximation for instances of the private capacitated and fair $k$-center/$k$-supplier in polynomial time. 
\end{corollary}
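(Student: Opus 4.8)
The plan is to derive the corollary directly from Lemma~\ref{lemma:fair_lower_upper} by plugging in the constant-factor subroutines established earlier in the paper, and to observe that the same reduction works verbatim for the supplier variant. Recall that Lemma~\ref{lemma:fair_lower_upper} reduces the private capacitated and fair $k$-center problem to two ingredients: an algorithm $B$ for the fair subset partition problem (used to produce a low-diameter partition of $P$ into fair subsets) and an algorithm $A$ for the private capacitated $k$-center problem (run on the contracted instance $(F,L,k,u/b,\ell/b)$), and that these combine to an $\alpha(2\beta+1)$-approximation.

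First I would instantiate $B$ with the $12$-approximation for the fair subset partition problem from Theorem~\ref{thm:7_approx_fair_subset}, i.e.\ $\beta = 12$; in the special case that $b_c = 1$ for at least one color $c \in Col$, Theorem~\ref{thm:7_approx_fair_subset} instead yields $\beta = 2$. Then I would instantiate $A$ with the $11$-approximation for private capacitated $k$-center from Corollary~\ref{cor:capa}, so $\alpha = 11$, and with the $8$-approximation of Corollary~\ref{cor:capa} when the upper bounds are uniform (note that if $u$ is uniform then so is $u/b$, so the uniform-capacity subroutine applies to the contracted instance). Plugging into $\alpha(2\beta+1)$ gives $11 \cdot 25 = 275$ in the general $k$-center case, and smaller constants in the special cases, e.g.\ $8 \cdot 5 = 40$ when the capacities are uniform and $b_c = 1$ for some color.

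For the $k$-supplier version I would point out that the proof of Lemma~\ref{lemma:fair_lower_upper}, together with the two-layer machinery it relies on (Lemmas~\ref{lemma:fair_mapping}, \ref{lemma:fair_subset_exchange}, and~\ref{lemma:approx_fair_subsets}), never uses the assumption $P = L$: the fair subset partition problem is intrinsically a problem on $P$ with representative centers in $P$, while the second layer is simply a private capacitated clustering instance with location set $L$. Hence Lemma~\ref{lemma:fair_lower_upper} holds verbatim when $A$ is a private capacitated $k$-supplier algorithm, and I would invoke the $13$-approximation for private capacitated $k$-supplier from Corollary~\ref{cor:capb}, obtaining $13 \cdot 25 = 325$ in general and $13 \cdot 5 = 65$ when $b_c = 1$ for some color. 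All of these are constants, which proves the corollary.

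There is no real obstacle here — the statement is a bookkeeping corollary of Lemma~\ref{lemma:fair_lower_upper}. The only point that warrants a sentence of justification is the claim that the supplier case is covered, i.e.\ that the reduction and its supporting lemmas are insensitive to the point/location distinction; this is immediate from inspecting those proofs, since they manipulate only multiplicities and diameters of fair subsets of $P$ and never move centers. Everything else is arithmetic.
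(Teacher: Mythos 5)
Your proposal is correct and follows essentially the same route as the paper: both rest on Lemma~\ref{lemma:fair_lower_upper} with the $\beta=12$ (resp.\ $\beta=2$) bound from Theorem~\ref{thm:7_approx_fair_subset}, plugged into the $\alpha(2\beta+1)$ guarantee, and both treat the supplier case by simply swapping in a private capacitated $k$-supplier subroutine. The only difference is the instantiation of $A$: the paper invokes the $9$/$6$-approximations (center) and $13$/$9$-approximations (supplier) of~\cite{DHHL17} directly, obtaining $225$/$150$ and $325$/$225$, whereas you use the paper's own Corollaries~\ref{cor:capa} and~\ref{cor:capb} ($\alpha=11$, $8$, $13$), giving slightly larger constants such as $275$ for the general $k$-center case --- which is immaterial for the stated $O(1)$ claim.
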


\begin{proof}
For the $k$-center problem we use the $9$-approximation by~\cite{DHHL17} for the problem with uniform lower bound and non uniform upper bounds and the $6$-approximation by~\cite{DHHL17} for the problem with uniform lower bound and uniform upper bounds.
Together with Lemma~\ref{lemma:fair_lower_upper} and the $12$-approximation for the fair subset partition problem from Theorem~\ref{thm:7_approx_fair_subset} this yields an approximation factor of $9 (2 \cdot 12 + 1) = 225$.
In case of a uniform upper bound this reduces the approximation factor to $150$.

In case $b_c = 1$ for some $c \in Col$ the $2$-approximation for the fair subset partition problem from Theorem~\ref{thm:7_approx_fair_subset} improves the approximation factor to $45$ for non-uniform upper bounds and $30$ for uniform upper bounds.

For the $k$-supplier problem we use the $13$-approximation by~\cite{DHHL17} for the problem with uniform lower bound and non uniform upper bounds and the $9$-approximation by~\cite{DHHL17} for the problem with uniform lower bound and uniform upper bounds and obtain the following approximation factors for the fair $k$-supplier problem with non-uniform upper bounds and uniform lower bounds.

In the general case we obtain approximation factors of $325$ and $225$ and in case $b_c = 1$ for some $c \in Col$ we obtain approximation factors of $65$ and $45$.
\end{proof}

\section{Strongly Private \kk-center}
\label{sec:multiple_lower_bounds}
Similar to Section~\ref{sec:fair} and Section~\ref{sec:fair_upper_lower} we assume that instances of the strongly private $k$-center problem contain, in addition to $P$, $L$ and $k$, a set of colors $Col$ and a function $\chi: P \rightarrow Col$ which assigns a color to each of the points.
In order to preserve the privacy although additional information about each point is know we demand that each cluster contains enough representatives of each color.

Formally, the strongly private $k$-center problem consists of an instance of the $k$-center problem together with a set of colors $Col$, a function $\chi : P \rightarrow Col$ and a lower bound $\ell_i$ for each color $i \in Col$, where the problem is to compute a set of centers $C \subseteq L$ with $|C| \le k$ and an assignment $\phi: P \rightarrow C$ of the points to the selected centers that satisfies $\ell_i \le \phi^{-1}(x) \cap \chi^{-1}(i)$ for all $i \in Col$ and all $x \in C$ and minimizes 
\[
\max_{x \in P} d(x,\phi(x)).
\]

We again adjust our method from Section~\ref{sec:outlier_privacy} in order to apply it to the strongly private $k$-center problem and obtain the following lemma.

\begin{theorem}
\label{thm:colorful}
Assume that there exists an approximation algorithm $A$ for the $k$-center problem with approximation factor $\alpha$. 
Then we can compute an $(\alpha + 2)$-approximation for the strongly private $k$-center problem in polynomial time.
\end{theorem}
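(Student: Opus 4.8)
The plan is to transcribe the template of Section~\ref{sec:lower_upper_bound} (the capacitated case, which like this one has no outliers), replacing the single lower bound $\ell$ by the family $\{\ell_c\}_{c\in Col}$ and splitting the cluster nodes of the threshold graph by colour. As in Theorem~\ref{thm:kcenter_general} it suffices to give, for a fixed threshold $\tau>0$, a polynomial-time procedure that either outputs a strongly private clustering of radius at most $(\alpha+2)\tau$ or certifies $\tau<\opt$; running it over the $O(|P||L|)$ candidate distances for $\tau$ and returning the best feasible clustering then yields the theorem.

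For a fixed $\tau$, first call $A$ on $P,L,k$ to obtain an $\alpha$-approximate unconstrained $k$-center clustering $\mathcal{C}=(C,\phi)$ with clusters $C_1,\dots,C_{k'}$ and radius $r$; if $r>\alpha\tau$ return $\tau<\opt$. Then build a threshold graph $G_\tau$ with a source $s$, a sink $t$, a node $v_j^{c}$ for every cluster $C_j$ and every colour $c\in Col$, and a node $w_p$ for every point $p$; add $s\to v_j^{c}$ with capacity $|c(C_j)|-\ell_c$ whenever this is positive, $v_j^{c}\to t$ with capacity $\ell_c-|c(C_j)|$ whenever this is positive, $v_j^{\chi(p)}\to w_p$ with capacity $1$ for $p\in C_j$, and $w_p\to v_{j'}^{\chi(p)}$ with capacity $1$ whenever $p\notin C_{j'}$ and $d(p,C_{j'})\le 2\tau$. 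Since every point carries exactly one colour, an integral maximum $s$-$t$-flow $f$ splits into one independent flow per colour. Reassign $p$ to $C_{j'}$ whenever $f(w_p\to v_{j'}^{\chi(p)})=1$; exactly as in Lemma~\ref{lemma:reassignment_feasible} this is well defined, raises the radius to at most $r+2\tau$, and if $f$ saturates all edges $v_j^{c}\to t$ then every cluster ends up with at least $\ell_c$ points of colour $c$ for each $c$, hence the clustering is feasible with radius at most $(\alpha+2)\tau$.

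If some edge $v_j^{c}\to t$ is not saturated, fix such a \emph{failing colour} $c$, let $V'$ be the set of nodes not reachable from $s$ in the residual network, put $C(V')=\{C_j: v_j^{c}\in V'\}$, $k''=|C(V')|\ge 1$, and let $P(V')$ be the set of \emph{all} points lying in a cluster of $C(V')$. The heart of the argument is the analogue of Lemma~\ref{lemma:properties_V'}: every strongly private clustering of radius at most $\tau$ covers $P(V')$ with fewer than $k''$ clusters. Here the usual residual-reachability properties (if $v_j^{c}\in V'$ and $(w_p,v_j^{c})\in E$ then $w_p\in V'$, etc., as in Lemma~\ref{lemma:privat_outlier_properties_V'}) show that every colour-$c$ point within distance $2\tau$ of some point of $P(V')$ lies in $P_c(V')\cup P_{A,c}(V')$, the colour-$c$ points of the $C(V')$-clusters together with the colour-$c$ points adjacent to $V'$; and the counting argument of Lemma~\ref{lemma:privat_outlier_properties_V'} (no $C(V')$-cluster can attain a colour-$c$ surplus while a deficit edge stays unsaturated, and at least one stays below $\ell_c$) gives $|P_c(V')\cup P_{A,c}(V')|<k''\ell_c$. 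In any feasible clustering of radius $\le\tau$, a cluster meeting $P(V')$ lies within $2\tau$ of $P(V')$ and must contain $\ge\ell_c$ points of colour $c$, all of them therefore inside $P_c(V')\cup P_{A,c}(V')$; since these points are distinct across clusters, fewer than $k''$ such clusters fit. Hence, when $\tau\ge\opt$, the optimal clustering covers $P(V')$ with at most $k''-1$ clusters of radius $\le\tau$, so calling $A$ on $P(V'),L,k''-1$ produces a clustering of radius at most $\alpha\tau$ (if it does not, return $\tau<\opt$); replacing $C(V')$ by these new centres yields a clustering with at most $k'-1$ clusters and radius at most $\alpha\tau$. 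Iterating the loop strictly decreases the number of clusters, so after at most $k$ rounds we either obtain a feasible clustering of radius $\le(\alpha+2)\tau$ or have returned $\tau<\opt$; each round is polynomial, and for $\tau\ge\opt$ the above shows we never return $\tau<\opt$ erroneously, so the smallest successful $\tau$ is at most $\opt$.

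The main obstacle is precisely the key step of the third paragraph: showing that the cut $V'$ extracted from a \emph{single} colour's residual network certifies the \emph{global} statement that the optimum covers $P(V')$ with fewer than $k''$ clusters, so that the recomputation both makes progress and stays within radius $\alpha\tau$. Everything else — well-definedness of the reassignment, the bookkeeping $r+2\tau\le(\alpha+2)\tau$, the polynomial running time, the $\le k$ iteration bound, and the outer thresholding — is a direct adaptation of Sections~\ref{sec:outlier_privacy} and~\ref{sec:lower_upper_bound}.
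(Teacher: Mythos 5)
Your proposal is correct and follows essentially the same route as the paper: threshold over $\tau$, run $A$ without the lower bounds, fix things by a per-colour max-flow reassignment, and on failure extract $V'$ from the failing colour's residual network, prove the covering lemma for $P(V')$, recompute with $k''-1$ centres, and iterate at most $k$ times. The only (cosmetic) difference is that you merge the colour networks into one graph with nodes $v_j^c$, whereas the paper builds a separate threshold graph $G_{\tau,i}$ for each colour; since the components only share $s$ and $t$, the two formulations are equivalent.
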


\begin{proof}
Let $P$, $L$, $k$, $Col$, $\chi$, $\{\ell_i\mid i \in Col\}$ be an instance of the strongly private $k$-center problem.

Analogous to Section~\ref{sec:outlier_privacy} we use threshold graphs with threshold $\tau$ and show that for any given $\tau \in \mathbb{R}$, the algorithm has polynomial runtime, and, if $\tau$ is equal to $\opt$, the value of the optimal solution, computes an $(\alpha + 2)$-approximation.
Since we know that the value of the optimal solution is equal to the distance between a point and a location, we test all $O(|P||L|)$ possible distances for $\tau$ and return the best feasible clustering returned by any of them.
The main proof is the proof of Lemma~\ref{lemma:colorful_fixed_tau} below. The lemma then concludes the proof.
\end{proof}

We now describe the procedure for a fixed value  of $\tau > 0$.

\begin{lemma}
\label{lemma:colorful_fixed_tau}
Assume that there exists an approximation algorithm $A$ for the $k$-center problem with approximation factor $\alpha$. 

Let $P$, $L$, $k$, $Col$, $\chi$, $\{\ell_i\mid i \in Col\}$  be an instance of the strongly private $k$-center problem, let $\tau > 0$ and let $\opt$ denote the maximum radius in the optimal feasible clustering for $P$, $L$, $k$, $Col$, $\chi$, $\{\ell_i\mid i \in Col\}$.
We can in polynomial time compute a feasible clustering with a maximum radius of at most $(\alpha + 2) \tau$ or determine $\tau < \opt$.
\end{lemma}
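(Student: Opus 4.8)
The plan is to mirror the argument for the private $k$-center problem with outliers (Lemma~\ref{lemma:privat_outlier_Algo_for_fixed_tau}), replacing the single lower bound $\ell$ by the family of color-wise lower bounds $\{\ell_i \mid i \in Col\}$. First I would run the $\alpha$-approximation $A$ for plain $k$-center on $P$, $L$, $k$ to obtain a clustering $\mathcal{C} = (C,\phi)$ with clusters $C_1,\dots,C_{k'}$ and radius $r = \max_{x\in P} d(x,\phi(x))$; if $r > \alpha\tau$ we return $\tau < \opt$, which is valid because dropping the lower bounds only decreases the optimum. The crucial observation is that the privacy constraint is still a ``counting'' constraint, just applied per color. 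So I would build a threshold graph $G_\tau$ exactly as in Section~\ref{sec:outlier_privacy}, except that the flow network must track each color separately: for every color $i \in Col$ introduce a node $v_i^{(j)}$ for each cluster $C_j$, connect $s$ to $v_i^{(j)}$ with capacity $\max(0, |c_i(C_j)| - \ell_i)$ when the cluster has a surplus of color-$i$ points, connect $v_i^{(j)}$ to $t$ with capacity $\max(0, \ell_i - |c_i(C_j)|)$ when it has a deficit, and route point nodes $w_p$ (for $p$ of color $i$) between the corresponding color-$i$ cluster nodes, using the same $2\tau$ adjacency rule $d(p,C_j) \le 2\tau$. (Equivalently one builds $|Col|$ parallel copies of the outlier-free threshold graph, one per color, sharing source and sink.) A point $p$ of color $i$ is allowed to be reassigned from $C_j$ to $C_{j'}$ only if $d(p, C_{j'}) \le 2\tau$; the analogue of Lemma~\ref{lemma:privat_outlier_reassignment_feasible} then shows the reassignment is consistent (each point moves to at most one cluster, by flow conservation and unit capacity on the $(v_i^{(j)}, w_p)$ edges) and increases the radius to at most $r + 2\tau \le (\alpha+2)\tau$, and that if $f$ saturates all edges into $t$ then every color-$i$ lower bound is met in every cluster.

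Next, if $f$ does not saturate all $t$-edges, I would look at the residual network and let $V'$ be the set of nodes unreachable from $s$. The key structural lemma is the analogue of Lemma~\ref{lemma:privat_outlier_properties_V'}: defining $C(V')$ as the clusters all of whose color-nodes lie in $V'$ and $P(V')$, $P_A(V')$ as the points belonging to / adjacent to $V'$, one argues by a counting argument over colors that some color $i$ has the property that the clusters in $C(V')$ together cover strictly fewer than $|C(V')| \cdot \ell_i$ color-$i$ points even after the best possible reassignment, and that any $\tau$-radius clustering respecting all lower bounds cannot place a $P(V')$-point together with anything outside $P(V') \cup P_A(V')$ (because of the $2\tau$ cut). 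Hence any feasible $\tau$-radius clustering uses fewer than $k'' := |C(V')|$ clusters to cover $P(V')$. Then — and this is the simplification relative to the outlier case — there are no outliers to worry about, so I directly invoke $A$ on $P(V')$, $L$, $k''-1$ to get a clustering $\mathcal{C}'$ of radius $r'$; if $\tau \ge \opt$ then $r' \le \alpha\opt \le \alpha\tau$, and if $r' > \alpha\tau$ (or no such clustering is returned for the right size — but $A$ always returns one) we return $\tau < \opt$. Replacing $C(V')$ by $C'$ in $\mathcal{C}$ yields $\mathcal{C}_1$ with radius still $\le \alpha\tau$ on the whole of $P$, and $|C_1| < |C|$, so after at most $k$ iterations either a reassignment succeeds (giving radius $\le (\alpha+2)\tau$) or we have correctly reported $\tau < \opt$. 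For the outer Theorem~\ref{thm:colorful} one then ranges $\tau$ over the $O(|P||L|)$ candidate distances.

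The main obstacle I anticipate is the counting argument inside the analogue of Lemma~\ref{lemma:privat_outlier_properties_V'}: with a single lower bound $\ell$, one shows each reassigned cluster ends up with $\le \ell$ points and at least one with $< \ell$; with per-color bounds one must be careful that ``$v_i^{(j)} \in V'$'' is the right notion of a cluster belonging to $V'$ and that the $s$-unreachability of some color-$i$ node already certifies infeasibility for that color, while making sure that the adjacency structure (a point of color $i$ near a cluster contributes an edge only in the color-$i$ copy) still forces $P(V')$-points to stay within $P(V') \cup P_A(V')$ in any feasible clustering regardless of color. One has to verify that a cluster which, after reassignment, would satisfy the color-$i$ bound with slack has its color-$i$ source edge unsaturated and is hence reachable from $s$, so it is not in $C(V')$; and that at least one cluster-color pair is genuinely deficient because we assumed the reassignment failed. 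Once the bookkeeping of ``which copy of the graph'' is pinned down, the rest is a routine transcription of Section~\ref{sec:outlier_privacy} with the outlier node $v_{out}$ and the outlier-related lemmas (Lemma~\ref{lemma:outlier_properties_V'2} and the existence lemma that follows it) simply deleted, since there is nothing playing the role of outliers here.
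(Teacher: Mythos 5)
Your overall plan (run $A$ without the lower bounds, build threshold graphs with the $2\tau$ adjacency rule, reassign along a max flow, and on failure use the residual cut to certify that a subset of clusters must be recomputed with fewer centers) is the same as the paper's. The gap is in the one place you yourself flag as the obstacle: the definition of $C(V')$ and the structural lemma replacing Lemma~\ref{lemma:privat_outlier_properties_V'}. You define $C(V')$ as the clusters \emph{all} of whose color-nodes lie in $V'$ and then appeal to an unspecified ``counting argument over colors''. This does not work: a cluster that is deficient in color $i$ (so its color-$i$ node is unreachable from $s$) may simultaneously have an unsaturated surplus edge for another color $i'$ (e.g.\ when no cluster needs color-$i'$ points at all), making its color-$i'$ node reachable and excluding the cluster from your $C(V')$. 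In such instances your $C(V')$ can miss every deficient cluster or even be empty ($k''=0$), so no color $i$ satisfies your claimed deficit bound $<|C(V')|\cdot\ell_i$, and the recomputation step on $k''-1$ clusters is undefined. The statement you need is therefore not established by the proposal.

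The paper avoids this by never mixing colors in the cut argument: it keeps one threshold graph $G_{\tau,i}$ and one flow $f_i$ per color, and when some color's flow fails to saturate its sink edges it fixes \emph{one} such color $i$, takes $V'$ to be the nodes of $G_{f_i}$ unreachable from $s$, lets $C(V')$ be the clusters whose color-$i$ node is in $V'$ (so $k''\ge 1$, and some cluster in $C(V')$ is genuinely deficient in color $i$), and lets $P(V')$ be \emph{all} points, of every color, in those clusters. The counting is then a verbatim copy of the single-lower-bound argument applied to color-$i$ points only: after the hypothetical reassignment every cluster of $C(V')$ has at most $\ell_i$ color-$i$ points and at least one has strictly fewer, and since the edge rule uses $d(p,C_j)\le 2\tau$ with respect to the whole cluster $C_j$, any radius-$\tau$ cluster containing a point of $P(V')$ can draw its (at least $\ell_i$) color-$i$ points only from the fewer than $k''\ell_i$ color-$i$ points in $P(V')\cup P_A(V')$; hence fewer than $k''$ clusters touch $P(V')$, which is exactly Lemma~\ref{lemma:colorful_properties_V'}. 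With this per-color bookkeeping (which your ``parallel copies sharing $s,t$'' picture supports, since the copies are vertex-disjoint apart from $s$ and $t$), the rest of your transcription — the non-interference of reassignments for different colors, the replacement of $C(V')$ by an $\alpha$-approximate $(k''-1)$-clustering of $P(V')$, and termination after at most $k$ iterations — goes through as you describe.
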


\begin{proof}
The algorithm first uses A to compute a solution without the lower bounds: Let $\mathcal{C} =(C,\phi)$ be an $\alpha$-approximate solution for the $k$-center problem on $P$, $L$, $k$. 
Again let $k'=|C|$, $C = \{c_1,\ldots,c_{k'}\}$, let $C_1,\ldots,C_{k'}$ be the clusters that $\mathcal{C}$ induces, i.e., $C_j := \phi_1^{-1}(c_j)$ and let $r = \max_{x \in P} d(x,\phi(x))$ be the largest distance of any point to its assigned center. 
If we have $r > \alpha \cdot \tau$, we return $\tau < \opt$.
For every color $i \in Col$ and a set $Q \subseteq P$ we denote by $Q^i$ the set of points in $Q$ with color $i$, i.e., $Q^i := Q \cap \chi^{-1}(i)$.
Given $\mathcal{C}$ and $\tau$, we create, similar to Section~\ref{sec:outlier_privacy}, a threshold graph $G_{\tau,i} = (V_{\tau,i},E_{\tau,i})$ for every $i \in Col$ by 
\begin{align} 
V_{\tau,i} = &\{v_j \mid 1\leq j \leq k'\} \cup \{w_p \mid p \in P^i\} \cup \{s,t\}\text{ and}\\ 
E_{\tau,i} = &\{(v_j,w_p) \mid p \in C^i_j\} \cup \{(w_p,v_j) \mid p \in P^i \setminus C_j \wedge d(p,C_j) \le 2\tau\}\\
\cup &\{(s,v_j) \mid |C_j \cap \chi^{-1}(i)|-\ell_i > 0\} \cup \{(v_j,t) \mid |C_j \cap \chi^{-1}(i)|-\ell_i < 0\}.
\end{align}
We define the capacity functions $cap_i: E_{\tau,i} \rightarrow \mathbb{R}$ by 
\begin{equation}
    cap(e)=
    \begin{cases}
      \ell_i - |C_j \cap \chi^{-1}(i)|, & \text{if}\ e = (v_j,t) \\
      |C_j \cap \chi^{-1}(i)| - \ell_i, & \text{if}\ e = (s,v_j) \\
			1 & \text{otherwise.}
    \end{cases}
  \end{equation}
The only difference to Section~\ref{sec:outlier_privacy} is that we do not have outliers and create a separate threshold graph for every color.

We use $G_i = (V_i,E_i)$ to refer to $G_{\tau,i}$ as $\tau$ is clear from context.
We now compute integral maximum $s$-$t$-flows $f_i$ on $G_i$.
According to $f_i$ we can reassign points of color $i$ to different clusters.

Analogous to Lemma~\ref{lemma:privat_outlier_reassignment_feasible} we obtain the following lemma.
\begin{lemma}
\label{lemma:colorful_reassignment_feasible}
Let $f_i$ be an integral maximal $s$-$t$-flow on $G_i$. It is possible to reassign $p$ to $C_j$ for all edges $(w_p,v_j)$ with $f((w_p,v_j)) = 1$. 

The resulting solution has a maximum radius of at most $r + 2 \tau$. If $f_i$ saturates all edges of the form $(v_i,t)$, then the solution contains at least $\ell_i$ points of color $i$ in every cluster.
\end{lemma}

If for all $i \in Col$, $f_i$ saturates all edges of the form $(v_j,t)$ in $G_i$, then we reassign points according to Lemma~\ref{lemma:colorful_reassignment_feasible} and return the new clustering.
Note that for each $i \in Col$ $f_i$ would only suggest to reassigns points of color $i$. Therefore the reassignments according to the flows computed for different colors do not interfere with each other.

Otherwise chose an arbitrary color $i \in Col$ such that $f_i$ does not saturate all edges of the form $(v_j,t)$ in $G_i$.
We again look at the residual network $G_{f_i}$ of $f_i$ on $G_i$. We define $V'$ and $k''$ as before, i.e., $V'$ is the set of nodes in $G_{f_i}$ which can not be reached from $s$, and $k''$ is the number of clusters which belong to $V'$. As before, we obtain the following lemma.

\begin{lemma}
\label{lemma:colorful_properties_V'}
Any clustering on $P$ with maximum radius at most $\tau$ that contains at least $\ell_i$ points of color $i$ in every cluster uses fewer than $k''$ clusters to cover all points in $P(V')$.
\end{lemma}

In case we have $\tau \ge \opt$ this implies that the optimal solution covers all points in $P(V')$ with fewer than $k''$ clusters. 

An $\alpha$-approximative solution on the point set $P(V')$ with at most $k''-1$ clusters is therefore $\alpha$-approximative for $P(V')$.

We now use $A$ again to compute a new solution without the lower bounds: Let $\mathcal{C}' =(C',\phi')$ be an $\alpha$-approximate solution for the $k$-center problem on $P(V')$, $L$, $k''-1$. Let $r' = \max_{x \in P(V')} d(x,\phi'(x))$.

 Note that in case $\tau < \opt$, it can happen that no such clustering exists or that we obtain $r' > \alpha \cdot \tau$. We then return $\tau < \opt$. 
Otherwise we replace $C(V')$ by $C'$ in $\mathcal{C}$ and adjust $\phi$ accordingly to obtain $\mathcal{C}_1 = (C_1,\phi_1)$ with $C_1 = (C\setminus C(V')) \cup C'$ and 
\begin{equation}
    \phi_1(p)=
    \begin{cases}
      \phi'(p) & \text{if}\ p \in P(V')\\
      \phi(p) & \text{otherwise.} \\
    \end{cases}
  \end{equation}

\begin{lemma}
In case we did not return $\tau < \opt$, $\mathcal{C}_1$ is a solution for the $k$-center problem $P$, $L$, $k$ and we have $r_1 = \max_{x \in P} d(x,\phi_1(x)) \le \alpha \cdot \tau$.
\end{lemma}

We iterate the previous process with the new clustering $\mathcal{C}_1$ until we either determine $\tau < \opt$ or the reassignment of points according to Lemma~\ref{lemma:colorful_reassignment_feasible} yields a feasible solution. Since each iteration reduces the number of clusters, the process terminates after at most $k$ iterations.
Note that the color $i \in Col$, according to which we define the set $V'$ as the set of nodes in $G_{f_i}$ which can not be reached from $s$, does not have to be the same for every iteration, instead in each iteration the color can be chosen arbitrarily among all colors $i \in Col$ for which $f_i$ does not saturate all edges of the form $(v_j,t)$ in $G_i$.
\end{proof}

\begin{corollary}\label{cor:strongly}
We can compute a $4$-approximation for instances of the strong private $k$-center problem and a $5$-approximation for instances of the strongly private $k$-supplier problem in polynomial time.
\end{corollary}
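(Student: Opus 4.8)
The plan is to use Theorem~\ref{thm:colorful} as a black box and feed it the classical unconstrained algorithms. Theorem~\ref{thm:colorful} turns any $\alpha$-approximation $A$ for the (unconstrained) $k$-center problem into an $(\alpha+2)$-approximation for the strongly private $k$-center problem. So for the first claim I would simply instantiate $A$ with the $2$-approximation for $k$-center of Gonzalez~\cite{G85} (equivalently Hochbaum and Shmoys~\cite{HS86}), obtaining a $(2+2)=4$-approximation; this runs in polynomial time since the reduction behind Theorem~\ref{thm:colorful} is polynomial and invokes $A$ only a polynomial number of times (at most $k$ recomputations per threshold, over $O(|P||L|)$ thresholds).

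For the supplier version, I would first observe that the proof of Lemma~\ref{lemma:colorful_fixed_tau} never uses $P=L$: the construction of the threshold graphs $G_{\tau,i}$, the maximum-flow computation, the definition of $V'$ and $k''$, and the recomputation on $P(V')$ are all insensitive to whether centers may be chosen only from $P$ or from a separate location set $L$, provided the underlying algorithm $A$ is itself an algorithm for $k$-supplier. In particular, the radius bound $r+2\tau$ in Lemma~\ref{lemma:colorful_reassignment_feasible} comes purely from the triangle inequality $d(p,c_j)\le d(p,q)+d(q,c_j)\le 2\tau+r$, where $q$ is an original cluster point (not a candidate center), so this estimate is unaffected. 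Hence the same argument gives an $(\alpha+2)$-approximation for the strongly private $k$-supplier problem from any $\alpha$-approximation for $k$-supplier, and plugging in the $3$-approximation of Hochbaum and Shmoys~\cite{HS86} yields the claimed $5$-approximation.

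I do not expect a genuine obstacle here: once Theorem~\ref{thm:colorful} is available, the corollary is immediate, and the only thing worth spelling out is the (routine) remark that the reduction is oblivious to the distinction between $k$-center and $k$-supplier, so that the appropriate off-the-shelf ratio ($2$ in the center case, $3$ in the supplier case) can be substituted to get $4$ and $5$ respectively.
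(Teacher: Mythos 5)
Your proposal is correct and matches the paper's own proof, which likewise derives the corollary by plugging the known $2$-approximation for $k$-center and $3$-approximation for $k$-supplier into Theorem~\ref{thm:colorful}. Your extra remark that the reduction is oblivious to the $k$-center/$k$-supplier distinction is exactly the (implicit) observation the paper relies on, as stated in its results section, so nothing further is needed.
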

\begin{proof}
Follows from Theorem~\ref{thm:colorful} together with the $2$-approximation for $k$-center and the $3$-approximation for $k$-supplier, both in~\cite{G85}.
\end{proof}

\section{Conclusion and open questions}\label{sec:outlook}
We have studied $k$-center with capacities, fairness and outliers and have coupled these constraints with privacy; in addition, we proposed strongly private $k$-center. An obvious open question is to improve the approximation guarantee of the coupling process; this is in particular interesting when combining more than two constraints as in the private capacitated and fair $k$-center problem. Another straightforward direction would be to study the generalization of privacy to arbitrary lower bounds, where each cluster has its individual lower bound on the number of necessary points to assign to it when opened.
It would also be interesting to study general methods to add other constraints to clustering problems. And of course, extending our methods to other clustering objectives is open, too. Our algorithms rely on the threshold graph; removing it seems difficult at first glance. However, in Appendix~\ref{sec:fl}, we demonstrate how to add privacy to capacitated \emph{facility location}, albeit under a restriction: The method only works if the lower bound $\ell$ and all upper bounds $u(c)$ satisfy $\ell \le u(c)/2$. If this is not true, then it induces a capacity violation (by a factor of at most $2$). We raise the question whether adding privacy to facility location can be done without this condition. The method in Appendix~\ref{sec:fl} does not easily extend to variants with a restricted number of centers; so the next question then would be whether it can be combined with the idea we developed for $k$-center, in order to add privacy for an objective like $k$-median.

\bibliography{references}
\bibliographystyle{plainurl}

\appendix

\section{(Metric) Facility Location}\label{sec:fl}

Let $P = L$, $f \in \mathbb{N}$, $\ell \in \mathbb{N}$, $u \in \mathbb{N}$ with $\ell \le \frac{1}{2}u$ be an instance of the private capacitated facility location problem with uniform upper and uniform lower bounds and uniform facility opening costs. 

Let $\mathcal{C}=(C,\phi)$ be a $\gamma$-approximation for the private facility location problem on $P,L,f,\ell$. We set $k'=|C|$, name the $k'$ facilities $c_1,\ldots,c_{k'}$ and define the partitioning $C_1,\ldots,C_{k'}$ by $C_i = \phi_i^{-1}(c_i)$.

We define an instance where all points are translated to their centers. So we let $P'$ contain $|C_i|$ copies of $c_i$ for each $i \in [k']$.
More precisely, place a point $p_x$ at location $\phi(x)$ for all $x \in P$ and call the resulting set $P'$.
Note that we use $P'$ in order to simplify the analysis and although it is not fully supported by the definition, where $P$ is a \emph{subset} of $X$ we will use the same terminology, when we talk about clusterings on $P'$.

\begin{lemma}\label{lem:transfersolutions}
Let $\mathcal{C'}=(C',\phi')$ be any clustering for $P'$.
Transfer this clustering to $P$ by setting $C''=C'$ and $\phi''(x) = \phi'(p_x)$ for all $x \in P$. Then
\[
\sum_{x \in P} d(x,\phi''(x)) \le \sum_{x \in P} d(x,\phi(x)) + \sum_{x\in P'} d(x,\phi'(x)).
\]
\end{lemma}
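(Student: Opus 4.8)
The plan is to establish the inequality pointwise by the triangle inequality and then sum. First I would fix an arbitrary $x \in P$. By construction $p_x$ is the point of $P'$ placed at location $\phi(x)$, so $d(x, p_x) = d(x, \phi(x))$. The clustering $\mathcal{C}'$ assigns $p_x$ to the center $\phi'(p_x) \in C'$, and the transferred clustering assigns $x$ to the same center: $\phi''(x) = \phi'(p_x)$. Hence by the triangle inequality in $(X,d)$,
\[
d(x, \phi''(x)) = d(x, \phi'(p_x)) \le d(x, p_x) + d(p_x, \phi'(p_x)) = d(x,\phi(x)) + d(p_x, \phi'(p_x)).
\]

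Next I would sum this inequality over all $x \in P$. The left-hand side becomes $\sum_{x \in P} d(x, \phi''(x))$, and the first term on the right becomes $\sum_{x \in P} d(x, \phi(x))$. For the second term, note that $x \mapsto p_x$ is a bijection between $P$ and $P'$ (one copy $p_x$ is created for each $x \in P$), so $\sum_{x \in P} d(p_x, \phi'(p_x)) = \sum_{x' \in P'} d(x', \phi'(x'))$. This yields exactly
\[
\sum_{x \in P} d(x,\phi''(x)) \le \sum_{x \in P} d(x,\phi(x)) + \sum_{x\in P'} d(x,\phi'(x)),
\]
as claimed. I would also remark briefly that $\mathcal{C}''$ is a genuine clustering for $P$ with the same set of open facilities $C'' = C'$, so the facility opening cost is unchanged; this is what makes the lemma useful for bounding the total (connection plus opening) cost later.

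There is essentially no main obstacle here — the statement is a straightforward triangle-inequality bookkeeping lemma. The only point requiring a modicum of care is the identification of $P'$ with a multiset: several points of $P$ may map to the same location under $\phi$, so $P'$ is really a multiset of locations, and one must be careful that the correspondence $x \leftrightarrow p_x$ is a bijection of (multi)sets rather than of the underlying point sets. This is precisely why the authors flagged that $P'$ "is not fully supported by the definition"; once one adopts the convention that each $p_x$ is a distinct token, the summation step is immediate.
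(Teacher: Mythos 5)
Your proof is correct and follows exactly the paper's own argument: a pointwise application of the triangle inequality $d(x,\phi'(p_x)) \le d(x,p_x) + d(p_x,\phi'(p_x))$ with $d(x,p_x)=d(x,\phi(x))$, followed by summation over $x \in P$. Your additional remark about treating $P'$ as a multiset and the bijection $x \mapsto p_x$ is a reasonable clarification of a point the paper also flags informally.
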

\begin{proof}
Let $x \in P$ be any point. Then
\[
d(x,\phi''(x)) = d(x,\phi'(p_x)) \le d(x,p_x) + d(p_x,\phi'(p_x))
\]
by the triangle inequality. Summing this over all $x \in P$, we get that
\[
\sum_{x \in P} d(x,\phi''(x)) \le \sum_{x \in P} d(x,\phi(x)) + \sum_{x \in P'} d(x,\phi'(x)).
\]
\end{proof}

Assume that we have \emph{soft capacities}, i.e., that we can open a center multiple times. Then we compute a solution to the upper bounded facility location problem on $P',L,f,u$ in the following way. 
Firstly, we open a center at every location in $C$. This costs $f \cdot k'$ opening cost. To this center, we assign $|C_i| \mod u$ points that lie at $c_i$. 
After this step, the number of points that are not assigned yet from $C_i$ is a multiple of $u$ (this is true for all $i\in [k])$. We can thus satisfy all their demand by opening at most $n/u$ additional centers. Since any feasible solution opens at least $n/u$ centers, $n/u \le k_opt$. We thus pay additional $f \cdot k_{opt}$ for opening the missing centers, where $k_{opt}$ denotes the number of centers the optimal solution opens. Again, assigning the points costs nothing. 

\begin{corollary}
There is a solution to the upper bounded facility location problem on $P,L,f,u$ with soft capacities which costs at most 
\[
\sum_{x \in P} d(x,\phi(x)) + f \cdot (k' + k_{opt}).
\]
\end{corollary}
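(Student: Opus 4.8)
The plan is to analyze the explicit construction given in the paragraph immediately preceding the statement and then transport it from $P'$ back to $P$ via Lemma~\ref{lem:transfersolutions}. First I would record precisely what that construction produces on $P'$. It opens a facility at every $c_i \in C$ and, for each $i$, opens $(|C_i| - (|C_i| \bmod u))/u$ further copies of $c_i$; this uses the soft-capacity assumption, since a location may be opened several times. The first copy of $c_i$ receives the $|C_i| \bmod u < u$ leftover points and each of the further copies receives exactly $u$ points, so all copies respect the capacity $u$ and every point of $P'$ is assigned to a facility sitting at its own location. Consequently the resulting solution $\mathcal{C}' = (C',\phi')$ on $P'$ has connection cost $\sum_{x \in P'} d(x,\phi'(x)) = 0$, and its opening cost is $f \cdot (k' + m)$ where $m := \sum_{i=1}^{k'} (|C_i| - (|C_i| \bmod u))/u \le \sum_{i=1}^{k'} |C_i|/u = n/u$.

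Next I would bound $m$ by $k_{opt}$. Under soft capacities every opened facility (counted with multiplicity) serves at most $u$ points, so any solution feasible for the upper bounded facility location instance on $P,L,f,u$ must open at least $n/u$ facilities in order to cover all $n$ points; in particular the optimal solution to the original private capacitated facility location problem is such a solution, so $n/u \le k_{opt}$ and hence $m \le k_{opt}$. Therefore the opening cost of $\mathcal{C}'$ is at most $f \cdot (k' + k_{opt})$.

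Finally I would transfer $\mathcal{C}'$ to $P$ exactly as in Lemma~\ref{lem:transfersolutions}: put $C'' = C'$ and $\phi''(x) = \phi'(p_x)$ for all $x \in P$. The opening cost is unchanged because $C'' = C'$, and the lemma gives
\[
\sum_{x \in P} d(x,\phi''(x)) \le \sum_{x \in P} d(x,\phi(x)) + \sum_{x \in P'} d(x,\phi'(x)) = \sum_{x \in P} d(x,\phi(x)),
\]
using that the connection cost of $\mathcal{C}'$ on $P'$ vanishes. Summing the connection and opening contributions yields a solution of total cost at most $\sum_{x \in P} d(x,\phi(x)) + f \cdot (k' + k_{opt})$, which is the claim.

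I do not expect a genuine obstacle here; the argument is essentially bookkeeping. The two points that need a little care are (a) verifying that the $P'$-solution actually respects the capacity $u$ on every opened copy — this is why one assigns precisely $|C_i| \bmod u$ points to the first copy of $c_i$ and exactly $u$ to each subsequent copy — so that it is a feasible solution and not merely a cheap assignment, and (b) being explicit that it is the soft-capacity assumption that permits opening several copies of the same location $c_i$. If one wanted a hard-capacity version of the statement this step would fail, which is consistent with the factor-$2$ capacity violation discussed in the paper's conclusion.
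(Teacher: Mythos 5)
Your proposal is correct and follows essentially the same route as the paper: open one facility per $c_i$ for the $|C_i| \bmod u$ leftover points, open at most $n/u \le k_{opt}$ extra copies for the full batches of $u$ co-located points, and transfer back to $P$ with Lemma~\ref{lem:transfersolutions} using that the connection cost on $P'$ is zero. Your version merely makes the feasibility bookkeeping (exact load per copy and the role of soft capacities) more explicit than the paper does.
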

\begin{proof}
Follows from the above discussion and Lemma~\ref{lem:transfersolutions}.
\end{proof}

We want to reconcile this solution with the lower bound solution.
The facilities from the second step are valid because they contain $u \ge \ell$ points. The facilities from the first step might be invalid. We will reassign some of the points to different centers at the same location. This costs nothing.

There are two cases. If there is only one center at a location, then we know that it got at least $\ell$ points because there are at least $\ell$ points at the same location in $P'$. Otherwise, we have at most one center at the location that is not full, and at least one center that is full. We can thus reassign up to $u/2$ points, ensuring that we get two facilities which have at least $u/2\ge \ell$ points; the rest of the facilities at this location will remain full.
Thus, at no additional cost, we get a solution that respects both upper and lower bounds.

\begin{corollary}
There is a solution to the private capacitated facility location problem on $P,L,f,u,\ell$ with $2\ell \le u$ and with soft capacities which costs at most 
\[
\sum_{x \in P} d(x,\phi(x)) + f \cdot (k' + k_{opt}).
\]
\end{corollary}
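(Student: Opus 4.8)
The plan is to take the solution to the upper bounded facility location problem on $P,L,f,u$ with soft capacities produced by the preceding corollary — whose cost is already at most $\sum_{x \in P} d(x,\phi(x)) + f\cdot(k'+k_{opt})$ — and to repair it into a solution that additionally respects the lower bound $\ell$ \emph{without changing its cost}. The key structural observation I would exploit is that, because of soft capacities, every facility opened by that construction sits at one of the locations $c_1,\dots,c_{k'}\in C$ (the first-step facilities are the $c_i$ themselves, and the extra $n/u$ second-step facilities are additional copies opened at the $c_i$ to serve the points of $C_i$ that lie there). Hence any reassignment of points between facilities at a common location changes neither the connection cost (distances stay $0$) nor the opening cost, so the cost bound is automatically inherited.

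Concretely, I would first classify the open facilities. Each second-step facility carries exactly $u \ge 2\ell > \ell$ points and is already valid; so is any first-step facility that happens to carry at least $\ell$ points, and a first-step facility carrying $0$ points may simply be closed (this only decreases the cost). It remains to fix the under-filled first-step facility at a location $c_i$, of which there is at most one. Case one: $c_i$ hosts only this facility; then it must carry all $|C_i|$ points located at $c_i$, and since $\mathcal{C}$ is a feasible private solution we have $|C_i|\ge\ell$, so it is valid after all. Case two: $c_i$ hosts at least one further (second-step) facility, which is full with $u$ points; I would then move points from a full facility to the under-filled one, moving at most $u/2$ of them, so that both end up with at least $u/2 \ge \ell$ points while every remaining facility at $c_i$ stays full. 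No upper bound is violated, since the recipient ends with at most $u/2 \le u$ points. Performing this at every location yields a solution in which every open facility has between $\ell$ and $u$ points, i.e. a feasible solution to the private capacitated facility location problem with soft capacities, of the same cost.

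The only genuinely delicate step is the bookkeeping in case two: one must verify \emph{simultaneously} that the donor facility stays at or above $\ell$, that the recipient reaches $\ell$, and that neither overshoots $u$ — and this is precisely where the hypothesis $2\ell \le u$ enters (it guarantees $u/2 \ge \ell$, giving enough slack to split the $u$ donor points safely). I expect this case analysis to be the main obstacle, though it is elementary once one fixes the amount of points to transfer; everything else is a direct consequence of the preceding corollary together with the zero-cost, within-location nature of the repair.
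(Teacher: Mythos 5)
Your proposal is correct and takes essentially the same route as the paper: it repairs the soft-capacity solution of the preceding corollary by zero-cost reassignments among facilities at the same location, with the same two cases (a location hosting a single facility inherits $|C_i|\ge\ell$ from the feasibility of the private solution $\mathcal{C}$; otherwise shift up to $u/2$ points from a full co-located facility, which is where $2\ell\le u$ is used). The extra bookkeeping you flag (closing zero-point facilities, checking the recipient does not exceed $u$) is harmless and only tightens the same argument.
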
 

Now assume we do not have soft capacities. Then we consider the solution computed by the above soft capacity algorithm; it partitions the points into at most $2k'$ clusters $U_1,\ldots,U_{2k}$. For each $U_i$, we pick the best center in $U_i$ as its center. Say that the points in $U_i$ were previously assigned to center $c \notin U_i$ and are now assigned to $c'$. Furthermore, let $c''$ be the point in $U_i$ that is closest to $c$. Then we have 
\[
\sum_{x \in U_i} d(x,c') \le \sum_{x \in U_i} d(x,c''),
\]
and for each point $x \in U_i$,
\[
d(x,c'') \le d(x,c) + d(c'',c) \le 2 d(x,c)
\]
because $c''$ is the closest point to $c$ in $U_i$. This implies
\[
\sum_{x \in U_i} d(x,c') \le 2 \sum_{x \in U_i} d(x,c),
\]
so the assignment cost goes up by a factor of at most two.

\begin{corollary}
There is a solution to the private capacitated facility location problem on $P,L,f,u,\ell$ with $2\ell \le u$ which costs at most 
\[
2 \sum_{x \in P} d(x,\phi(x)) + f \cdot (k' + k_{opt}) \le 2 \gamma OPT_L + OPT \le (2 \gamma + 1) OPT,
\]
where $OPT_L$ is the cost of an optimal solution for the private facility location problem on $P,L,f,\ell$ and $OPT$ is the cost of an optimal solution for the private capacitated facility location problem on $P,L,f,u,\ell$.
\end{corollary}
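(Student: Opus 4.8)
The plan is to glue the soft-capacity corollary proved just above onto the de-softening step described in the preceding paragraph, and then to rephrase the resulting cost bound in terms of $OPT_L$ and $OPT$. The first inequality is the content of the construction; the second and third are bookkeeping with the optimal values.

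I would start from the solution guaranteed by the previous corollary: since $2\ell \le u$, it is a clustering on $P,L,f,u,\ell$ respecting both the lower and the upper bound, using at most $k'+k_{opt}$ open centers and costing at most $\sum_{x\in P} d(x,\phi(x)) + f\cdot(k'+k_{opt})$, but only under soft capacities. Call its clusters $U_1,\dots,U_m$ with $m\le k'+k_{opt}$. To remove the soft capacities, for every $i$ I replace the center serving $U_i$ by the point of $U_i$ minimizing $\sum_{x\in U_i} d(x,\cdot)$, which is a legal location because $P=L$. Since the point sets $U_i$ are left intact, every one of them still holds between $\ell$ and $u$ points, so both constraints continue to hold, and the number of opened facilities does not increase, so the opening cost stays $f\cdot(k'+k_{opt})$. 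For the assignment cost, if the points of $U_i$ were previously served by $c\notin U_i$ and are now served by $c'\in U_i$, and $c''$ is the point of $U_i$ closest to $c$, then $\sum_{x\in U_i} d(x,c')\le\sum_{x\in U_i} d(x,c'')$ by the choice of $c'$, while $d(x,c'')\le d(x,c)+d(c,c'')\le 2d(x,c)$ for each $x\in U_i$ since $c''$ is the nearest point of $U_i$ to $c$; summing over $U_i$ shows its assignment cost at most doubles (and if $c\in U_i$ the cost only decreases). This gives the first inequality $2\sum_{x\in P} d(x,\phi(x)) + f\cdot(k'+k_{opt})$.

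For the second inequality I would regroup the bound as $\bigl(2\sum_{x\in P} d(x,\phi(x)) + f k'\bigr) + f k_{opt}$. Since $\mathcal{C}=(C,\phi)$ is a $\gamma$-approximation for the private facility location problem on $P,L,f,\ell$, its full cost $\sum_{x\in P} d(x,\phi(x)) + f k'$ is at most $\gamma\,OPT_L$, so the first bracket is at most $2\bigl(\sum_{x\in P} d(x,\phi(x)) + f k'\bigr)\le 2\gamma\,OPT_L$; and $f k_{opt}$ is precisely the opening cost of an optimal solution to the private capacitated problem, hence at most $OPT$. Adding yields $2\gamma\,OPT_L + OPT$. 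For the last inequality, observe that every feasible solution to the private capacitated facility location problem on $P,L,f,u,\ell$ is also feasible for the private facility location problem on $P,L,f,\ell$ at the same cost, so $OPT_L\le OPT$ and therefore $2\gamma\,OPT_L + OPT\le(2\gamma+1)\,OPT$.

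I expect no real obstacle here: the two points that need care are (i) that the de-softening must keep each $U_i$ intact rather than merging clusters, since a merge could violate the \emph{upper} bound while repairing the lower one, and (ii) the accounting trick of absorbing $f k'$ into $2\cdot\cost(\mathcal{C})$ rather than bounding it separately, which is what keeps the final factor $2\gamma+1$ and not $3\gamma+1$.
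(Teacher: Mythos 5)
Your proposal is correct and follows essentially the same route as the paper: take the soft-capacity solution, re-center each cluster $U_i$ at its best interior point (legal since $P=L$, clusters stay intact so both bounds and the opening cost $f\cdot(k'+k_{opt})$ are preserved), double the assignment cost via the nearest-point $c''$ argument, and then bound $2\sum_{x}d(x,\phi(x))+fk'\le 2\gamma\,OPT_L$, $fk_{opt}\le OPT$, and $OPT_L\le OPT$. The only difference is that you spell out the final bookkeeping, which the paper leaves implicit in the corollary statement.
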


\end{document}